 \renewenvironment{abstract}{\small \begin{center}{\bfseries \sffamily\abstractname\vspace{-.5em}\vspace{\z@}}\end{center}\quotation}
 {\endquotation}
\theoremstyle{plain}
\newtheorem{theorem}{Theorem}[section]
\newtheorem{lemma}[theorem]{Lemma}
\newtheorem{corollary}[theorem]{Corollary}
\theoremstyle{definition}
\newtheorem{remark}[theorem]{Remark}
\newtheorem{definition}[theorem]{Definition}
\newtheorem{invariant}[theorem]{Invariant}
\DeclareMathOperator*{\supp}{supp}
\DeclareMathOperator{\dom}{dom}
\DeclareMathOperator{\depth}{depth}
\newcommand{\eqcomma}{\enspace ,}
\newcommand{\eqperiod}{\enspace .}
\newcommand{\set}[1]{\{#1\}}
\newcommand\restrict[2]{{
  #1 \lceil_{#2}
  }}
\newcommand{\calD}{\mathcal{D}}
\newcommand{\calP}{\mathcal{P}}
\newcommand{\calQ}{\mathcal{Q}}
\newcommand{\calR}{\mathcal{R}}
\newcommand{\calT}{\mathcal{T}}
\newcommand{\calU}{\mathcal{U}}
\newcommand{\N}{\mathbb{N}}
\DeclareMathOperator{\closure}{closure}
\newcommand{\row}{\operatorname{row}}
\newcommand{\col}{\operatorname{col}}
\newcommand{\cl}{K}
\DeclareMathOperator{\sign}{sign}
\newcommand{\centers}{C}
\newcommand{\chosen}[1][\sigma]{\centers_{{#1}}}
\newcommand{\spaceRho}{\ensuremath R}
\newcommand{\spaceRhoReg}{\ensuremath \spaceRho^{\mathrm{reg}}}
\DeclareMathOperator{\tseitin}{Tseitin}
\crefname{prop}{Property}{Properties}
\Crefname{prop}{Property}{Properties}
\crefname{inv}{Invariant}{Invariants}
\Crefname{inv}{Invariant}{Invariants}
\crefname{invariants}{Invariants}{Invariants}
\Crefname{invariants}{Invariants}{Invariants}
\crefname{step}{Step}{Steps}
\Crefname{step}{Step}{Steps}
\crefname{case}{Case}{Cases}
\Crefname{case}{Case}{Cases}
\newcommand{\factorN}{D}
\title{On bounded depth proofs for Tseitin\\formulas on the grid; revisited\thanks{Supported by the Approximability and Proof Complexity project
    funded by the Knut and Alice Wallenberg Foundation. Kilian Risse
    was supported by the Swiss National Science Foundation through
    project \mbox{200021-184656} “Randomness in Problem Instances and
    Randomized Algorithms” and the Postdoc.Mobility fellowship
    \mbox{P500-2 235298}. Most of this work was done while the second
    author was affiliated with KTH Royal Institute of Technology and
    EPFL. This is the full-length version of a paper with the same
    title that appeared in the \emph{Proceedings of the 63rd Annual
      IEEE Symposium on Foundations of Computer Science (FOCS '22)}.}}
\author[1]{Johan H{\aa}stad}
\author[2]{Kilian Risse}
\affil[1]{KTH Royal Institute of Technology}
\affil[2]{Lund University}
\date{\today}
\begin{document}
\maketitle

\begin{abstract}
  We study Frege proofs using depth-$d$ Boolean formulas for the
  Tseitin contradiction on $n \times n$ grids.  We prove that if each
  line in the proof is of size $M$ then the number of lines is
  exponential in $n/(\log M)^{O(d)}$. This strengthens a recent result
  of Pitassi et al.\,\cite{PRT21}. The key technical step is a
  multi-switching lemma extending the switching lemma of H{\aa}stad
  \cite{jhtseitin} for a space of restrictions related to the Tseitin
  contradiction.
  
  The strengthened lemma also allows us to improve the lower bound for
  standard proof size of bounded depth Frege refutations from
  exponential in $\tilde \Omega (n^{1/59d})$ to exponential in
  $\tilde \Omega (n^{1/d})$. This strengthens the bounds given in the
  preliminary version of this paper \cite{jhkr}.
\end{abstract}
 
\pagenumbering{roman}
\thispagestyle{empty}
\newpage

\tableofcontents
\newpage

\pagenumbering{arabic}
\setcounter{page}{1}

\section{Introduction}

Mathematicians like proofs, formal statements where each line follows
by simple reasoning rules from previously derived lines.  Each line
derived in this manner, assuming that the reasoning steps are sound,
can give us some insight into the initial assumptions of the proof. A
particularly interesting consequence is contradiction. Deriving an
obviously false statement allows us to conclude that the initial
assumptions, also called axioms, are contradictory. We continue the
study of Frege proofs of contradiction where each line in the proof is
a Boolean formula of depth $d$. This subject has a long tradition, so
let us start with a very brief history.

A very basic proof system is resolution: each line of such a proof
simply consists of a disjunction of literals. The derivation rules of
resolution are also easy to understand and simple to implement, but the
proof system nevertheless gives rise to reasonably short proofs for
some formulas.
It is far from easy to give lower bounds for the size of proofs in
resolution but it has been studied for a long time and by now many
strong bounds are known.  An early paper by Tseitin \cite{tseitin}
defined an important class of contradictions based on graphs that is
central to this and many previous papers.  For each edge there is a
variable and the requirement is that the parity of the variables
incident to any given node sum to a particular bit which is called the
charge of that node. If the sum of the charges is one modulo two
this is a contradiction. For a subsystem of resolution, called regular
resolution, Tseitin proved exponential lower bounds on refutations of
these formulas. After this initial lower bound it took almost another
two decades before the first strong lower bound for general resolution
was obtained by Haken \cite{haken}, whose lower bound applied to the
pigeonhole principle (PHP). Many other resolution lower bounds
followed, but as we are not so interested in resolution and rather
intend to study the more powerful proof system with formulas of
larger, though still bounded, depth $d$ on each line, let us turn to
such proof systems.

The study of proofs with lines limited to depth $d$ dates back several
decades. A pioneering result was obtained by Ajtai
\cite{Ajtai94Complexity} who showed that the PHP cannot be proved in
polynomial size for any constant depth $d$. Developments continued in
the 1990s and polynomial size proof were ruled out for values of $d$
up to $O(\log \log n)$ for both the PHP
\cite{PBI93ExponentialLowerBounds,KPW95FregePHPExp} as well as the
Tseitin contradiction defined over complete
\cite{UF96SimplifiedLowerBounds} and expander graphs
\cite{BenSasson02HardExamples}.

These developments followed previous work where the
computational power of the class of circuits\footnote{When the depth
  is small, there is no major difference between circuits and formulas
  so the reader should feel free to ignore this difference.}  of depth
$d$ was studied \cite{sipser, fss, yao, jhswitch, razborov88,
  smolensky}.  It is not surprising that it is easier to understand
the computational power of a single circuit rather than to reason
about a sequence of formulas giving a proof.  This manifested itself
in that while the highest value of $d$ for which strong bounds were
known for size of proofs remained at $O(\log \log n)$, the results for
circuit size extended to almost logarithmic depth.

This gap was (essentially) closed in two steps.  First Pitassi et
al.\,\cite{pitassi16frege} proved super-polynomial lower bounds for $d$
up to $o(\sqrt {\log n})$ and then H{\aa}stad \cite{jhtseitin}
extended this to depth $\Theta (\frac {\log n} {\log \log n})$ which,
up to constants, matches the result for circuits.

The key technique used in most of the described results is the use of
restrictions.  These set most of the variables to constants which
simplifies the circuit or formulas studied.  If done carefully one can
at the same time preserve the contradiction refuted or the function
computed.  Of course one cannot exactly preserve the contradiction and
to be more precise a contradiction with parameter $n$ before the
restriction turns into a contradiction of the same type but with a
smaller parameter, $n/T$, after the restriction.

The simplification under a restriction usually takes place in the form
of a switching lemma.  This makes it possible to convert depth $d$
formulas to formulas of depth $d-1$.  A sequence of restrictions is
applied to reduce the depth to (essentially) zero making the circuit
or formula straightforward to analyze.  The balance to be struck is to
find a set of restrictions that leave a large resulting contradiction
but at the same time allows a switching lemma to be proved with good
parameters.

In proof complexity the most commonly studied measure is the total
size of a proof.  There are two components to this size, the number of
reasoning steps needed and the size of each line of the proof.  In
some cases, such as resolution, each line is automatically bounded in
size and hence any lower bound for proof size is closely related to
the number of proof steps.  In some other situation the line sizes
may grow and an interesting question is whether this can be avoided.

This line of investigation for Frege proofs with bounded depth
formulas was recently initiated by Pitassi, Ramakrishnan, and
Tan~\cite{PRT21}. They consider the Tseitin contradiction defined over
the grid of size $n \times n$, a setting where strong total size lower
bounds for Frege refutations of bounded depth had previously been
given by Håstad \cite{jhtseitin}. If each line of the refutation is
limited to size $M$ and depth $d$, then Pitassi et al.~\cite{PRT21}
showed that the Frege proof must consist of at least
$\exp(n/ 2^{O ( d \sqrt {\log M})})$ many lines. For most interesting
values of $M$ this greatly improves the bounds implied by the results
for total proof size. In particular if $M$ is a polynomial the lower
bounds are of the form $\exp(n^{1-o(1)})$, as long as
$d=o (\sqrt {\log n})$, in contrast to the total size lower bounds of
the form $\exp(n^{\Omega(1/d)})$. Pitassi et al.~\cite{PRT21} rely on
the restrictions introduced by Håstad~\cite{jhtseitin} but analyze
them using the methods of Pitassi et al.~\cite{pitassi16frege}.

We study the same Tseitin contradiction on the grid and improve the
lower bounds to $\exp(n/ (\log M)^{O(d)})$, a bound conjectured by
Pitassi et al.~\cite{PRT21}. Note that if the size of each line is
bounded by~$M = O\bigl(n^{\polylog(n)}\bigr)$ and the depth
is~$d = o\bigl(\frac{\log n}{\log \log n}\bigr)$, then the length
lower bound is of the form~$\exp\bigl(n^{1-o(1)}\bigr)$. For this
setting of parameters the bound is essentially optimal since there is
a resolution upper bound of size $2^{O(n)}$.

For other settings of parameters we cannot match the lower bound. We
do believe, though, that the bound obtained is tight for a wide range
of parameters. While we cannot match the lower bounds with actual
proofs we can at least represent the intermediate results of a natural
proof by formulas of the appropriate size. We discuss this in more
detail below.

\subsection{Overview of proof techniques}

The structure of the proof of our main result follows the approach of
\cite{PRT21} but relies on proving much sharper variants of the
switching lemma.

In a standard application of a switching lemma to proof complexity one
picks a restriction and demands that switching happens to all depth
two formulas in the entire proof.  Each formula switches successfully
with high probability and by an application of a union bound it is
possible to find a restriction to get them all to switch
simultaneously.

The key idea of \cite{PRT21} is that one need not consider all
formulas in the proof at the same time.  Rather one can focus on the
sub-formulas of a given line.  It is sufficient to establish that
these admit what is called an $\ell$-common partial decision tree of
small depth.  This is a decision tree with the property that at each
leaf, each of the formulas can be described by a decision tree of
depth $\ell$.  It turns out that this is enough to analyze the proof
and establish that a short proof cannot derive contradiction. The key
property is that it is sufficient to only look at the constant number
of formulas involved in each derivation step and analyze each such
step separately.

The possibility to compute a set of formulas by an $\ell$-common
partial decision tree after having been hit by a restriction is
exactly what is analyzed by what has become known as a
``multi-switching lemma'' as introduced by \cite{jhmultiswitch,imp}.
This concept was introduced in order to analyze the correlation of
small circuits of bounded depth with parity but turns out to also be
very useful in the current context.

Even though there is no general method, it seems like
when it is possible to prove a standard switching
lemma there is good hope to also prove
a multi-switching lemma with similar parameters.  This happens when
going from \cite{jhswitch} to \cite{jhmultiswitch}
and when going from \cite{pitassi16frege} to \cite{PRT21}.
We follow the same approach here and this paper
very much builds on \cite{jhtseitin}.
We need a slight modification of the space of
restrictions and changes to some steps of the proof, but a large fraction
of the proof remains untouched.  Let us briefly touch on the necessary 
changes.

The switching lemma of Håstad \cite{jhtseitin} has a failure
probability to not switch to a decision tree of depth $s$ of the form
$(As)^{\Omega(s)}$ where $A$ depends on other parameters.  As a first
step one needs to eliminate the factor $s$ in the base of the
exponent.  This triggers the above mentioned change in the space of
restrictions.  This change enables us to prove a standard switching
lemma with stronger parameters and, as a warm-up, we give this proof
in the current paper.  This results in an improvement of the lower
bound for total proof size from $\exp( \tilde \Omega (n^{1/58d}))$ to
$\exp( \tilde \Omega (n^{1/d}))$. We believe that this lower bound is
tight up to poly-logarithmic factors in the exponent.

The high level idea of the proof of the multi-switching lemma is that
for each of the formulas analyzed we try to construct a decision tree
of depth $\ell$.  If this fails then we take the long branch in the
resulting decision tree and instead query these variables in the
common decision tree.  

\subsection{Constructing small proofs}

Let us finally comment on a possible upper bound; how to construct
efficient refutations.  If we are allowed to reason with linear
equations modulo two then the Tseitin contradiction has efficient 
refutations.
In particular on the grid we can sum all equations in a single
column giving an equation containing $O(n)$ variables that
must be satisfied.  Adding the corresponding equation for the
adjacent column maintains an equation of the same size and 
we can keep adding equations from adjacent columns until we have
covered the entire grid.  We derive a contradiction and
we never use an equation containing more than $O(n)$
variables.

If we consider resolution then it is possible to represent a
parity of size $m$ as a set of clauses.  Indeed, 
looking at the equation $\sum_{i=1}^m x_i=0$ we can 
replace this by the $2^{m-1}$ clauses of full width where
an odd number of variables appear in negative form.
Now replace each parity in the above proof by its corresponding
clauses.  It is not difficult to check that Gaussian elimination
can be simulated by resolution.  Given linear equation 
$L_1=b_1$ and $L_2=b_2$ with $m_1$, and $m_2$ variables
respectively, and both containing the variable $x$
we want to derive all clauses representing $L_1 \oplus L_2 = b_1 \oplus b_2$.
We have $2^{m_1-1}$ clauses representing the first linear
equation and the $2^{m_2-1}$ clauses representing the second linear
equation.  Now we can take each pair of clauses and resolve
over $x$ and this produces a good set of clauses. If $L_1$
and $L_2$ do not have any other common variables
we are done.  If they do contain more common variables then
additional resolution steps are needed but these are not difficult to
find and we leave it to the reader to
figure out this detail.  We conclude that Tseitin 
on the grid allows resolution proofs of length $2^{O(n)}$.

Let us consider proofs that contain formulas of depth $d$
and let us see how to represent a parity.  Given $\sum_{i=1}^mx_i=0$
we can divide the variables in to 
groups of size $(\log M)^{d-1}$ and
write down formulas of depth $d$ and size $M$ that
represent the parity and the negation of the parity of each group.
Assume that the output gate of each of these formulas is an or.
We now use the above clause representation of the parity of
the groups and get a set of $2^{m/(\log M)^{d-1}}$ formulas
of size $mM/(\log M)^{d-1}$ that represent the linear
equations  This means that we can represent each line
in the parity proof by about $2^{n/(\log M)^{d-1}}$ lines
of size about $M$.  We do not know how to syntactically translate
a Gaussian elimination step to some proof steps in this representation and thus
we do not actually get a proof, only a representation of
the partial results.

\subsection{Organization}

Let us outline the contents of this paper.  We start in
\cref{sec:prelim} with some preliminaries. In \cref{sec:restrictions}
we define the set of restrictions used in the current paper which are
almost the same as in \cite{jhtseitin}. Next we show how to derive our
two main theorems assuming the new switching lemmas in
\cref{sec:main}. In \cref{sec:singleswitch-overview} we provide some
further preliminaries and explain the proof idea of the standard
switching lemma. The full proof of the standard switching lemma is
given in \cref{sec:singleswitch} and the extension to a
multi-switching lemma is presented in \cref{sec:multi-switch}. We end
with some conclusions in \cref{sec:conclusion}.

\section{Preliminaries}\label{sec:prelim}

Logarithms are denoted by $\log$ and are always with respect to the
base $2$. For integers $n \ge 1$ we introduce the shorthand
$[n] = \set{1, \ldots, n}$ and sometimes identify singletons $\set{u}$
with the element $u$. We identify \emph{false} (\emph{true}) with $0$
(with $1$) and let the binary \emph{or} and \emph{and} connective be
denoted by~$\lor$ and~$\land$ while the unary \emph{negation}
connective is denoted by~$\lnot$.

Since Frege systems over the basis $\lor$, $\land$ and $\lnot$ can
polynomially simulate each other \cite{cook79efficiency} it is not
essential what Frege system we use. We choose to work with
Schoenfield's system as previous work
has~\cite{UF96SimplifiedLowerBounds,pitassi16frege, jhtseitin, PRT21}.

\subsection{The Frege Proof System}
\label{sec:frege}

Schoenfield's Frege system works over the basis $\lor$ and $\lnot$. We
simulate a conjunction $A \land B$ by treating it as an abbreviation
for the formula $\lnot(\lnot A \lor \lnot B)$.

If $A$ is a formula over variables $p_1, \ldots, p_m$, and $\sigma$
maps the variables $p_1, \ldots, p_m$ to formulas $B_1, \ldots, B_m$,
then $\sigma(A)$ is the formula obtained from $A$ by replacing the
variable $p_i$ with $B_i = \sigma(p_i)$ for all $i$.
A \emph{rule} is a sequence of formulas written as
$A_1, \ldots, A_{k} \vdash A_0$.
If every truth assignment satisfying all of $A_1, \ldots, A_{k}$ also
satisfies $A_0$, then the rule is \emph{sound}.
A formula $C_0$ is inferred from $C_1, \ldots, C_k$ by the rule
$A_1, \ldots, A_k \vdash A_0$ if there is a function $\sigma$ mapping
the variables $p_1, \ldots, p_m$, over which $A_0, \ldots, A_k$ are
defined, to formulas $B_1, \ldots, B_m$ such that $C_i = \sigma(A_i)$
for all $i$.

The Frege system we consider consists of the rules
\begin{align*}
  &\vdash p \lor \lnot p &&\text{Excluded Middle,}\\
  p &\vdash q \lor p &&\text{Expansion rule,}\\
  p \lor p &\vdash p &&\text{Contraction rule,}\\
  p \lor (q \lor r) &\vdash (p \lor q) \lor r
  && \text{Association rule,}\\
  p \lor q, \lnot p \lor r &\vdash q \lor r
  && \text{Cut rule.}
\end{align*}

A \emph{Frege proof} of a formula $B$ from a formula
$A = C_1 \land \ldots \land C_m$ is a sequence of formulas
$F_1, F_2, \ldots, F_\ell$ such that $F_\ell = B$ and every formula
$F_i$ in the sequence is either one of $C_1, \ldots, C_m$ or inferred
from formulas earlier in the sequence by one of the above rules. Since
the above system is sound and complete a formula $B$ has a proof from
a formula $A$ if and only if $B$ is implied by $A$. A \emph{Frege
  refutation} of a formula $A$ is a Frege proof of $\bot$ constant
false.

The \emph{size of a formula} is the number of connectives in it and
the \emph{depth} of a formula $A$ is the maximum number of
alternations of $\lor$ and $\neg$ on any root-to-leaf path when $A$ is
viewed as a tree. The \emph{size of a Frege proof} is the sum of the
sizes of all formulas in the proof and the \emph{depth of a proof} is
the maximum depth of any formula in it.

\subsection{The Grid}
Throughout the paper we work over graphs~$G_n = (V,E)$ with~$n^2$
nodes which we call \emph{the grid}. However, in order to avoid
problems at the boundary, we in fact work over the \emph{2-dimensional
  torus}: each node $(i,j) \in V$ is indexed by two integers
$i,j \in [n]$ and an edge $\set{u,v}$ is in $E$ if and only if it
connects two adjacent nodes, that is, if one of the coordinates of $u$
and $v$ are identical and the other differs by 1 modulo $n$.

For a set~$U \subseteq V$ we say that a node~$v$ is at distance~$d$
from~$U$ if there is a node~$u \in U$ such that the shortest path
between~$u$ and~$v$ is of length~$d$.

\subsection{Tseitin Formulas}
The Tseitin formula $\tseitin(G, \alpha)$ defined for a graph $G$ and
a vector $\alpha \in \set{0,1}^{V(G)}$ claims that there is a
$\set{0,1}$-labeling of the edges of $G$ such that the number of
$1$-labeled edges incident to each node~$v$ is equal to the
\emph{charge} $\alpha_v$ modulo $2$. This is formalized with a Boolean
variable $x_e$ per edge $e\in E(G)$ and encoding the linear
constraints
\begin{align}
\sum_{e \ni v } x_e = \alpha_v \mod 2
\end{align}
for each node $v \in V(G)$ as a CNF formula. The main case we consider
is when $\alpha_v = 1$ for all nodes~$v$. Let us denote this formula
by $\tseitin(G)$. We use more general charges in intermediate steps
and hence the following lemma from \cite{jhtseitin} is useful. In
order to be self-contained we provide a proof in
\cref{sec:omitted-proofs}.

\begin{restatable}{lemma}{evenok}
  \label{lemma:evenok}
  Consider the Tseitin formula $\tseitin(G_n, \alpha)$ defined over
  the $n \times n$ grid. If $\sum_v \alpha_v$ is even, then
  $\tseitin(G, \alpha)$ is satisfiable and has $2^{r_n}$ solutions for
  a positive integer $r_n$ that only depends on $n$ and not on
  $\alpha$.
\end{restatable}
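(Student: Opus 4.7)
The plan is to argue satisfiability and then use linear algebra over $\F_2$ to count solutions, observing that the count depends only on the graph.

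First I would view the parity constraints as a linear system $Ax = \alpha$ over $\F_2$, where $A \in \F_2^{V \times E}$ is the vertex-edge incidence matrix of the torus $G_n$. Every column of $A$ has exactly two ones, so $\mathbf{1}^\top A = 0$, which forces $\sum_v \alpha_v \equiv 0 \pmod 2$ as a necessary condition. For the converse I would show this is also sufficient: since $G_n$ is connected, one can pair up the vertices with $\alpha_v = 1$ (there are an even number of them), fix for each pair a path in $G_n$ connecting them, and let $x$ be the $\F_2$-sum of the indicator vectors of these paths. At every internal vertex of each path the two incident path-edges contribute $0$ to the parity, while at the two endpoints the contribution is $1$, so $Ax = \alpha$ as required.

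Next I would count solutions. Given that the system is consistent, the solution set is an affine subspace $x_0 + \ker(A)$ of $\F_2^E$, so its cardinality equals $|\ker(A)| = 2^{\dim \ker(A)}$, which depends only on $G_n$ and not on $\alpha$. The kernel of $A$ is exactly the cycle space of $G_n$ over $\F_2$, whose dimension for a connected graph is $|E| - |V| + 1$. Plugging in the torus values $|V| = n^2$ and $|E| = 2n^2$ (each vertex has degree four) gives the concrete value $r_n = n^2 + 1$, which establishes the lemma with the claimed $\alpha$-independent exponent.

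The only step that requires a little care is the sufficiency direction of satisfiability; everything else is essentially linear algebra. I expect the mild obstacle there to be just making the path-pairing argument clean, but since $G_n$ is connected this is routine. If one prefers an even shorter route, one can instead take a spanning tree $T$, set non-tree edges to $0$, and propagate from the leaves upward: this determines a unique value for every tree edge, and the only constraint left (at the root) reads $\sum_v \alpha_v \equiv 0 \pmod 2$, which holds by hypothesis. Either approach yields the result.
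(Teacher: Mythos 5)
Your proposal is correct, and for the counting part you and the paper use the same linear-algebra observation: once consistency holds, the solution set is a coset of the kernel of the incidence matrix, so its size $2^{r_n}$ is independent of $\alpha$ (your explicit computation $r_n = |E| - |V| + 1 = n^2 + 1$ makes this concrete, while the paper just cites the general fact). For satisfiability, however, you take a slightly different route. The paper argues indirectly: take an assignment satisfying the maximum number of vertex constraints, observe that the number of violated constraints is always even, and if two vertices $u,v$ are violated, flip the assignment along a $u$--$v$ path to reduce the violation count, a contradiction. You instead construct a solution directly, either by pairing up the odd-charge vertices and summing indicator vectors of connecting paths, or via a spanning tree with leaf-to-root propagation. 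Both proofs exploit exactly the same structural fact (flipping along a path changes parity only at the endpoints, and $G_n$ is connected), so the difference is one of packaging: the extremal argument in the paper is compact and avoids having to exhibit a pairing or a spanning tree, whereas your constructive version makes the witness explicit and extends more readily to algorithmic settings. Either is a complete and acceptable proof.
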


As a converse to the above lemma, if $\sum_v \alpha_v$ is odd, then by
summing all equations it is easy to see that such a system is
contradictory. In particular the Tseitin formulas with $\alpha_v = 1$
for all $v$ are contradictions for graphs with an odd number of
nodes. We note that all Tseitin formulas $\tseitin(G_n, \alpha)$ over
the grid graph can be written as a 4-CNF formula with 8 clauses of
width 4 for each node.

\subsection{Local Consistency of Assignments}
\label{sec:grid}

We are interested in solutions to subsystems of the Tseitin
formula~$\tseitin(G_n)$. From \cref{lemma:evenok} it follows that if
we drop the constraints of a single node, then we obtain a consistent
system of linear equations with many solutions. Denote by $X$ the
variables that $\tseitin(G_n)$ is defined over and say that a partial
assignment $\alpha\colon X \rightarrow \set{0,1,*}$ assigns a variable
$x \in X$ if $\alpha(x) \in \set{0,1}$.

The \emph{support} of a partial assignment $\alpha$, denoted by
$\supp(\alpha)$, is the set of nodes incident to assigned
variables. We say that $\alpha$ is \emph{complete} on a set of nodes
$U \subseteq V(G_n)$ if $\alpha$ assigns all variables incident to $U$
and no others.  Note that the support of an assignment complete on $U$
also includes the neighbors of $U$.

We consider partial assignments that give values to few
variables. More specifically we are interested in assignments that are
complete on a set~$U \subseteq V(G_n)$ of size at most
$|U| \leq 2n/3$. Note that such a $U$ cannot touch all rows or columns
of the grid. Denote by $U^c = V(G_n) \setminus U$ the complement of
$U$.

Since $|U| \leq 2n/3$, the sub-graph $G_n[U^c]$ induced by $U^c$ has a
\emph{giant component} that contains almost all nodes of the grid:
there are at least $n/3$ complete rows and columns in $U^c$ and all
the nodes of these rows and columns are connected. It is important to
control assignments on the other, \emph{small components}, of
$G_n[U^c]$ as they may fail to extend in a consistent manner to these
small components. For a set $U$ let the \emph{closure of $U$}, denoted
by $\closure(U) \subseteq V(G_n)$, consist of all nodes in $U$ along
with all the nodes in the small components of~$G_n[U^c]$. Note that
$\closure(U)^c$ contains exactly the set of nodes that are in the
giant component of $G_n[U^c]$.

\begin{definition}[local consistency]
  \label{def:consistent}
  A partial assignment $\alpha$ with $U=\supp(\alpha)$ is
  \emph{locally consistent} if it can be extended to an assignment
  complete on $\closure(U)$ such that all parity constraints on
  $\closure(U)$ are satisfied. We extend this notion to say that a
  pair of assignments is \emph{pairwise locally consistent} if they do
  not give different values to the same variable and the union of the
  two assignments is locally consistent.
\end{definition}

The following lemma from \cite{jhtseitin} is used throughout the
article.

\begin{restatable}{lemma}{extendone}
  \label{lemma:extendone}
  If~$\alpha$ is a locally consistent assignment
  satisfying~$\lvert\supp(\alpha)\rvert \leq n/2$, then for any
  variable~$x_e$ there is a locally consistent
  assignment~$\alpha' \supseteq \alpha$ with~$x_e$ in its domain.
\end{restatable}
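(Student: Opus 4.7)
My plan is to leverage the completion of $\alpha$ provided by local consistency and then case-split based on the location of the edge $e = \{u, v\}$ relative to $\closure(U)$, where $U := \supp(\alpha)$. By hypothesis there is an extension $\beta$ of $\alpha$ that is complete on $\closure(U)$ and satisfies all parity constraints on $\closure(U)$.

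If at least one endpoint of $e$ lies in $\closure(U)$, then $\beta$ already assigns $x_e$, and I define $\alpha' := \alpha \cup \{x_e \mapsto \beta(x_e)\}$. The new support $U' := \supp(\alpha')$ is contained in $U \cup \{u, v\}$, so $|U'| \leq n/2 + 2 < 2n/3$, and the giant component of $G_n[V \setminus U']$ continues to cover most of the torus. To exhibit a completion $\beta'$ on $\closure(U')$ witnessing local consistency of $\alpha'$, I keep the values of $\beta$ on variables incident to $\closure(U)$ and, whenever $\closure(U')$ introduces additional small components, I assign the remaining variables by applying \cref{lemma:evenok} to each such component with charges adjusted for the already-fixed boundary edges.

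If instead neither endpoint of $e$ lies in $\closure(U)$, then both $u$ and $v$ sit in the giant component of $G_n[V \setminus U]$; here $\beta$ does not assign $x_e$. I pick a value $b \in \{0, 1\}$ and set $\alpha' := \alpha \cup \{x_e \mapsto b\}$, with support $U' = U \cup \{u, v\}$. Removing $u$ and $v$ from the giant component may spawn a few new small components in $\closure(U')$, all lying in a small neighborhood of $e$. I construct the completion $\beta'$ by taking $\beta$ on $\closure(U)$ and using \cref{lemma:evenok} to fill in consistent values on the newly introduced small components; the choice of $b$ gives the flexibility (via flipping along a short cycle through $e$ if needed) to match the parity requirements these small components impose.

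The main obstacle is the parity accounting underlying both cases: I must verify that every newly formed small component of $G_n[V \setminus U']$ has an even sum of residual charges, so that \cref{lemma:evenok} can be applied. This follows by summing parities, since $\beta$ already satisfies the constraints on $\closure(U)$; the contributions from the boundary edges between the new small component and $\closure(U)$ (together with the values chosen near $e$) cancel the residual charges modulo two. The slack $|U| \leq n/2$, which is strictly below $2n/3$, keeps the giant component intact after the addition of $\{u, v\}$ and prevents any pathological component restructuring.
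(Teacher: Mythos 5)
Your proposal takes a genuinely different route from the paper, but it has a real gap. The paper picks $V = U \cup \set{u,v}$, takes an arbitrary extension $\gamma \supseteq \beta$ complete on $\closure(V)$ that satisfies the \emph{maximum} number of constraints on $\closure(V)$, and shows by a path-flip inside the connected set $\closure(U)^c$ (from a violated node to the giant component $\closure(V)^c$) that in fact no constraint on $\closure(V)$ is violated. There is no case split and no component-by-component parity bookkeeping; the argument sidesteps exactly the issue your proof runs into.

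The gap in your proposal is the claim that ``every newly formed small component of $G_n[V\setminus U']$ has an even sum of residual charges.'' That is not what the setup gives you. For a new small component $C$, the only already-assigned boundary edges are those from $C$ into $\closure(U)$; the edges from $C$ to $\set{u,v}\setminus\closure(U)$ are still free. Summing the constraints over $C$ shows that the residual charge of $C$ equals the (yet-to-be-chosen) sum $\sum_{f:\,C \to \set{u,v}} x_f$, which can perfectly well be odd; the slack lives in those edges, not in a parity identity. So the real work is to choose the edges incident to $u$ and $v$ so that \emph{simultaneously} every new small component and the nodes $u,v$ themselves (in your Case 2) are satisfied — a coupled constraint that filling in each component independently via \cref{lemma:evenok} does not resolve. (Two smaller points: \cref{lemma:evenok} is stated for the full $n\times n$ torus, not for an arbitrary connected induced subgraph, so you would need a separate, easy, connected-graph version; and ``flipping along a short cycle through $e$'' changes no parities, so it cannot buy you the claimed flexibility.) The paper's extremal-plus-path-flip argument is precisely what handles this coupling; your plan would need a comparable mechanism to be complete.
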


For completeness we provide a proof in \cref{sec:omitted-proofs}.

\begin{definition}[local implication]
  \label{def:forced-consistency}
  Let $\alpha$ be a locally consistent assignment. A variable $x$ is
  \emph{locally implied} by $\alpha$ if there is a unique $b \in \set{0,1}$
  such that the partial assignment $\alpha \cup \set{x \mapsto b}$ is
  locally consistent.
\end{definition}

In particular if a locally consistent assignment $\alpha$ assigns a
variable $x$, then $x$ is locally implied by $\alpha$.

\subsection{Restrictions}
\label{sec:prel-restrictions}

Let $\tau\colon \set{x_1, \ldots, x_m} \rightarrow \set{0,1,*}$ be a
partial assignment and denote by $F$ a formula over the variables in
the domain of $\tau$. The formula \emph{$F$ restricted by $\tau$},
denoted by $\restrict{F}{\tau}$, is the formula obtained from $F$ by
replacing each variable~$x_i$ by $\tau(x_i)$ unless $\tau(x_i) = *$ in
which case we leave~$x_i$ untouched.
More generally, if $\sigma$ maps variables $x_1, \ldots, x_m$ to
formulas $A_1, \ldots, A_m$, then the formula \emph{$F$ restricted by
  $\sigma$}, denoted by $\restrict{F}{\sigma}$, is the formula
obtained from $F$ by replacing the variable $x_i$ with
$A_i = \sigma(x_i)$ for all $i \in [m]$.

\subsection{Decision Trees}
\label{sec:simple-dec-tree}

A \emph{decision tree} is a directed tree such that every node has
either out-degree 2 (an \emph{internal node}) or 0 (a \emph{leaf
  node}) and all nodes have in-degree $1$ except the designated
\emph{root node} which has in-degree $0$. Edges and leaves are labeled
$0$ or $1$ while internal nodes are labeled with a variable. A
\emph{branch} of a decision tree $T$ is a root-to-leaf path in $T$ and
the \emph{depth} of $T$, denoted by $\depth(T)$, is the length of the
longest branch in $T$.  Throughout we implicitly assume that internal
node labels (variables) of any branch are distinct.

A \emph{1-branch} (a \emph{0-branch}) is a branch with a leaf labeled
$1$ (labeled $0$), and a \emph{1-tree} (a \emph{0-tree}) is a decision
tree where all leaves are labeled 1 (labeled 0). Special cases of
$b$-trees are trees of depth 0. We sometimes write $T = b$ if $T$ is a
$b$-tree.

Given a Boolean assignment $\tau$ we can evaluate a decision tree $T$:
start at the root node $v$. If $v$ is a leaf, then output its
label. Otherwise $v$ is labeled by some variable $x$. Let
$b = \tau(x)$ and recurse on the node that the out-edge $b$ of
$v$ points to.

Since every branch~$B$ has a minimal partial assignment $\tau$ such
that any extension of $\tau$ traverses $B$, we interchangeably
identify a branch by the root-to-leaf path $B$, by $\tau$, and by the
unique leaf in $B$.

For a partial assignment $\alpha$ and a decision tree $T$ we obtain the
decision tree $T$ restricted by $\alpha$ by iteratively removing
internal nodes $v \in T$ labeled
$x_i \in \alpha^{-1}\bigl(\set{0,1}\bigr)$ and replacing them by the
node that the out-edge $\alpha(x_i)$ of $v$ points to.

Equivalently, if we view a decision tree $T$ as a set of branches,
then $T$ restricted by $\alpha$ consists of all branches $\tau \in T$
consistent with~$\alpha$ (in the standard sense: $\tau$ and $\alpha$
do not assign a variable to opposite value). These branches fit nicely
into a tree structure once all information about $\alpha$ is removed.

Let us stress the obvious: the restriction as defined above
\emph{always} produces a valid decision tree. Throughout the
manuscript we do \emph{not} use the above notion of a restriction. In
the following we define the restrictions used.

\paragraph{Decision Trees and Local Consistency.}
In the following we consider decision trees on the variables of the
Tseitin formula $\tseitin(G_n)$ defined over the $n\times n$ grid.

\begin{definition}[local consistency for branches]
  \label{def:local-consistency-dec-trees}
  Let $T$ be a decision tree on the variables of the Tseitin formula
  $\tseitin(G_n)$. A branch~$\tau$ in $T$ is \emph{locally consistent}
  if the partial assignment $\tau$ is locally consistent as an
  assignment (see \cref{def:consistent}) and $T$ is \emph{locally
    consistent} if all branches $\tau$ of $T$ are locally consistent.
\end{definition}

The following is a direct consequence of \cref{lemma:extendone}.

\begin{corollary}
  \label{lem:local-consistent-branch}
  Let $T$ be a decision tree on the variables of the Tseitin formula
  $\tseitin(G_n)$. If $\depth(T) \leq n/4$, then $T$ contains a
  locally consistent branch.
\end{corollary}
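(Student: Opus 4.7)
The plan is a straightforward greedy descent through the tree, using \cref{lemma:extendone} to keep the branch locally consistent at every step.

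I would proceed by induction on the depth reached from the root, maintaining the invariant that the partial assignment $\alpha$ read off the current root-to-node path is locally consistent and that $|\supp(\alpha)| \leq n/2$. The base case is the root with $\alpha = \emptyset$, which is trivially locally consistent and has empty support. For the inductive step, suppose we are at an internal node $v$ labeled by some variable $x_e$, and the path from the root to $v$ has length at most $n/4 - 1$. Each variable assigned so far contributes at most its two endpoints to $\supp(\alpha)$, so $|\supp(\alpha)| \leq 2 \cdot (n/4) = n/2$. By \cref{lemma:extendone} applied to $\alpha$ and the variable $x_e$, there exists a locally consistent extension $\alpha' \supseteq \alpha$ with $x_e \in \dom(\alpha')$. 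Let $b = \alpha'(x_e) \in \set{0,1}$ and follow the out-edge of $v$ labeled $b$; the extended path corresponds to the locally consistent assignment $\alpha \cup \set{x_e \mapsto b} \subseteq \alpha'$, which is locally consistent as a sub-assignment of a locally consistent one (this is immediate from \cref{def:consistent} since every extension of $\alpha'$ to $\closure(\supp(\alpha'))$ satisfying the relevant parities also witnesses local consistency for the smaller assignment after adjusting the closure).

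Since $\depth(T) \leq n/4$, this descent terminates at a leaf after at most $n/4$ steps, producing a root-to-leaf branch $\tau$ whose associated partial assignment is locally consistent. This $\tau$ is then the desired locally consistent branch of $T$.

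The only minor subtlety is the sub-assignment remark above: strictly speaking, \cref{def:consistent} requires extending $\alpha \cup \set{x_e \mapsto b}$ to its own closure, not that of $\alpha'$. However, since $\supp(\alpha \cup \set{x_e \mapsto b}) \subseteq \supp(\alpha')$ and any extension of $\alpha'$ complete on $\closure(\supp(\alpha'))$ restricts to a consistent assignment on $\closure(\supp(\alpha \cup \set{x_e \mapsto b}))$, local consistency is preserved. No technical obstacle arises; the lemma is an immediate corollary of \cref{lemma:extendone}, as the statement already suggests.
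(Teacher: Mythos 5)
Your greedy descent is exactly the "direct consequence" the paper has in mind, and the argument is correct: the support bound $|\supp(\alpha)|\leq 2\cdot\depth(T)\leq n/2$ lets you invoke \cref{lemma:extendone} at every internal node, and your closure-containment observation (that $\closure(\supp(\beta))\subseteq\closure(\supp(\alpha'))$ for $\beta\subseteq\alpha'$, so a witness for $\alpha'$ restricts to a witness for $\beta$) correctly handles the sub-assignment step. No gap; this matches the paper's intent.
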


Throughout this article we assume that all considered decision trees
are of depth at most one fourth of the dimension of the grid we are
considering. We may thus assume that all decision trees have a locally
consistent branch.

We are going to maintain the even stronger property that all the
considered decision trees~$T$ are locally consistent, that is,
\emph{all branches} of $T$ are locally consistent. This is easy to
maintain during the creation of a decision tree: when extending a
decision tree at some leaf $\tau$ we simply disallow queries to a
variable $x$ if $x$ is locally implied by $\tau$.

We further intend to maintain this property when a decision tree $T$
is hit by a locally consistent assignment~$\alpha$. To this end we
trim $T$ aggressively during the restriction: denote by
$\restrict{T}{\alpha}$ the decision tree that consists of all the
branches $\tau \in T$ that are \emph{pairwise locally consistent} with
$\alpha$ as defined in \cref{def:consistent}. If there are indeed some
such branches $\tau$, then these fit again into a tree-structure once
any information about variables $x$ locally implied by $\alpha$ is
removed.

However, if there are \emph{no} branches $\tau$ pairwise locally
consistent with $\alpha$, then the above restriction fails to return a
decision tree. The following lemma, a direct consequence of
\cref{lemma:extendone}, states that if $\alpha$ is small and the tree
$T$ is of low depth, then the restriction $\restrict{T}{\alpha}$ does
not fail, that is, the restricted tree $\restrict{T}{\alpha}$ is
indeed a locally consistent decision tree.

\begin{corollary}
  \label{lem:local-consistent-branch-restricted}
  Let $T$ be a decision tree on the variables of the Tseitin formula
  $\tseitin(G_n)$ and denote by $\alpha$ a locally consistent
  assignment. If
  $\lvert\supp(\alpha)\rvert + 2\cdot\depth(T) \leq n/2$, then
  $\restrict{T}{\alpha}$ is a locally consistent decision tree.
\end{corollary}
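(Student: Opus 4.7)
The corollary essentially asks us to show that $\restrict{T}{\alpha}$ is non-empty: the paragraph preceding the statement observes that any nonempty collection of branches of $T$ that are pairwise locally consistent with $\alpha$ automatically fits together into a valid decision tree once the variables assigned or locally implied by $\alpha$ are stripped off. The plan is therefore to exhibit at least one branch $\tau \in T$ for which $\alpha \cup \tau$ is locally consistent, and to do so by a greedy top-down walk in $T$.

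I would start at the root with $\tau = \emptyset$ and maintain the invariant that $\alpha \cup \tau$ is locally consistent; this holds initially since $\alpha$ is locally consistent by assumption. At each internal node labeled by some variable $x_e$, I would apply \cref{lemma:extendone} to the assignment $\alpha \cup \tau$: this yields a locally consistent extension assigning $x_e$, so at least one of $\alpha \cup \tau \cup \set{x_e \mapsto 0}$ and $\alpha \cup \tau \cup \set{x_e \mapsto 1}$ is locally consistent, and I descend along the corresponding out-edge of $T$ (either choice works if neither value is locally implied). Upon reaching a leaf, the path $\tau$ is the desired branch pairwise locally consistent with $\alpha$.

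The only point worth checking, and the reason the hypothesis takes the specific form $|\supp(\alpha)| + 2\depth(T) \leq n/2$, is that \cref{lemma:extendone} must remain applicable at every step, which requires $|\supp(\alpha \cup \tau)| \leq n/2$. Each queried edge contributes at most two new vertices to $\supp(\tau)$, and $|\tau| \leq \depth(T)$ along the walk, so $|\supp(\alpha \cup \tau)| \leq |\supp(\alpha)| + 2\depth(T) \leq n/2$ throughout; hence the greedy construction never gets stuck. Local consistency of each remaining branch of $\restrict{T}{\alpha}$ is inherited from this construction applied branch by branch. There is no substantive obstacle here: the corollary is mainly a convenient repackaging of \cref{lemma:extendone} in the language of decision trees, and the numerical slack in the hypothesis is tuned exactly to accommodate the factor $2$ arising from the support growth.
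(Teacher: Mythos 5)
Your argument is correct and is precisely the intended reading of the paper's claim that the corollary is ``a direct consequence of \cref{lemma:extendone}'': the paper omits the proof here, but the greedy top-down walk you describe is exactly the argument the paper \emph{does} spell out for the analogous statement with full restrictions (\cref{lem:local-consistent-branch-restricted-sigma}), and your accounting $\lvert\supp(\alpha\cup\tau)\rvert \le \lvert\supp(\alpha)\rvert + 2\depth(T)$ correctly explains the factor $2$ in the hypothesis. The one step you wave at --- that every branch of $\restrict{T}{\alpha}$, not just the one you construct, is locally consistent after the $\alpha$-information is stripped off --- does hold, because $\supp(\tau')\subseteq\supp(\alpha\cup\tau)$ implies $\closure(\supp(\tau'))\subseteq\closure(\supp(\alpha\cup\tau))$, so the witnessing extension of $\alpha\cup\tau$ restricts to a witness for $\tau'$; but since the paper itself treats this as immediate, no harm done.
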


\begin{definition}[functional equivalence of decision trees]
  \label{def:local-consistent-dec-tree}
  Denote by $T_1$ and $T_2$ two locally consistent decision trees of
  depth at most $n/8$ defined over the variables of the Tseitin
  formula $\tseitin(G_n)$. The decision trees $T_1$ and $T_2$ are
  \emph{functionally equivalent} if for every branch $\tau$ of $T_1$
  ending in a leaf labeled $b$ it holds that
  $\restrict{T_2}{\tau} = b$ and vice-versa.
\end{definition}

The assumption in \cref{def:local-consistent-dec-tree} on the depth of
$T_1$ and $T_2$ ensure that the restricted trees are well-defined.

\begin{lemma}
  \label{lem:loc-cons-b-tree}
  Let $T_1$ and $T_2$ be two locally consistent decision trees defined
  over the variables of the Tseitin formula $\tseitin(G_n)$ and denote
  by $\alpha$ a locally consistent assignment. Suppose
  $\restrict{T_1}{\alpha}$ is a $b$-tree. If $T_1$ and $T_2$ are
  functionally equivalent and
  $\lvert\supp(\alpha)\rvert + 2\bigl(\depth(T_1) +\depth(T_2)\bigr)
  \leq n/2$, then $\restrict{T_2}{\alpha}$ is a $b$-tree.
\end{lemma}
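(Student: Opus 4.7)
The plan is to take an arbitrary leaf of $\restrict{T_2}{\alpha}$, show that its label must be $b$, and conclude. Such a leaf corresponds to a branch $\tau$ of $T_2$ that is pairwise locally consistent with $\alpha$ and carries the same label; call this label $c$. Note first that $\restrict{T_2}{\alpha}$ is a well-defined locally consistent decision tree by \cref{lem:local-consistent-branch-restricted}, since the bound $\lvert\supp(\alpha)\rvert + 2\depth(T_2) \leq n/2$ is implied by the hypothesis.

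Since $\tau$ ends in $c$, functional equivalence of $T_1$ and $T_2$ yields $\restrict{T_1}{\tau} = c$, that is, $\restrict{T_1}{\tau}$ is a $c$-tree. The goal is to compare this with $\restrict{T_1}{\alpha}$, which by assumption is a $b$-tree. The natural common refinement is $\restrict{T_1}{\tau \cup \alpha}$. The union $\tau \cup \alpha$ is a locally consistent partial assignment, which is exactly the content of pairwise local consistency of $\tau$ with $\alpha$, and its support has size at most $\lvert\supp(\alpha)\rvert + 2\depth(T_2)$. The full hypothesis $\lvert\supp(\alpha)\rvert + 2\bigl(\depth(T_1) + \depth(T_2)\bigr) \leq n/2$ is then precisely what is needed to apply \cref{lem:local-consistent-branch-restricted} to $T_1$ and $\tau \cup \alpha$, so that $\restrict{T_1}{\tau \cup \alpha}$ is a well-defined locally consistent decision tree and in particular contains at least one branch $\rho$.

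By construction, $\rho$ is pairwise locally consistent with $\tau \cup \alpha$, and hence simultaneously with $\tau$ alone and with $\alpha$ alone. Consequently $\rho$ survives both as a branch of $\restrict{T_1}{\tau}$ (a $c$-tree) and as a branch of $\restrict{T_1}{\alpha}$ (a $b$-tree), so its leaf label is forced to be both $c$ and $b$. Therefore $c = b$, and since $c$ was an arbitrary leaf label of $\restrict{T_2}{\alpha}$, this tree is a $b$-tree.

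The only non-routine step is ensuring that $\restrict{T_1}{\tau \cup \alpha}$ is non-empty, which is exactly what \cref{lem:local-consistent-branch-restricted} provides; the remaining work is size bookkeeping arranged so that each intermediate restricted tree meets the required depth-plus-support budget, and the slack built into the hypothesis is calibrated precisely for this.
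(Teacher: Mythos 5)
Your proof is correct and is essentially the same argument as the paper's: both exploit functional equivalence together with \cref{lem:local-consistent-branch-restricted} to show that the jointly restricted tree $\restrict{T_1}{\tau \cup \alpha}$ is simultaneously a $c$-tree (via the branch of $T_2$) and a $b$-tree (as a further restriction of $\restrict{T_1}{\alpha}$), forcing $c=b$. The paper phrases this as a contradiction starting from a hypothetical $\lnot b$-branch of $\restrict{T_2}{\alpha}$, while you argue directly about an arbitrary leaf and make the common branch $\rho$ explicit, but the underlying reasoning and the role of the depth/support budget are identical.
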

\begin{proof}
  Suppose that $\restrict{T_2}{\alpha}$ has a $\lnot b$-branch
  $\tau_2$. Since $T_1$ and $T_2$ are functionally equivalent it holds
  that $\restrict{T_1}{\alpha \cup \tau_2}$ is a $\lnot b$-tree, where
  we use~\cref{lem:local-consistent-branch-restricted}. This
  contradicts the assumption that $\restrict{T_1}{\alpha}$ is a
  $b$-tree.
\end{proof}

For decision trees~$T, T_1, T_2, \ldots, T_m$ we say that $T$
\emph{represents} $\bigvee_{i=1}^{m}T_i$ if for every branch $\tau$ of
$T$ ending in a leaf labeled $1$ it holds that there is an $i \in [m]$
such that $\restrict{T_i}{\tau} = 1$, and if $\tau$ ends in a leaf
labeled $0$, then for all $i \in [m]$ it holds that
$\restrict{T_i}{\tau} = 0$.

Recall that the key idea of \cite{PRT21} is that one need not consider
all formulas in the proof at the same time. Rather one can focus on
the sub-formulas of a given line and establish that these admit what
is called an $\ell$-common partial decision tree of small depth. This
is a decision tree with the property that at each leaf each of the
formulas can be described by a decision tree of depth $\ell$. The
formal definition follows.

\begin{definition}[common partial decision tree]
  Let $T_1, \ldots, T_m$ be decision trees over the variables of the
  Tseitin formula $\tseitin(G_n)$. A decision tree $\calT$ is said to
  be an \emph{$\ell$-common partial decision tree for
    $T_1, \ldots, T_m$ of depth $t$} if
  \begin{enumerate}
  \item the depth of~$\calT$ is bounded by $t$, and
  \item for every $T_i$ and branch $\tau \in \calT$ there are decision
    trees $T(i, \tau)$ of depth $\ell$ satisfying the following. Let
    $\calT_i$ be the decision tree obtained from $\calT$ by appending
    the trees $T(i, \tau)$ at the corresponding leaf $\tau$ of
    $\calT$. Then, if a branch $\tau' \in \calT_i$ ends in a leaf
    labeled $b$, it holds that $\restrict{T_i}{\tau'} = b$.
  \end{enumerate}
\end{definition}

Let $m_1, \ldots, m_M \in \N^+$ for some integer $M$. Consider
decision trees $T_i^j$ for $i \in [m_j]$ and $j \in [M]$. For
$j \in [M]$ let $T_j$ be a decision tree that represent
$\bigvee_{i=1}^{m_j}T^j_i$. An $\ell$-common partial decision tree
$\calT$ of depth $t$ \emph{represents} the sequence
$\big(\bigvee_{i=1}^{m_j}T^j_i\big)_{j=1}^M$ if it is an $\ell$-common
partial decision tree for $T^1, \ldots, T^M$ of depth $t$.

\subsection{Evaluations}
\label{sec:teval}

The concept of an \emph{evaluation} was introduced by Krajíček et
al.~\cite{KPW95FregePHPExp} and is a very convenient tool for proving
lower bounds on Frege proof size. The content of this section is
standard and we follow the presentation of Urquhart and
Fu~\cite{UF96SimplifiedLowerBounds} while using the notation of
Håstad~\cite{jhtseitin}. We need a generalization of previous notions
as introduced by Pitassi et al.~\cite{PRT21}.
\begin{definition}[evaluation]
  \label{def:evaluation}
  The set of formulas $\Gamma$ has a \emph{$t$-evaluation $\varphi$},
  mapping formulas from $\Gamma$ to locally consistent decision trees
  of depth at most $t$ if the following holds.
  \begin{enumerate}
  
  \item The mapping $\varphi$ assigns constants (variables) to the
    corresponding decision trees of depth $0$ (of depth $1$),\label[prop]{prop:depth-0}

  \item axioms are assigned to $1$-trees,\label[prop]{prop:1-tree}
  
  \item if $\varphi(F) = T$, then $\varphi(\lnot F)$ is the same
    decision tree as $T$ except that the leaf-labels are negated,
    and\label[prop]{prop:not}
  
  \item if $F = \bigvee_{i \in [s]} F_i$, then $\varphi(F)$ represents
    $\bigvee_{i \in [s]} \varphi(F_i)$.\label[prop]{prop:or}
  \end{enumerate}
\end{definition}

Eventually we will associate each line of a Frege proof with its own
$t$-evaluation. In order to argue about the proof we require that
these different $t$-evaluations are functionally equivalent as
explained in the following. Let us say that two formulas are
\emph{isomorphic} if they only differ in the order of the binary
$\vee$ connectives.

\begin{definition}[functional equivalence of evaluations]
  \label{def:consistency-evaluation}
  Consider a $t$-evaluation $\varphi$ defined over a set of formulas
  $\Gamma$ and similarly let $\varphi'$ be a $t$-evaluation defined
  over the set of formulas $\Gamma'$. The two $t$-evaluations
  $\varphi$ and $\varphi'$ are \emph{functionally equivalent} if all
  isomorphic formulas $F \in \Gamma$ and $F' \in \Gamma'$ satisfy that
  the decision trees $\varphi(F)$ and $\varphi'(F')$ are functionally
  equivalent.\end{definition}

We say that a \emph{Frege proof has a $t$-evaluation} if each line
$\nu$ in the proof has a $t$-evaluation $\varphi^\nu$ for all
sub-formulas occurring on $\nu$ and for all lines $\nu, \nu'$ it holds
that $\varphi^\nu$ and $\varphi^{\nu'}$ are functionally equivalent.

The following lemma is central. It states that if we have a
$t(k)$-evaluation for a Frege proof with $t(k) \leq n/16$, then all
lines in the proof are represented by $1$-trees. As constant false is
represented by a $0$-tree (\cref{def:evaluation}, \cref{prop:depth-0})
it is thus not possible to derive contradiction.  Hence any Frege
refutation is large, respectively long in the case of Frege proofs of
bounded line size.

\begin{restatable}{lemma}{noproof}\label{lem:noproof}
  Let $n,t \in \N$ such that $t \leq n/16$ and suppose that we have a
  Frege proof of a formula~$A$ from the Tseitin
  formula~$\tseitin(G_n)$ defined over the $n \times n$ grid. If this
  proof has a $t$-evaluation, then each line in the derivation is
  mapped to a 1-tree. In particular $A \neq \bot$, that is,
  contradiction cannot be derived.
\end{restatable}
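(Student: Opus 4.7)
The plan is to prove the statement by induction on the position of a line in the Frege derivation: I will show that if the proof has a $t$-evaluation with $t \leq n/16$, then for every line $F$ at position $\nu$ the decision tree $\varphi^\nu(F)$ is a 1-tree. Since \cref{prop:depth-0} forces the constant $\bot$ to be mapped to a depth-$0$ 0-tree, this immediately implies that $\bot$ cannot appear as a line, yielding the ``in particular'' conclusion.

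The base case is handled directly by \cref{prop:1-tree}: each axiom of $\tseitin(G_n)$ is assigned a 1-tree. For the inductive step I would go through the five Frege rules. The four easier rules (excluded middle, expansion, contraction, association) each reduce, after assuming a 0-branch $\tau$ in the conclusion, to a short combination of \cref{prop:or} and \cref{prop:not} applied along $\tau$ (and, for association, a second branch $\tau^*$), plus the elementary observation that restricting a $b$-tree by a locally consistent assignment still yields a $b$-tree whenever non-emptiness is guaranteed by \cref{lem:local-consistent-branch-restricted}. To carry ``is a 1-tree'' from a premise living on an earlier line to the conclusion line I would invoke \cref{lem:loc-cons-b-tree} on the relevant pair of isomorphic sub-formulas, using that $\varphi^\nu$ and $\varphi^{\nu'}$ are functionally equivalent.

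The genuinely delicate case is the cut rule $B \lor C,\, \lnot B \lor D \vdash C \lor D$, since $B$ does not appear in the conclusion and one must combine information living on three different lines. Suppose the conclusion on line $\nu$ admits a 0-branch $\tau$; by \cref{prop:or}, both $\restrict{\varphi^\nu(C)}{\tau}$ and $\restrict{\varphi^\nu(D)}{\tau}$ are 0-trees. Transferring these facts via functional equivalence to the lines $\nu_1$ and $\nu_2$ on which the two premises were derived (using \cref{lem:loc-cons-b-tree}), I would then observe that the restriction of the inductively established 1-tree $\varphi^{\nu_1}(B \lor C)$ by $\tau$ is again a 1-tree, so any branch $\tau^*$ of this restriction forces $\restrict{\varphi^{\nu_1}(B)}{\tau \cup \tau^*}$ to be a 1-tree by \cref{prop:or}. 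Pushing $B$ across to line $\nu_2$ by functional equivalence and \cref{prop:not} yields $\restrict{\varphi^{\nu_2}(\lnot B)}{\tau \cup \tau^*} = 0$, and analogously $\restrict{\varphi^{\nu_2}(D)}{\tau \cup \tau^*} = 0$. But the 1-tree $\varphi^{\nu_2}(\lnot B \lor D)$, restricted by $\tau \cup \tau^*$, is still a 1-tree, and any of its branches $\tau^{**}$ should, by \cref{prop:or}, have at least one of the two further restrictions equal to a 1-tree, contradicting that both are further restrictions of 0-trees.

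The main obstacle, beyond the bookkeeping of the cut case, will be verifying the depth budget. All restrictions invoked above stack at most three branches of length at most $t$ and then restrict a tree of depth at most $t$ (or compare two such trees via \cref{lem:loc-cons-b-tree}), so the hypotheses $\lvert\supp(\alpha)\rvert + 2\depth(T) \leq n/2$ and $\lvert\supp(\alpha)\rvert + 2(\depth(T_1) + \depth(T_2)) \leq n/2$ reduce to $5t \leq n/2$ and $7t \leq n/2$, both comfortably satisfied by $t \leq n/16$. I expect the cut rule to absorb essentially all the conceptual work, with the other four rules each reducing to a couple of lines of similar shape.
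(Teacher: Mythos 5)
Your proposal is correct and follows essentially the same route as the paper: induction over derivation steps, axioms handled by \cref{prop:1-tree}, the four easy rules dispatched via \cref{prop:or}, \cref{prop:not}, and \cref{lem:loc-cons-b-tree}, and the cut rule as the genuinely delicate case with a contradiction built by stacking restrictions. Your cut-rule argument nests one more layer ($\tau$, $\tau^*$, $\tau^{**}$) than the paper's, which first establishes that \emph{both} $\restrict{\varphi^{\nu_1}(B)}{\tau}$ and $\restrict{\varphi^{\nu_2}(\lnot B)}{\tau}$ are $1$-trees and then contradicts functional equivalence directly; both versions are sound. One small slip in the final paragraph: a branch of length $t$ queries $t$ edge-variables and thus has support up to $2t$ \emph{nodes}, so the tight requirement is $8t \le n/2$ (from, e.g., $\lvert\supp(\tau\cup\tau^*)\rvert + 2\bigl(\depth(T_1)+\depth(T_2)\bigr) \le 4t + 4t$), not $5t$ or $7t$; fortunately this is exactly what $t \le n/16$ guarantees, so the argument still closes.
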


The proof of \cref{lem:noproof} follows by standard arguments. For
completeness we provide a proof in \cref{sec:omitted-proofs}.

\section{Full Restrictions}
\label{sec:restrictions}
\label{sec:full-restriction}

In this section we introduce a space of restrictions that we use to
turn the Tseitin contradiction $\tseitin(G_{n_1})$ defined over the
$n_1 \times n_1$ grid into the Tseitin contradiction
$\tseitin(G_{n_2})$ over the smaller $n_2 \times n_2$ grid. Throughout
we assume that $n_1$ and $n_2$ are odd integers such that the
mentioned Tseitin formulas are indeed contradictions.

Let $\factorN = \lfloor n_1/n_2\rfloor$ and partition the columns (rows) of
the $n_1 \times n_1$ grid into $n_2$ almost equal sized sets
$\calQ = \set{Q_i \mid i \in [n_2]}$ (sets
$\calR = \set{R_i \mid i \in [n_2]}$) such that each set~$Q_i$
(set~$R_i$) contains $\factorN$ or $\factorN+1$ consecutive columns (rows). The
\emph{central columns} of $Q \in \calQ$ consist of columns $q \in Q$
such that $Q$ has at least $\factorN/8$ columns to the left and right of $q$
and similarly we let the \emph{central rows} of $R \in \calR$ consist
of rows $r \in R$ such that $R$ has at least $\factorN/8$ rows up and down of
$r$. For each set $Q \in \calQ$ (set $R \in \calR$) we designate
$\Delta = \lfloor \factorN/5\rfloor$ of the central columns in $Q$ (central
rows in $R$) to be the \emph{center columns} (\emph{center
  rows}). These center columns (center rows) are chosen evenly spaced
from the central columns (central rows) and hence there are always at
least $2$ central columns (central rows) between each consecutive pair
of center columns (center rows).

\begin{figure}
  \centering
  \includegraphics{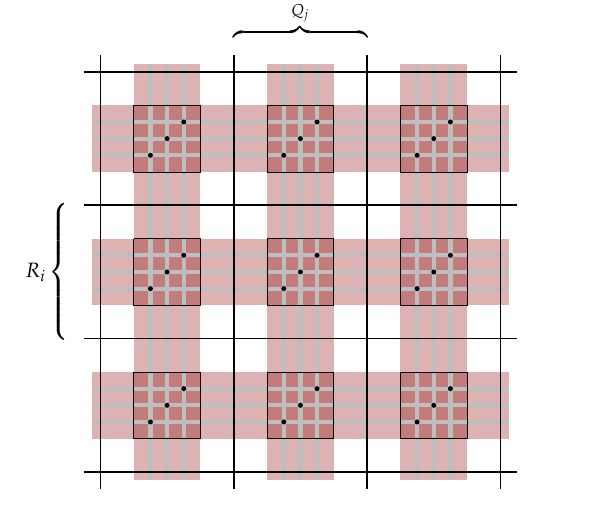}
  \caption{Centers and central areas: central columns and central rows
    are highlighted red while center columns and center rows are
    shaded gray}
  \label{fig:grid}
\end{figure}

The partitions $\calQ$ and $\calR$ naturally induce a partition of the
$n_1 \times n_1$ grid into $n_2^2$ \emph{sub-squares}:
sub-square~$(i,j)$ is defined as the sub-graph induced by the nodes in
$R_i \cap Q_j$.  The \emph{central area} of a sub-square $(i,j)$
consists of the nodes in the intersection of the central rows of $R_i$
and the central columns of $Q_j$. The $\ell$th \emph{center} of a
sub-square is the node in the intersection of the $\ell$th center row
of $R_i$ and the $\ell$th center column of $Q_j$. Each sub-square has
hence $\Delta$ centers. A schematic picture is given in
\cref{fig:grid}.

A restriction chooses one center per sub-square which, eventually,
will be the nodes of the smaller $n_2 \times n_2$ grid. For this to
make sense we need to explain (1) how to connect such centers by paths
and (2) how these paths correspond to variables in the smaller
instance.

Let us specify the paths used to connect centers in adjacent
sub-squares. Suppose we are given a center $c_i$ and a center $c_j$ in
the sub-square below. Since there are at least
$2 \cdot \lceil \factorN/8 \rceil \geq \Delta$ rows between the two
central areas we can designate for each center $c_i$ a unique row
$\row_i$ in the middle area.

To connect $c_i$ to $c_j$ we start at $c_i$, first go $1$ step to the
left and then straight down to~$\row_i$. This is complemented by
starting at~$c_j$, going $1$ step to the right, and then straight up
to~$\row_i$. The appropriate segment from~$\row_i$ completes the
path. An illustration is provided in \cref{fig:path}.

Connecting $c_i$ to a center $c_j$ in a sub-square to the right is
done in an analogous way: there is a unique column $\col_i$ associated
with the center $c_i$. The path consists of five non-empty
segments. The first segment consists of the vertical edge down from
$c_i$ while the last segment consists of the vertical edge up from
$c_j$. We add two horizontal segments connecting the first and last
segment to the designated column $\col_{i}$ and use the appropriate
middle segment from $\col_i$.

\begin{figure}
  \centering
  \includegraphics{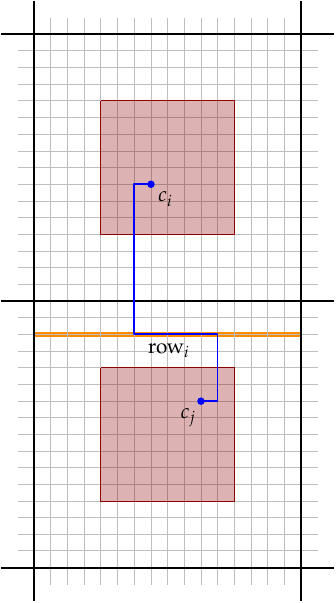}
  \caption{A path connecting $c_i$ to a center $c_j$ in the sub-square
    below with the central areas highlighted red and the designated
    row~$\row_i$ highlighted orange}
  \label{fig:path}
\end{figure}

This completes the discussion of the paths. Let us turn our attention
to how a path corresponds to a variable in the smaller Tseitin
instance. Recall that $n_1 > n_2$ are odd.

A restriction
$\sigma \in \Sigma(n_1,n_2) = \Sigma_{\calQ, \calR}(n_1,n_2)$ is
defined by one center in each sub-square (the so-called \emph{chosen
  centers} $\chosen$ of $\sigma$) and an assignment $\sigma_0$ to the
edges of the $n_1 \times n_1$ grid that satisfies the Tseitin formula
with 0 charges at the chosen centers and 1 charges at all other
nodes. Since the number of chosen centers is odd, by
\cref{lemma:evenok}, such an assignment exists.

Let us call a path that connects two chosen centers a \emph{chosen
  path} and note that the set of chosen paths is pair-wise
edge-disjoint. For each chosen path $P$ we introduce a new variable
$y_P$ and define the \emph{full restriction} $\sigma$ as
\begin{align}
  \sigma(x_e) =
  \begin{cases}
    \sigma_0(x_e)
    &\text{if $e$ is not  on a chosen path,}\\
    y_P
    &\text{if $e$ is on a chosen path $P$ and $\sigma_0(x_e) = 1$,}\\
    \neg y_P
    &\text{if $e$ is on a chosen path $P$ and $\sigma_0(x_e) = 0$.}\\
  \end{cases}
  \label{eq:def-sigma}
\end{align}
The value given by $\sigma_0$ to a variable that is not on a chosen
paths is called the \emph{final value}.
See \cref{fig:sigma} for an illustration.

We claim that $\restrict{\tseitin(G_{n_1})}{\sigma}$ is the Tseitin
contradiction $\tseitin(G_{n_2})$ defined over the smaller
$n_2 \times n_2$ grid. Let us check that under $\sigma$ all the axioms
of $\tseitin(G_{n_1})$ are either mapped to true (and can be removed)
or to an axiom of the
smaller instance $\tseitin(G_{n_2})$:
\begin{itemize}
\item The axioms of a node $v$ not on a chosen path are satisfied
  since $\sigma_0$ assigns an odd number of incident edges to 1.
  
\item The axioms of an interior node $v$ of a chosen path $P$ are
  reduced to tautologies: the axioms are true independent of the value
  of $y_P$ since flipping the value of $y_P$ changes the value of two
  variables incident to $v$.
  
\item The axioms of a chosen center $v \in \chosen$ turn into axioms
  of the smaller instance. Note that the charge is still $1$: the
  restriction $\sigma_0$ assigns an even number of edges incident to
  $v$ to $0$ and hence there is an even number of negated variables
  $\lnot x_P$ incident to $v$. Thus the constraint
  $\sum_{e \ni v} \sigma(x_e) = 1 \mod 2$ is equivalent to
  $\sum_{P \ni v} y_P = 1 \mod 2$.
\end{itemize}

Note that a full restriction $\sigma$ is really an \emph{affine
  restriction} in the vocabulary of Rossman et al.~\cite{rst} since
$\sigma$ not only assigns values to variables but also identifies old
variables with new variables or the negations thereof.

\begin{figure}
  \centering
  \includegraphics{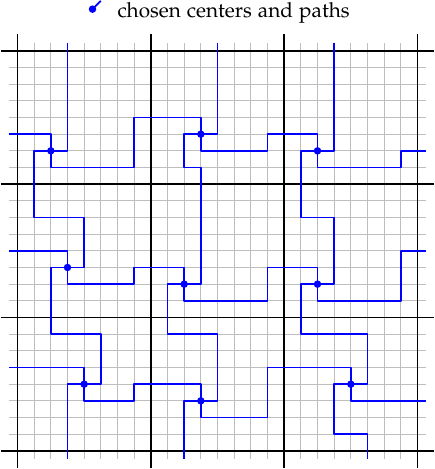}
  \caption{A full restriction $\sigma$}
  \label{fig:sigma}
\end{figure}

\subsection{A Distribution Over Full Restrictions}
\label{sec:distribution}

For odd integer~$k$
sample~$\sigma \sim \calD_k\big(\Sigma(n_1, n_2)\big)$ as
follows. Uniformly at random choose a set of~$k$ centers from the set
of all $k$-subsets of the~$\Delta n^2_2$ centers with the property
that every sub-square contains~$(1\pm 0.01)k/n^2_2$ centers. These are
the so-called \emph{alive} centers. In each sub-square the alive
center with the lowest numbered row becomes a \emph{chosen
  center}. Sample uniformly at random an assignment~$\sigma_0$ from
the space of solutions to the Tseitin formula with charges $0$ at the
chosen centers and $1$ at all other nodes. Define $\sigma$ from
$\sigma_0$ as in \cref{eq:def-sigma}.

\subsection{Differences to Previous Work}

We use a very similar space of random restriction compared to
\cite{jhtseitin} but make two changes. First, we make the number~$k$
of live centers \emph{independent} of the depth of the considered
decision trees: in \cite{jhtseitin} the number of live centers
is~$C n_2^2 s$ for~$s$ the depth of the considered decision trees
whereas we have~$Cn_2^2\log n$ live centers (independent of~$s$). This
change allows us to prove a multi-switching lemma. Second we define
full restrictions by designating rows and columns that are unique to a
single center instead of pairs of centers as done in \cite{jhtseitin}.

While the first change already appeared in the preliminary version of
this work \cite{jhkr} the second modification is here presented for
the first time. This change allows us to further strengthen the size
lower bound by a factor $2$ in the second exponent.
 
\subsection{Decision Trees and Full Restrictions}
\label{sec:dt}

In this section we define~$\restrict{T}{\sigma}$: the restriction of a
decision tree~$T$ by a full restriction~$\sigma$. Recall that for a
partial assignment~$\alpha$ the decision tree~$\restrict{T}{\alpha}$
only consists of branches pairwise locally consistent with~$\alpha$
(see \cref{sec:simple-dec-tree}). In the following we define a notion
of local consistency for full restrictions so that we can define
$\restrict{T}{\sigma}$ as for partial assignments.

While the initial decision tree~$T$ queries variables $x_e$, the
resulting decision tree $\restrict{T}{\sigma}$ queries path variables
$y_P$ defined on the smaller grid. The idea of pairwise local
consistency of a full restrictions~$\sigma \in \Sigma(n_1,n_2)$ and a
partial assignment~$\tau$ is not complicated: we want to ensure that
if a variable is assigned a constant by both $\tau$ and $\sigma$, then
these agree. Furthermore we require that the restriction induced by
$\tau$ on the smaller $n_2 \times n_2$ grid is locally consistent. The
following definition formalizes this notion.

\begin{definition}[pairwise local consistency for full restrictions]
  \label{def:local-consistency-full}
  Consider the Tseitin formula~$\tseitin(G_{n_1})$ defined over the
  $n_1 \times n_1$ grid, denote by $X$ the variables of
  $\tseitin(G_{n_1})$, let $\tau\colon X \rightarrow \set{0,1,*}$ be a
  partial assignment, and denote by $\sigma \in \Sigma(n_1,n_2)$ a
  full restriction. We say that $\tau$ and $\sigma$ are \emph{pairwise
    locally consistent} if the following holds.
  \begin{enumerate}
  \item For all variables $x \in X$ it holds that if $x$ is assigned
    to a constant by both $\tau$ and $\sigma$, then
    $\sigma(x) = \tau(x)$.\label[prop]{prop:sigma-const}
    
  \item Suppose $\sigma$ maps $x_1$ and $x_2$ to the same variable. If
    $x_1,x_2 \in \tau^{-1}\bigl(\set{0,1}\bigr)$, then
    \begin{enumerate}
    \item 
      $\tau(x_1) = \tau(x_2)$ if $\sigma(x_1) = \sigma(x_2)$,
    \item
      $\tau(x_1) = \neg \tau(x_2)$ if $\sigma(x_1) = \neg\sigma(x_2)$.
    \end{enumerate}
    \label[prop]{prop:def-consistency-beta-well-defined}
    
  \item If $\tau_\sigma$ denotes the minimal partial assignment such
    that for all $x \in \dom(\tau)$ it holds that
    \begin{align*}
      \tau_\sigma\bigl(\sigma(x)\bigr) = \tau(x)\eqcomma
    \end{align*}
    then we require that $\tau_\sigma$ is locally consistent with respect to
    the $n_2 \times n_2$ grid as defined in \cref{def:consistent}.
    \label[prop]{prop:def-consistency-beta}
  \end{enumerate}
\end{definition}

Note that \cref{prop:def-consistency-beta-well-defined} of the above
definition ensures that $\tau_\sigma$ is well defined, that is, it
ensures that there are no two edges on a chosen path $P$ such that
$\tau_\sigma$ assigns $0$ and $1$ to $y_P$.  Note that $\tau_\sigma$,
as defined in \cref{prop:def-consistency-beta} of
\cref{def:local-consistency-full}, can simply be thought of as the
restriction that $\tau$ induces on the smaller grid. In case a more
explicit description of $\tau_\sigma$ is sought: the partial
assignment $\tau_\sigma$ may equivalently be defined as
\begin{align*}
  \tau_\sigma(y_P)
  &=
  \begin{cases}
    \tau(x_e)
    &\text{if there is an edge $e \in P$ such that
      $x_e \in \tau^{-1}\bigl(\set{0,1}\bigr)$
      and $\sigma(x_e) = y_P$,}\\
    \lnot \tau(x_e)
    &\text{if there is an edge $e \in P$ such that
      $x_e \in \tau^{-1}\bigl(\set{0,1}\bigr)$
      and $\sigma(x_e) = \lnot y_P$,}\\
    * &\text{otherwise.}
  \end{cases}
\end{align*}

With the notion of pairwise local consistency for full
restrictions~$\sigma$ in place we are ready to define the restriction
of a decision tree~$T$ by a full restriction~$\sigma$, denoted by
$\restrict{T}{\sigma}$, as follows. Consider a branch $\tau \in T$
that is pairwise locally consistent with $\sigma$. The restricted
decision tree~$\restrict{T}{\sigma}$ contains the branch $\tau_\sigma$
as defined in \cref{prop:def-consistency-beta} of
\cref{def:local-consistency-full} for each such $\tau$. By definition
each such branch $\tau_\sigma$ is locally consistent with respect to
the smaller grid. Furthermore, these branches fit into a
tree-structure once duplicate queries to the same variable are
removed.

However, if there is \emph{no} branch $\tau\in T$ pairwise locally
consistent with $\sigma$, then the above restriction fails to return a
decision tree. For shallow trees we are guaranteed to obtain locally
consistent decision trees as summarized in the following.

\begin{lemma}
  \label{lem:local-consistent-branch-restricted-sigma}
  Let $T$ be a decision tree on the variables of the Tseitin formula
  $\tseitin(G_{n_1})$ and denote by $\sigma\in \Sigma(n_1,n_2)$ a full
  restriction. If $\depth(T) = t \leq n_2/4$, then
  $\restrict{T}{\sigma}$ is a locally consistent decision tree of
  depth at most $t$.
\end{lemma}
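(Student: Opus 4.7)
The plan is to show two things about $\restrict{T}{\sigma}$: (i)~the restriction operation does not fail, i.e., at least one branch $\tau \in T$ is pairwise locally consistent with $\sigma$; and (ii)~once duplicate queries and locally implied variables are removed the surviving branches assemble into a locally consistent decision tree of depth at most $t$.

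To show (i), I would construct such a branch greedily by traversing $T$ from its root, maintaining along the way a partial branch $\tau$ together with the induced partial assignment $\tau_\sigma$ on the $n_2 \times n_2$ grid, and keeping the invariant that $\tau_\sigma$ is locally consistent (which by \cref{def:local-consistency-full} is what pairwise local consistency of $\tau$ and $\sigma$ amounts to). At a node of $T$ labeled $x_e$, there are three cases. If $e$ does not lie on a chosen path, then $\sigma(x_e) = \sigma_0(x_e) \in \set{0,1}$ and we follow the matching branch of $T$; the assignment $\tau_\sigma$ is unchanged, so the invariant is preserved. If $\sigma(x_e) = \pm y_P$ and $\tau_\sigma$ already assigns $y_P$ (either because some earlier edge of $P$ was queried in $T$, or because $y_P$ is locally implied by $\tau_\sigma$), then pairwise consistency forces $x_e$ to a specific value, and we follow that branch of $T$. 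Finally, if $\sigma(x_e) = \pm y_P$ and $y_P$ is free in $\tau_\sigma$, then at least one value of $y_P$ extends $\tau_\sigma$ locally consistently, by \cref{lemma:extendone}; we pick such a value and follow the corresponding branch. The hypothesis $\depth(T) \leq n_2/4$ ensures $\lvert\supp(\tau_\sigma)\rvert \leq 2\depth(T) \leq n_2/2$, which is exactly the precondition of \cref{lemma:extendone} on the smaller grid, so this step never gets stuck.

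For (ii), I would simply observe that the construction in the paragraph above the lemma (\emph{take all branches of $T$ pairwise locally consistent with $\sigma$, project each to $\tau_\sigma$, and remove duplicate/locally implied queries}) produces a tree whose internal nodes are in bijection with the ``nontrivial'' queries encountered during the traversal described above — that is, those queries at which both values of the newly introduced path variable $y_P$ extend $\tau_\sigma$ locally consistently. In particular, each internal node of $\restrict{T}{\sigma}$ is obtained from a distinct internal node of $T$ along the corresponding branch, so $\depth(\restrict{T}{\sigma}) \leq \depth(T) = t$. Local consistency of every branch $\tau_\sigma$ holds by the maintained invariant.

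The main obstacle is purely bookkeeping: verifying that the three cases above really do cover every situation and that the resulting restricted tree is well-defined, in particular that a path variable $y_P$ is never queried twice along any branch of $\restrict{T}{\sigma}$ (which follows from the second case) and that branches labeled with the same variable do not produce conflicting sub-trees (which follows because pairwise local consistency uniquely determines the value of $x_e$ once $y_P = \sigma(x_e)$ is fixed). Once this is set up, both the non-emptiness and the depth bound are immediate consequences of \cref{lemma:extendone} applied to the $n_2 \times n_2$ grid.
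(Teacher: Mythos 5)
Your proposal is correct and follows essentially the same route as the paper's proof: a greedy root-to-leaf traversal of $T$ that maintains pairwise local consistency of the partial branch with $\sigma$ by tracking $\tau_\sigma$ and invoking \cref{lemma:extendone} on the $n_2 \times n_2$ grid whenever a genuinely fresh path variable $y_P$ is encountered (the bound $\lvert\supp(\tau_\sigma)\rvert \leq 2\depth(T) \leq n_2/2$ guaranteeing the hypothesis of that lemma). Your part (ii) is a slightly more explicit accounting of the depth bound and well-definedness of the restricted tree, but the paper treats this as immediate from the construction described just before the lemma statement, so there is no substantive difference.
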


\begin{proof}
  We need to argue that there is a branch $\tau \in T$ pairwise
  locally consistent with $\sigma$. We construct $\tau$ inductively
  starting with $v$ being the root node of $T$. Denote by $\tau^v$ the
  partial assignment from the root to $v$ and let $\tau_\sigma^v$ be
  the assignment induced by $\tau^v$ on the smaller grid as defined in
  \cref{prop:def-consistency-beta} of
  \cref{def:local-consistency-full}. Suppose $v$ is an internal node
  labeled $x_e$.  If $\sigma$ assigns $x_e$ to a constant, then
  recurse on the node the out-edge of $v$ labeled $\sigma(x_e)$ points
  to. Otherwise $\sigma$ assigns $x_e$ to a variable.
  \begin{enumerate}
  \item If $\tau^v_\sigma$ assigns the variable $y_P$ that
    $\sigma(x_e)$ maps to, then recurse on the node the out-edge of
    $v$ labeled
    \begin{enumerate}
    \item $\tau^v_\sigma(y_P)$ points to, assuming
      $\sigma(x_e) = y_P$, or
    \item $\lnot \tau^v_\sigma(y_P)$ points to, assuming
      $\sigma(x_e) = \lnot y_P$.
    \end{enumerate}
  \item Else $\tau^v_\sigma$ does not assign $y_P$. Let
    $b \in \set{0,1}$ such that
    $\tau^v_\sigma \cup \set{y_P\mapsto b}$ is a locally consistent
    assignment on the $n_2 \times n_2$ grid. Recurse on the node the
    out-edge of $v$ labeled $b$ points
    to.\label[step]{item:extend-tau}
  \end{enumerate}
  If the node $v$ is a leaf, then we have found a branch
  $\tau = \tau^v$ in $T$ that is pairwise consistent with~$\sigma$.

  The above process may fail in \cref{item:extend-tau} if there is no
  $b \in \set{0,1}$ such that $\tau^v_\sigma \cup \set{y_P\mapsto b}$
  is locally consistent. Since $\depth(T) \leq n_2/4$ by
  \cref{lemma:extendone} there is always such a choice. The statement
  follows.
\end{proof}

\begin{lemma}
  \label{lem:equivalence-sigma}
  Let $T$ and $T'$ be two functionally equivalent decision trees on
  the variables of the Tseitin formula $\tseitin(G_{n_1})$ and denote
  by $\sigma \in \Sigma(n_1,n_2)$ a full restriction. If
  $\depth(T),\depth(T')\leq n_2/8$, then $\restrict{T}{\sigma}$ and
  $\restrict{T'}{\sigma}$ are locally equivalent.
\end{lemma}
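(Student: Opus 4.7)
The plan is to argue by contradiction via a lifting argument. By the symmetric roles of $T$ and $T'$, it suffices to show that for every branch $\tau_\sigma$ of $\restrict{T}{\sigma}$ ending at a leaf labeled $b$, the restriction $\restrict{\restrict{T'}{\sigma}}{\tau_\sigma}$ is a $b$-tree. Fix such a $\tau_\sigma$ and, using the construction of $\restrict{T}{\sigma}$ from \cref{sec:dt}, lift it to a branch $\tau \in T$ which is pairwise locally consistent with $\sigma$ and whose leaf still carries label $b$. Suppose for contradiction that $\restrict{\restrict{T'}{\sigma}}{\tau_\sigma}$ contains a $\lnot b$-branch $\tau'_\sigma$, which lifts (in the same way) to a branch $\tau' \in T'$ pairwise locally consistent with $\sigma$ whose leaf is labeled $\lnot b$; observe that $\tau_\sigma \cup \tau'_\sigma$ is locally consistent on the smaller grid $G_{n_2}$.

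The core of the proof is to show that the combined partial assignment $\tau \cup \tau'$ is locally consistent on the larger grid $G_{n_1}$. For the no-conflict check, suppose a variable $x_e$ is assigned by both $\tau$ and $\tau'$. Either $\sigma(x_e)$ is a constant, in which case \cref{prop:sigma-const} of \cref{def:local-consistency-full} forces $\tau(x_e) = \tau'(x_e) = \sigma(x_e)$, or else $\sigma(x_e) = \pm y_P$ for some chosen path $P$, in which case \cref{prop:def-consistency-beta-well-defined} together with the equality $\tau_\sigma(y_P) = \tau'_\sigma(y_P)$, which is guaranteed by local consistency of $\tau_\sigma \cup \tau'_\sigma$ on $G_{n_2}$, forces $\tau(x_e) = \tau'(x_e)$. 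For the extension check, extend $\tau_\sigma \cup \tau'_\sigma$ to a completion $\rho$ on $\closure\bigl(\supp(\tau_\sigma \cup \tau'_\sigma)\bigr)$ satisfying the small-grid parity constraints, and then lift $\rho$ to a complete big-grid assignment $\hat\rho$ as follows: for an edge $e$ on a chosen path $P$ with $y_P \in \dom(\rho)$ substitute $y_P = \rho(y_P)$ into \cref{eq:def-sigma}, and otherwise set $\hat\rho(x_e) = \sigma_0(x_e)$. Using that $\sigma_0$ satisfies Tseitin on $G_{n_1}$ with charge $0$ at chosen centers and $1$ at all other nodes, that flipping a $y_P$ preserves parity at interior nodes of $P$ while flipping parity at its two chosen-center endpoints, and that $\rho$ satisfies the small-grid Tseitin constraints, a direct parity calculation shows that $\hat\rho$ satisfies every Tseitin constraint of $G_{n_1}$. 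A variable-by-variable check using \cref{eq:def-sigma} confirms that $\hat\rho$ extends $\tau \cup \tau'$, so $\tau \cup \tau'$ is locally consistent on $G_{n_1}$.

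With $\tau \cup \tau'$ locally consistent on $G_{n_1}$ the contradiction is immediate. Since $\tau \in T$ ends at leaf $b$ and $T, T'$ are functionally equivalent, $\restrict{T'}{\tau}$ is a $b$-tree. But $\tau$ and $\tau'$ are now pairwise locally consistent as partial assignments on $G_{n_1}$ in the sense of \cref{def:consistent}, so $\tau'$ survives in $\restrict{T'}{\tau}$ as a $\lnot b$-branch, contradicting the fact that $\restrict{T'}{\tau}$ is a $b$-tree. The main obstacle in this plan is the lifting performed in the second paragraph: one must carefully verify both that the big-grid parity constraints hold under $\hat\rho$ and that $\hat\rho$ truly extends $\tau \cup \tau'$, but both amount to a routine case analysis against the definition of $\sigma$ in \cref{eq:def-sigma}.
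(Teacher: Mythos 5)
Your overall plan parallels the paper's: both proofs hinge on the fact that pairwise local consistency of the induced assignments on the small grid implies pairwise local consistency of the corresponding branches on the large grid, with functional equivalence of $T$ and $T'$ then closing the argument. The paper packages this fact into a single terse sentence ("Since for all $x\in\dom(\tau)$ it holds that $\tau_\sigma(\sigma(x)) = \tau(x)$, it follows that\dots"); you instead try to make the lift explicit by exhibiting a witness $\hat\rho$ of the local consistency of $\tau\cup\tau'$ on $G_{n_1}$, which is a reasonable thing to want.

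The explicit lift, however, has a genuine gap. You claim that $\hat\rho$ "satisfies every Tseitin constraint of $G_{n_1}$." That cannot be right --- $\tseitin(G_{n_1})$ with all-ones charges is contradictory --- and indeed it fails: at a chosen center $v$ none of whose incident path variables $y_P$ lie in $\dom(\rho)$, the lift $\hat\rho$ agrees with $\sigma_0$ on every incident edge, so the charge at $v$ under $\hat\rho$ is $0$, not $1$. What the argument actually needs is for $\hat\rho$ to satisfy the constraints on $\closure\bigl(\supp(\tau\cup\tau')\bigr)$, and your parity computation delivers charge $1$ at a chosen center $v$ only when $\rho$ assigns \emph{all} incident $y_P$, i.e.\ only when the small-grid node corresponding to $v$ lies in $\closure\bigl(\supp(\tau_\sigma\cup\tau'_\sigma)\bigr)$. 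You never establish that every chosen center falling in the big-grid closure has this property, and it is not "a routine case analysis against \cref{eq:def-sigma}": the two closures are computed in different graphs, and a chosen center could in principle lie in a small component of $G_{n_1}[\supp(\tau\cup\tau')^c]$ without its small-grid image being in a small component of $G_{n_2}[\supp(\tau_\sigma\cup\tau'_\sigma)^c]$. A clean fix is to first extend $\rho$, via \cref{lemma:evenok}, to an assignment satisfying the entire small-grid Tseitin system with a single far-away constraint dropped, lift that assignment, and then check that the dropped node's chosen center lies outside $\closure\bigl(\supp(\tau\cup\tau')\bigr)$; but as written this step is missing and the proof is incomplete.
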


\begin{proof}
  Consider a $b$-branch $\tau \in T$ pairwise locally consistent with
  $\sigma$. Since $\tau$ and $\sigma$ are pairwise consistent the
  decision tree $\restrict{T}{\sigma}$ contains the $b$-branch
  $\tau_\sigma$ as defined in \cref{prop:def-consistency-beta} of
  \cref{def:local-consistency-full}.

  As $T$ and $T'$ are functionally equivalent the decision
  tree~$\restrict{T'}{\tau}$ is a $b$-tree. Since for all
  $x \in \dom(\tau)$ it holds that
  $\tau_\sigma\bigl(\sigma(x)\bigr) = \tau(x)$, it follows that the
  decision tree
  $\restrict{\big(\restrict{T'}{\sigma}\big)}{\tau_\sigma}$ is also a
  $b$-tree. Here we rely on
  \cref{lem:local-consistent-branch-restricted} to argue that the
  decision tree~$\restrict{T'}{\sigma}$ restricted by~$\tau_\sigma$ is
  well defined. The statement follows.
\end{proof}

\subsection{Evaluations and Full Restrictions}
\label{sec:kevalbasic}

Given a $t$\nobreakdash-evaluation $\varphi$ for $\Gamma$ and a full
restriction~$\sigma$, we denote by $\restrict{\varphi}{\sigma}$ the
$t$\nobreakdash-evaluation for $\restrict{\Gamma}{\sigma}$ defined by
$\restrict{\varphi}{\sigma}(\restrict{F}{\sigma}) =
\restrict{\varphi(F)}{\sigma}$ for all $F \in \Gamma$.

Consider a Frege proof of depth $d$ and for a line $\nu$ in the proof
let us denote by $\Gamma^\nu$ the set of sub-formulas occurring on
line $\nu$. We intend to construct a sequence of full restrictions
$\sigma_1, \sigma_2, \ldots, \sigma_d$ with the following
property. Denote by $\sigma_k^*$ the concatenation of the first $k$
restrictions in the sequence and let $t(k)$ be a function growing
with~$k$ to be fixed -- it will depend on the application. From the
sequence of restrictions we require that all sub-formulas occurring in
the proof of depth at most $k$ have functionally equivalent
$t(k)$-evaluations after hitting them with the restriction
$\sigma_k^*$. In more detail, for every line $\nu$ we want a
$t(k)$-evaluation for the formulas in
\begin{align}
  \Gamma^\nu_k = \set{\restrict{F}{\sigma_k^*} \mid F \in \Gamma^\nu
  \wedge \depth(F) \le k}
\end{align}
and require that any pair of these $t(k)$-evaluations is functionally
equivalent. We construct these $t(k)$-evaluations by induction on
$k$. To ensure that the domain of the $t$-evaluations does not
decrease when we apply another restriction we rely on the following
lemma.

\begin{lemma}\label{lem:phiextend}
  Let $n_1, n_2, t \in \N$ such that $n_2 \leq n_1$ and
  $t \leq n_2/8$. Denote by $\varphi$ a $t$-evaluation defined over
  the set of formulas $\Gamma$, let $\varphi'$ be a $t$-evaluation
  defined over the set of formulas $\Gamma'$, and denote by
  $\sigma \in \Sigma(n_1,n_2)$ a full restriction. If $\varphi$ and
  $\varphi'$ are functionally equivalent, then
  $\restrict{\varphi}{\sigma}$ and $\restrict{\varphi'}{\sigma}$ are
  also functionally equivalent $t$-evaluations with domains
  $\dom\bigl(\restrict{\varphi}{\sigma}\bigr) =
  \restrict{\Gamma}{\sigma}$ and
  $\dom\bigl(\restrict{\varphi'}{\sigma}\bigr) =
  \restrict{\Gamma'}{\sigma}$.
\end{lemma}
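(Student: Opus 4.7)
The strategy is to verify directly that $\restrict{\varphi}{\sigma}$ satisfies the four conditions of \cref{def:evaluation}, then deduce functional equivalence between $\restrict{\varphi}{\sigma}$ and $\restrict{\varphi'}{\sigma}$ from \cref{lem:equivalence-sigma}. Throughout, the depth bound~$t \leq n_2/8$ together with \cref{lem:local-consistent-branch-restricted-sigma} ensures that for every $F\in\Gamma$ the restriction $\restrict{\varphi(F)}{\sigma}$ is a well-defined, locally consistent decision tree of depth at most~$t$, so the domain of $\restrict{\varphi}{\sigma}$ is indeed $\restrict{\Gamma}{\sigma}$ as claimed (and likewise for $\Gamma'$).

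I would verify \cref{prop:depth-0,prop:1-tree,prop:not} in that order, as they are essentially bookkeeping. Constants restrict to constants, so their $0$-depth trees remain $0$-depth. A variable~$x_e$ is mapped by~$\sigma$ either to a constant, in which case the depth-$1$ tree $\varphi(x_e)$ collapses to a depth-$0$ tree, or to a literal of a path variable $y_P$, in which case $\restrict{\varphi(x_e)}{\sigma}$ is the corresponding depth-$1$ query tree. For \cref{prop:1-tree}, the axioms of $\restrict{\tseitin(G_{n_1})}{\sigma}$ are exactly the axioms of $\tseitin(G_{n_2})$ as discussed in \cref{sec:restrictions}, and the restriction of any $1$-tree is again a $1$-tree because the trimming procedure only drops branches and never changes leaf labels. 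For \cref{prop:not}, negation of leaf labels commutes with the trimming operation defining $\restrict{\cdot}{\sigma}$, so $\restrict{\varphi(\lnot F)}{\sigma}$ coincides with $\restrict{\varphi(F)}{\sigma}$ with negated leaves.

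The main obstacle is \cref{prop:or}: given $F = \bigvee_{i\in[s]} F_i \in \Gamma$, I must show that $\restrict{\varphi(F)}{\sigma}$ represents $\bigvee_{i\in[s]} \restrict{\varphi(F_i)}{\sigma}$. Consider a $1$-branch $\tau_\sigma$ of $\restrict{\varphi(F)}{\sigma}$. By construction of the restriction, $\tau_\sigma$ arises from some $1$-branch $\tau$ of $\varphi(F)$ that is pairwise locally consistent with~$\sigma$, with $\tau_\sigma$ as in \cref{prop:def-consistency-beta} of \cref{def:local-consistency-full}. Since $\varphi(F)$ represents $\bigvee_i \varphi(F_i)$, there is an~$i$ such that $\restrict{\varphi(F_i)}{\tau}$ is a $1$-tree. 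Using that restrictions commute up to renaming through $\sigma$, namely $\restrict{\restrict{\varphi(F_i)}{\sigma}}{\tau_\sigma} = \restrict{\varphi(F_i)}{\tau}$ (after deleting queries to variables fixed by~$\sigma$), it follows that $\restrict{\restrict{\varphi(F_i)}{\sigma}}{\tau_\sigma}$ is a $1$-tree; here \cref{lem:local-consistent-branch-restricted,lem:local-consistent-branch-restricted-sigma} guarantee that all intermediate trees are well-defined. The $0$-branch direction is analogous: for each $i$ the tree $\restrict{\varphi(F_i)}{\tau}$ is a $0$-tree, and the same commutativity argument yields that $\restrict{\restrict{\varphi(F_i)}{\sigma}}{\tau_\sigma}$ is a $0$-tree.

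Finally, for functional equivalence of $\restrict{\varphi}{\sigma}$ and $\restrict{\varphi'}{\sigma}$, I would take a pair of isomorphic formulas $F\in\Gamma$ and $F'\in\Gamma'$ and note that $\restrict{F}{\sigma}$ and $\restrict{F'}{\sigma}$ remain isomorphic, since $\sigma$ acts on variables and preserves the $\vee$-structure up to the same reordering. By hypothesis $\varphi(F)$ and $\varphi'(F')$ are functionally equivalent locally consistent trees of depth at most~$t\leq n_2/8$, so \cref{lem:equivalence-sigma} applies and yields that $\restrict{\varphi(F)}{\sigma} = \restrict{\varphi}{\sigma}(\restrict{F}{\sigma})$ and $\restrict{\varphi'(F')}{\sigma} = \restrict{\varphi'}{\sigma}(\restrict{F'}{\sigma})$ are functionally equivalent, which is exactly the required condition. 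The only subtle point worth double-checking is that if two different formulas in $\Gamma$ happen to restrict to the same formula, the definition of $\restrict{\varphi}{\sigma}$ is consistent; this follows by applying the same functional-equivalence argument with $\varphi' = \varphi$.
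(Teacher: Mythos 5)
Your proof follows essentially the same approach as the paper: you verify the four conditions of \cref{def:evaluation} directly, with the only non-trivial step being \cref{prop:or}, and you derive functional equivalence from \cref{lem:equivalence-sigma}, using \cref{lem:local-consistent-branch-restricted-sigma} and \cref{lem:local-consistent-branch-restricted} exactly where the paper does. One minor imprecision: the claimed ``commutativity'' $\restrict{\restrict{\varphi(F_i)}{\sigma}}{\tau_\sigma} = \restrict{\varphi(F_i)}{\tau}$ is not an equality of decision trees (they are over different variable sets and the trimming rules are not literally inverse); what is actually true, and what the paper argues, is that restricting by $\sigma$ and then $\tau_\sigma$ sets the variables in $\dom(\tau)$ to the same constants as $\tau$ does, which suffices to conclude that a $b$-tree remains a $b$-tree. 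Your conclusion is the same, so the slip is cosmetic rather than a gap.
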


\begin{proof}
  Fix a formula $F \in \Gamma$ and let $T = \varphi(F)$. By definition
  we have that
  $\restrict{\varphi}{\sigma}(\restrict{F}{\sigma}) =
  \restrict{T}{\sigma}$.
  By \cref{lem:local-consistent-branch-restricted-sigma} we see that $\restrict{T}{\sigma}$ is a locally consistent decision
  tree with respect to the $n'\times n'$ grid and it holds that
  $\depth\bigl(\restrict{T}{\sigma}\bigr) \leq t$.
  We need to check that $\restrict{T}{\sigma}$ satisfies
  \cref{prop:depth-0,prop:1-tree,prop:not,prop:or} of
  \cref{def:evaluation}.
  \Cref{prop:depth-0,prop:1-tree,prop:not} are immediate since the
  process of a restriction neither depends on the labels of the leaves
  nor does it change them.

  We are left to show \cref{prop:or}. Suppose
  $F = \bigvee_{i\in[m]} F_i$, let $T_i = \varphi(F_i)$ and consider a
  $b$-branch $\tau$ in $T$ that is pairwise locally consistent with
  $\sigma$. Since $\varphi$ is a $t$-evaluation it holds that if
  $b=0$, then $\restrict{T_i}{\tau} = 0$ for all $i \in [m]$ and if
  $b=1$, then there is an $i \in [m]$ such that
  $\restrict{T_i}{\tau} = 1$.

  Since $\tau$ and $\sigma$ are locally consistent the decision tree
  $\restrict{T}{\sigma}$ contains the $b$-branch $\tau_\sigma$ as
  defined in \cref{prop:def-consistency-beta} of
  \cref{def:local-consistency-full}. Recall that restricting first by
  $\sigma$ and then $\tau_\sigma$ sets the variables in $\dom(\tau)$
  to the same constants as $\tau$ does, that is, for $x\in \dom(\tau)$
  we have that $\tau_\sigma\bigl(\sigma(x)\bigr) = \tau(x)$. Hence if
  $\restrict{T_i}{\tau}$ is a $b$-tree, then the decision tree
  $\restrict{T_i}{\sigma}$ under the restriction $\tau_\sigma$ is also
  a $b$-tree. For the last statement we relied on
  \cref{lem:local-consistent-branch-restricted}.  This yields
  \cref{prop:or}.
  
  The claimed functional equivalence follows from
  \cref{lem:equivalence-sigma}. This establishes the claim.
\end{proof}

The important step of the argument is to use a switching lemma to
extend the domain of the $t(k)$-evaluation from $\Gamma^\nu_k$ to
$\Gamma^\nu_{k+1}$. We give that argument in the next section.

\section{Proofs of the Main Theorems}\label{sec:main}

We first reprove the main theorem of \cite{jhtseitin} with improved
parameters.

\begin{theorem}\label{thm:main-size}
  For $d = O\bigl(\frac {\log n}{\log {\log n}}\bigr)$ it holds
  that any depth-$d$ Frege refutation of the Tseitin formula
  $\tseitin(G_n)$ with odd charges at all nodes of the $n \times n$
  grid requires size
  \begin{align*}
    \exp\Bigl(\Omega\bigl(n^{1/d}/\log^4 n\bigr)\Bigr) \eqperiod
  \end{align*}
\end{theorem}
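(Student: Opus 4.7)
Assume for contradiction a depth-$d$ Frege refutation of $\tseitin(G_n)$ of size $M \leq \exp(\varepsilon s)$, where $s = n^{1/d}/\log^4 n$ and $\varepsilon > 0$ is a sufficiently small absolute constant. Choose $D = \Theta(n^{1/d})$ and set $n_k = n/D^k$, ensuring $n_d \geq 16 s$. The plan is to iteratively sample full restrictions $\sigma_k \in \Sigma(n_{k-1}, n_k)$ for $k = 1, \ldots, d$ from the distribution of \cref{sec:distribution} with an appropriate (odd) number $\Theta(n_k^2 \log n)$ of live centers, while maintaining the invariant that after applying $\sigma^*_k = \sigma_1 \circ \cdots \circ \sigma_k$ every line $\nu$ of the refutation carries an $s$-evaluation $\varphi^\nu_k$ on the restricted sub-formulas of depth at most $k$ on $\nu$, with the collection $\{\varphi^\nu_k\}_\nu$ pairwise functionally equivalent across lines. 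Once the invariant holds at $k = d$, the entire refutation has an $s$-evaluation; since $s \leq n_d/16$, \cref{lem:noproof} rules out a derivation of $\bot$, contradicting the assumption.

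The base case $k = 0$ is immediate from \cref{prop:depth-0} of \cref{def:evaluation}, since depth-$0$ sub-formulas are constants and variables. For the inductive step, first apply \cref{lem:phiextend} to transport the family $\{\varphi^\nu_k\}_\nu$ through $\sigma_{k+1}$, obtaining functionally equivalent $s$-evaluations on $\restrict{\Gamma^\nu_k}{\sigma_{k+1}}$. Then extend to all depth-$(k+1)$ sub-formulas: for $F = \neg G$ one flips the leaf labels of $\varphi(G)$ (\cref{prop:not}), and for $F = \bigvee_i F_i$ the sub-formulas $F_i$ have depth at most $k$ and are represented by depth-$s$ decision trees $T_i$, which can be read as a width-$s$ DNF. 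The standard switching lemma of \cref{sec:singleswitch}, applied to this DNF and the random restriction $\sigma_{k+1}$, produces the required depth-$s$ representing decision tree $\calT$ except with probability $A^{\Omega(s)}$, where $A = O(\log^c n/D)$ for an absolute constant $c$.

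A union bound over the at most $M$ sub-formulas in the refutation yields total failure probability at most $M \cdot A^{\Omega(s)}$. In the range $d = O(\log n/\log \log n)$ we have $\log(1/A) = \Omega((\log n)/d)$, so $s \log(1/A)$ dominates $\log M \leq \varepsilon s$ when $\varepsilon$ is chosen small enough, and the union-bounded probability is strictly less than $1$. Hence restrictions $\sigma_1, \ldots, \sigma_d$ satisfying the invariant at $k = d$ exist, completing the argument. The main obstacle—and the reason for the improvement over \cite{jhtseitin}—is supplying the switching lemma with a failure probability whose base $A$ avoids the extra factor of $s$ present in the original; this is the role of the modified restriction space of \cref{sec:distribution} and the analysis carried out in \cref{sec:singleswitch}, and the polylogarithmic slack that ends up in $A$ is precisely what produces the $1/\log^4 n$ correction in the final exponent.
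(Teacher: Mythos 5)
Your overall strategy matches the paper's: inductively build a $t$-evaluation by applying the switching lemma $d$ times and then invoke \cref{lem:noproof}. However, the parameter bookkeeping breaks down in a way that cannot be fixed by tuning constants.

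First, your choice $D = \Theta(n^{1/d})$ with $n_k = n/D^k$ forces $n_d = n/D^d = \Theta(1)$, which is nowhere near the required $n_d \geq 16s \approx 16\,n^{1/d}/\log^4 n$. There is no choice of the implicit constant in $D = \Theta(n^{1/d})$ that makes $n_d \geq 16 s$. Second, you state the switching lemma failure base as $A = O(\log^c n / D)$, but \cref{lemma:switch} gives the base $O\bigl(t \log^4 n'/(n/n')\bigr)$, where $t$ is the depth of the \emph{input} trees. In your induction $t = s \approx n^{1/d}/\log^4 n$, and since the switching lemma hypothesis forces $D \geq A t\log^4 n' \approx n^{1/d}$, the base $t\log^4 n'/D$ is a \emph{constant}, not polynomially small. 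Consequently $\log(1/A) = \Omega((\log n)/d)$ is false; it is $\Theta(1)$. The union bound still closes (the paper gets $N^{-2}$ per application from an exponent of $\Theta(\log N)$), but this is a different calculation than the one you wrote, and it does not rescue the incompatible choice of $D$.

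The two constraints — $D \gtrsim s\log^4 n$ from the switching lemma hypothesis and $n/D^d \geq 16s$ from \cref{lem:noproof} — are jointly unsatisfiable for a \emph{uniform} per-step shrinkage, even approximately; that route gives only $\exp(\Omega(n^{1/(d+1)}/\log^4 n))$. The paper's key device, which your outline omits, is a \emph{non-uniform} schedule $n_i = \lfloor n_{i-1}/(4A\,t_{i-1}\log^4 n_{i-1})\rfloor$ with $t_0 = 1$: the first restriction shrinks the grid by only a polylogarithmic factor because depth-$1$ subformulas are literals, so the total shrinkage is $\approx (4A\log^4 n)^d s^{d-1}$ rather than $(4A\log^4 n)^d s^d$. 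This one-step saving is exactly what produces $n^{1/d}$ in the exponent instead of $n^{1/(d+1)}$, and it is the part your argument needs to incorporate to establish the theorem as stated.
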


As outlined in \cref{sec:kevalbasic} we construct a $t$-evaluation for
all sub-formulas occurring in a short and shallow Frege proof. By
\cref{lem:noproof} we then conclude that all shallow Frege proofs of
the Tseitin contradiction must be long. For the total size lower bound
we in fact do not create distinct $t$-evaluations per line but rather
a single one, used on each line. Such a $t$-evaluation is clearly
functionally equivalent and hence satisfies our needs. In order to
extend the $t$-evaluation to larger depth we use the following
switching lemma.

\begin{restatable}[Switching Lemma]{lemma}{singleSwitch}
  \label{lemma:switch}
  There are absolute constants~$A, C, n_0 > 0$ such that for
  integer~$n \geq n_0$ the following holds.
  Let~$k, m, n', s, t \in \N^+$ satisfy~$n/n' \geq A t \log^4 n$,
  $k = n'^2(1\pm 0.01)C\log n'$ be odd, and~$t \leq s \leq
  n'/32$. Then for any decision trees~$T_1, \ldots, T_m$ of depth at
  most~$t$ querying edges of the~$n \times n$ grid it holds that
  if~$\sigma \sim\calD_k\bigl(\Sigma(n,n')\bigr)$, then the
  probability that~$\bigvee_{i = 1}^m \restrict{T_i}{\sigma}$ cannot
  be represented by a decision tree of depth~$s$ is bounded by
  \begin{align*}
    \left(\frac{At\log^4 n'}{n/n'}\right)^{s/64} \eqperiod
  \end{align*}
\end{restatable}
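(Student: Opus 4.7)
The plan is to adapt Håstad's canonical decision tree / encoding argument from \cite{jhtseitin} to the current space of restrictions, with the modifications needed to (a) remove the factor of $s$ from the base of the exponent and (b) handle the ``OR of $m$ trees'' aspect of the statement.

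First I would define a canonical decision tree $T^\ast$ that represents $\bigvee_{i=1}^m \restrict{T_i}{\sigma}$ by a greedy procedure. Starting from the empty tree, at the current leaf with partial assignment $\tau_\sigma$ on the $n' \times n'$ grid, scan the list $T_1, \ldots, T_m$ in order and find the first $T_i$ such that $\restrict{(\restrict{T_i}{\sigma})}{\tau_\sigma}$ contains a locally consistent $1$-branch; append the leftmost such $1$-branch to the current leaf and iterate. Label the leaf $0$ if no $T_i$ has such a $1$-branch. By \cref{lem:local-consistent-branch-restricted} this procedure is well-defined as long as the constructed branch stays short, and the resulting $T^\ast$ represents $\bigvee_i \restrict{T_i}{\sigma}$ in the sense defined in \cref{sec:simple-dec-tree}. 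The bad event is that $T^\ast$ has some root-to-leaf branch $\beta$ of length exceeding $s$.

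Next I would set up a pointwise encoding. Fix such a bad branch $\beta$ and decompose it as $\beta = \pi_1 \pi_2 \cdots \pi_r$, where each $\pi_j$ is the $1$-branch appended from some tree $T_{i_j}$; since each $T_i$ has depth at most $t$, each $\pi_j$ has length at most $t$, and $\sum_j |\pi_j| \geq s$. The key move is to convert $\sigma$ into a ``more restrictive'' $\sigma' \in \Sigma(n,n')$ that commits additional structure on the sub-squares touched by $\beta$: for each path variable $y_P$ that $\tau_\sigma$ queries along $\beta$, I fix $y_P$ to the value indicated by $\pi_j$, and correspondingly update $\sigma_0$ and (where needed) the chosen center of the relevant sub-squares, in a way reversible from a short auxiliary string. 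That string records, per segment $\pi_j$, the index $i_j$ (or a pointer within the current $T_{i_j}$), which variables in $\pi_j$ were the ones that witnessed a query on a new path of the smaller grid, the values taken, and enough local data to re-identify the chosen center of the involved sub-squares. Together $\sigma'$ and the auxiliary string determine $\sigma$ uniquely.

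The counting then goes as in \cite{jhtseitin}: the number of $\sigma' \in \Sigma_{\calQ,\calR}(n,n')$ compatible with a fixed ``revealed'' pattern on $\beta$ is smaller than $|\Sigma(n,n')|$ by a factor of order $((n/n')/(t\log^4 n'))$ per queried variable, while the auxiliary string costs only a constant number of bits per queried variable, using that $k = (1\pm 0.01)n'^2 C \log n'$ forces roughly $\log n'$ candidate alive centers per sub-square and hence $O(\log^4 n)$-sized ``address space'' when locating a center/row/column. Summing over the at most $2^s$ prefix patterns of $\beta$ yields the claimed bound, where the denominator $76$ is the ratio of information gained per queried variable in $\sigma'$ versus information spent in the auxiliary string. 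The main obstacle will be the encoding's bookkeeping: unlike an ordinary restriction, $\sigma$ both assigns values \emph{and} identifies edge variables with path variables $y_P$, so exposing a query on $y_P$ simultaneously forces decisions about the chosen centers, the designated row/column of the associated path, and the final values of the other edges on $P$; ensuring that $\sigma'$ remains in $\Sigma(n,n')$ and that the induced assignment remains locally consistent on the $n'\times n'$ grid (needed so that the canonical tree construction was even well-defined at each step) is the delicate part that the modified center/path scheme from \cref{sec:restrictions} is designed to make go through.
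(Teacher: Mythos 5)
Your high-level plan — build a canonical decision tree for $\bigvee_i \restrict{T_i}{\sigma}$, then invert a Razborov-style encoding using the shallow trees $T_i$ plus a short auxiliary string — is indeed the skeleton of the paper's argument. But there is a concrete gap in the encoding step that the paper itself explicitly flags and works around, and your proposal runs straight into it.

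You propose to map $\sigma$ to a ``more restrictive'' $\sigma'$ that stays inside $\Sigma(n,n')$ by fixing values of the queried path variables $y_P$ and ``correspondingly updat[ing] $\sigma_0$ and (where needed) the chosen center of the relevant sub-squares.'' This cannot give the required counting bound. By \cref{lemma:evenok}, the number of assignments $\sigma_0$ satisfying the Tseitin formula with any fixed charge pattern of the right overall parity is exactly $2^{r_n}$, independent of which nodes get charge $0$; so moving even charges around within the class of full restrictions does not shrink the count. Worse, as the paper notes in \cref{sec:switch-overview}, modifying $\sigma_0$ to have an even charge at ``all chosen centers except two freely chosen ones'' actually produces a \emph{larger} set than the original. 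The whole point of introducing partial restrictions $\rho$ (\cref{sec:partial-restriction}) is that they decouple the ``which centers are alive'' choice from the ``one chosen center per sub-square'' constraint: the spaces $\spaceRho(k',n,n')$ genuinely shrink as $k'$ decreases, and \cref{lem:final-counting} quantifies this via the binomial ratio $\binom{m}{k}/\binom{m}{k-2s}$, which is where the $\bigl(\Delta/(C\log n')\bigr)^{\Omega(s)}$ gain comes from. The paper's encoding maps $\rho$ to a $\rho^*$ with strictly fewer alive centers — an object living in a \emph{different}, smaller space, not a subset of $\Sigma(n,n')$. Your proposal has no analogue of this and so has no source for the exponential-in-$s$ factor in the denominator.

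A second, smaller point: your canonical tree queries only the variables on the found $1$-branch. The paper's ``extended canonical decision tree'' instead queries all path variables incident to a set $S_J$ of exposed chosen centers built from a minimal \emph{possible forcing information} $J$, closing the information set at every exposed center. This over-querying is not cosmetic: it maintains \cref{inv:ext-can-dec-tree} (closedness of $I$ on $S$, component-completeness with respect to the pairing $\pi$, disjointness of the $J_j$'s established in \cref{lem:J-disjoint}), and these invariants are precisely what allow the inversion to recover each $J_j$ and reconstruct $\rho$ from $\rho^*$ using only $a\log t + b\log\Delta + O(s)$ bits (\cref{lemma:encode}). Without the information-piece / forcing / signature machinery, ``short auxiliary string'' remains an aspiration rather than a proof, because the first $1$-branch traversed by the modified restriction need not be the one actually used in the construction, and you have no mechanism (signatures, conflicts) to detect and skip the wrong candidates cheaply.
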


The proof of \cref{lemma:switch} is given in
\cref{sec:singleswitch}. Let us verify that \cref{thm:main-size}
indeed follows from \cref{lemma:switch}.

\begin{proof}[Proof of \cref{thm:main-size}]
  Suppose we have a refutation of
  size~$N \leq \exp \bigl(n^{1/d}/c_1 \log^4 n\bigr)$ for some large
  constant~$c_1 > 0$. Denote by~$\Gamma$ the set of sub-formulas
  occurring in this alleged proof. We proceed by induction
  on~$i= 0, 1, 2, \ldots, d-1$ as follows.

  Assume by induction that we are given a sequence of
  restrictions~$\sigma_1, \sigma_2, \ldots, \sigma_{i-1}$, denote
  by~$\sigma_{i-1}^*$ the composition thereof, and suppose that we
  have a $t_i$-evaluation~$\varphi_{i}$ defined on all formulas
  \begin{align}
    \Gamma_i =
    \{
    \restrict{F}{\sigma_{i-1}^*}
    \mid
    F \in \Gamma \text{~and~} \depth(F) \leq i
    \}
  \end{align}
  of original depth at most~$i$. Sample a full
  restriction~$\sigma_{i} \in \calD_k\bigl(\Sigma(n_{i},
  n_{i+1})\bigr)$ and extend~$\varphi_i$ to a
  $t_{i+1}$\nobreakdash-evaluation~$\varphi_{i+1}$ defined on all
  formulas~$\Gamma_{i+1}$ of original depth at most~$i+1$.

  Let us implement the above plan. We choose~$t_0 = 1$ and~$n_0 = n$,
  set~$s= 152 \log N$, and let~$t_i = s$
  and~$n_{i} = \lfloor n_{i-1}/4At_{i-1}\log^4 n_{i-1}\rfloor$
  for~$i \in [d]$. The constant~$A$ is chosen as in
  \cref{lemma:switch}.

  Initially, for~$i=0$, each formula is either a variable which is
  mapped by~$\varphi_0$ to the depth~$1$ decision tree evaluating it
  or a constant which is mapped to the appropriate depth~$0$ decision
  tree. For the inductive step we may assume that we have a
  $t_i$-evaluation~$\varphi_i$ for all formulas in~$\Gamma_i$.

  Sample~$\sigma_{i} \in \calD_k\bigl(\Sigma(n_{i},
  n_{i+1})\bigr)$. We need to extend the~$\varphi_i$ to a
  $t_{i+1}$-evaluation~$\varphi_{i+1}$ defined on all
  formulas~$\Gamma_{i+1}$ of original depth at most~$i+1$. Consider
  any such formula~$F \in \Gamma$. We define~$\varphi_{i+1}$ as
  follows.
  \begin{enumerate}
  \item If~$F$ is of depth at most~$\depth(F) \leq i$, then we
    let~$\varphi_{i+1}(\restrict{F}{\sigma_i^*}) =
    \restrict{\varphi_i(\restrict{F}{\sigma_{i-1}^*})}{\sigma_i}$. By
    \cref{lem:phiextend} the restricted
    $t_i$-evaluation~$\restrict{\varphi_i}{\sigma_i}$ is defined on
    such formulas.

  \item If~$F = \neg F'$ is of depth~$i+1$,
    then~$\varphi_{i+1} \bigl(\restrict{F}{\sigma^*_i}\bigr)$ is
    defined in terms
    of~$\varphi_{i+1} \bigl(\restrict{F'}{\sigma^*_i}\bigr)$ by
    negating all the leaf labels.

  \item If~$F = \bigvee_j F_j$ is of depth~$i+1$ and each~$F_j$ is of
    depth at most~$\depth(F_j) \leq i$, then we appeal
    to~\cref{lemma:switch} to obtain a decision tree~$T$ of
    depth~$\depth(T) \leq t_{i+1}$ representing~$\bigvee_j T_j$
    for~$T_j =  \varphi_i(\restrict{F_j}{\sigma^*_{i-1}})$.\label{it:switch}\end{enumerate}
  The only place where the extension might fail is in
  \cref{it:switch}. By \cref{lemma:switch} and our choice of
  parameters we see that the failure probability is bounded
  by
  \begin{align}
    \left(
    \frac{At_i\log^4 n_{i+1}}
    {n_i/n_{i+1}}
    \right)^{s/76}
    \leq
    \left(
    \frac{At_i\log^4 n_{i+1}}
    {2 At_i\log^4 n_{i}}
    \right)^{2 \log N}
    \leq
    N^{-2}
    \eqperiod
  \end{align}
  We union bound over at most~$N$ sub-formulas and thus succeed with
  probability~$1-N^{-1}$ in every step.

  This completes the induction and we thus obtain a refutation of the
  formula~$\tseitin(G_{n_d})$ with a~$t_{d}$-evaluation~$\varphi_{d}$.
  Note that~$n_d \geq n/\log^{d-1}(N)(c_2\log^{4} n)^{d}$ for some
  constant~$c_2>0$.  On the other hand we have~$t_d = 152 \log N$.
  Thus if~$\log N \leq n^{1/d}/c_1 \log^4 n$ for
  constant~$c_1 > 0$ large enough, then we get a contradiction to
  \cref{lem:noproof}. The claimed lower bound follows.
\end{proof}

We turn our attention to the main result of the paper.

\begin{theorem}\label{thm:main-line-length}
  For any Frege refutation of the Tseitin formula~$\tseitin(G_n)$ with
  odd charges at all nodes of the~$n \times n$ grid the following
  holds. If each line of the refutation is of size~$M$ and depth~$d$,
  then the number of lines in the refutation is
  \begin{align*}
  \exp
  \left(
    \Omega
    \left(
      \frac
      {n}
      {\big((\log n)^{O(1)} \log M\big)^{d}}
    \right)
  \right)
  \eqperiod
  \end{align*}
\end{theorem}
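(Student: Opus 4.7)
The proof plan follows the inductive scheme of \cref{thm:main-size}, but modified in two essential ways following the idea of \cite{PRT21}. First, rather than sharing a single $t$-evaluation across the proof, each line $\nu$ would receive its own $t$-evaluation $\varphi_i^\nu$; by \cref{def:consistency-evaluation} and \cref{lem:noproof} these need only be pairwise functionally equivalent to derive the eventual contradiction. Second, instead of the single-formula \cref{lemma:switch}, I would invoke a multi-switching lemma applied simultaneously to all top-level sub-formulas of a single line, yielding an $\ell$-common partial decision tree of small depth. The critical payoff is that the union bound runs over $Ld$ events (one per line per depth) instead of over all sub-formulas across the entire proof, which is what allows $\ell$ to be taken as small as $\Theta(\log M)$.

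Quantitatively, I would suppose for contradiction a refutation with $L = \exp\bigl(c n/((\log n)^{O(1)} \log M)^d\bigr)$ lines of size $M$ and depth $d$, choose $\ell = \Theta(\log M)$ and a common tree depth $s$ large enough that the multi-switching failure event beats $1/(Ld)$, set $n_0 = n$ and $n_{i+1} = \lfloor n_i/(A\cdot O(\ell)\cdot\log^4 n_i)\rfloor$ so that $n_d = \Omega\bigl(n/((\log n)^{O(1)}\log M)^d\bigr)$, and sample $\sigma_i \sim \calD_{k_i}\bigl(\Sigma(n_{i-1}, n_i)\bigr)$ with composition $\sigma_i^*$. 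The inductive invariant is that for every line $\nu$ there is a $t$-evaluation $\varphi_i^\nu$ on the depth-at-most-$i$ sub-formulas of $\nu$ restricted by $\sigma_{i-1}^*$, and that these are pairwise functionally equivalent.

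For the inductive step from $i$ to $i{+}1$, \cref{lem:phiextend} directly handles sub-formulas of depth $\leq i$ under the new restriction $\sigma_i$. For a depth-$(i{+}1)$ sub-formula $F = \lor_j F_j$ on line $\nu$, apply the multi-switching lemma to the family $\{\varphi_i^\nu(\restrict{F_j}{\sigma_{i-1}^*})\}_{F,j}$ ranging over all such depth-$(i{+}1)$ formulas on line $\nu$. With our parameters it returns, with probability at least $1 - 1/(Ld)$, a common partial decision tree $\calT^\nu$ of depth $s$ and depth-$\ell$ trees $T(F,\tau)$ at each leaf $\tau$ of $\calT^\nu$, such that attaching them yields a decision tree of depth $s+\ell$ representing $\restrict{F}{\sigma_i^*}$; I would define $\varphi_{i+1}^\nu(\restrict{F}{\sigma_i^*})$ as this concatenation. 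Even though $\calT^\nu \neq \calT^{\nu'}$ for distinct lines, the two resulting representations of any shared sub-formula compute the same Boolean function on locally consistent inputs, so functional equivalence across lines follows from \cref{lem:loc-cons-b-tree}. A union bound over at most $Ld$ multi-switching events leaves positive probability that all succeed simultaneously, so a good restriction sequence $\sigma_1, \ldots, \sigma_d$ exists.

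The main obstacle is of course proving the multi-switching lemma itself, which is deferred to \cref{sec:multi-switch} and is where the genuine technical innovation over the single-switch argument of \cref{sec:singleswitch} lies. Within the theorem reduction proper, the remaining delicate point is verifying that the parameters compose: at the terminal stage $n_d \geq \Omega\bigl(n/((\log n)^{O(1)}\log M)^d\bigr)$ must still exceed $16(s+\ell)$ so that \cref{lem:noproof} applies to the resulting $(s+\ell)$-evaluation of the refutation of $\tseitin(G_{n_d})$, forcing every line — in particular the terminal $\bot$ — to be mapped to a $1$-tree and thereby contradicting the fact that $\bot$ is mapped to a $0$-tree. The calibration of $L$ in the assumed bound is precisely what makes these quantitative demands mutually consistent.
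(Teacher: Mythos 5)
Your high-level plan correctly identifies the key moves from \cite{PRT21}: per-line $t$-evaluations that need only be pairwise functionally equivalent, the multi-switching lemma in place of the single-formula switching lemma, and a per-line union bound. You also correctly intuit that the dimension should shrink by a factor $\Theta(\log M\cdot\log^{O(1)}n)$ per depth level, which is exactly what yields the claimed exponent.

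However, there is a genuine structural gap: you do not maintain the accumulated common partial decision tree $\calT_\eta(\nu)$ across stages, and without it the argument fails quantitatively. In your scheme, $\varphi_{i+1}^\nu$ maps depth-$(i{+}1)$ formulas to decision trees of depth $s+\ell$. At stage $i{+}2$ you would then feed these depth-$(s{+}\ell)$ trees into the multi-switching lemma, whose constraint $n_i/n_{i+1}\geq At\log^{c_1}n_i$ and failure bound $\left(At\log^{c_1}n/(n/n')\right)^{s/c_2}$ both require $t$ to equal the depth of the \emph{input} trees. So $t$ grows to $\Theta(s+\ell)\approx\Theta(\log L)$, not $\Theta(\log M)$, and the dimension shrinks by a factor $\Theta(\log L\cdot\log^{O(1)}n)$ per stage. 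In the interesting regime $L\gg M$ this collapses the lower bound to something much weaker than claimed (roughly $\log L\lesssim n^{1/(d+1)}$ rather than $\log L\lesssim n/(\log M\cdot\log^{O(1)}n)^d$). The paper avoids this by carrying, alongside $\varphi^\nu_\eta$, a $t$-common partial decision tree $\calT_\eta(\nu)$ of depth $\sum_{i\le\eta}s_i$ that is common to \emph{all} decision trees $\varphi^\nu_\eta(F)$ — including those of smaller original depth, which is preserved under \cref{lem:phiextend} since restrictions only shrink trees. Then at stage $\eta{+}1$ the multi-switching lemma is applied \emph{once per branch $\tau$ of $\calT_{\eta}(\nu)$} to the restricted trees $\restrict{\varphi^\nu_{\eta}(F_i^j)}{\tau}$, which have depth at most $\ell=\log M$, so the $t$ parameter stays $\log M$ throughout. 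The price is a union bound over $N\cdot 2^{\sum_{i<\eta}s_i}\le N^{2^\eta}$ applications per stage rather than your $Ld$ total; the paper calibrates the geometric growth $s_\eta=2^{\eta-1}\log N$ precisely so that the per-application failure probability $N^{-2^{\eta+1}}$ beats this blow-up. Your $Ld$-event union bound and your non-growing common trees are two symptoms of the same missing ingredient.
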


Note that the lower bounds obtained from \cref{thm:main-line-length}
are much stronger than the lower bounds obtained from
\cref{thm:main-size} if the line-size~$M$ is bounded. For example,
if~$M = O\bigl(n^{\polylog(n)}\bigr)$ and the depth~
$d = o\bigl(\frac{\log n}{\log \log n}\bigr)$, then we obtain a length
lower bound of~$\exp\bigl(n^{1-o(1)}\bigr)$. This lower bound is
essentially optimal since there is a resolution refutation of
length~$2^{O(n)}$.

The strategy of the proof is similar to the proof of
\cref{thm:main-size}: we again build a $t$-evaluation for a supposed
Frege proof. The main difference is that instead of creating a single
$t$-evaluation for the entire proof we in fact independently create
$t$-evaluations for each line. These $t$-evaluations turn out to be
functionally equivalent, as defined in
\cref{def:consistency-evaluation}. We obtain the claimed bounds by an
appeal to \cref{lem:noproof}.

Suppose we are given a Frege refutation of the Tseitin principle
defined over the~$n \times n$ grid consisting of~$N$ lines, where each
line is a formula of size~$M$ and depth~$d$. Denote by~$\Gamma^\nu$
the set of sub-formulas of line~$\nu\in [N]$ in the proof. We
construct a sequence of
restrictions~$\sigma_1, \sigma_2, \ldots, \sigma_d$ such that all
formulas of depth at most~$\eta \in [d]$ have functionally
equivalent~$t(\eta)$-evaluations if hit by the concatenation
$\sigma_\eta^*$ of the first~$\eta$ restrictions in the sequence,
where~$t(\eta)$ is some function dependent on~$\eta$ to be fixed
later. That is, for every line~$\nu$ we have a~$t(\eta)$-evaluation
$\varphi_\eta^\nu$ for all formulas in the set
\begin{align}
  \Gamma^\nu_\eta = \set{\restrict{F}{\sigma_\eta^*} \mid
  F \in \Gamma^\nu \wedge \depth(F) \le \eta},
\end{align}
and all these~$t(\eta)$-evaluations are functionally equivalent. In
addition to these~$t(\eta)$-evaluations, for each line~$\nu$ we also
maintain a decision tree~$\calT_\eta(\nu)$. We maintain the property
that~$\calT_\eta(\nu)$ is a~$t$-common partial decision tree for
all~$t(\eta)$-evaluations~$\varphi^\nu_\eta(\Gamma_\eta^\nu)$ of
bounded depth.

These common partial decision trees~$\calT_\eta(\nu)$ are useful to
extend the~$t(\eta)$-evaluations~$\varphi_\eta^\nu$ to larger
depths. In each such step, increasing~$\eta$, we apply for each
branch~$\tau$ from~$\calT_\eta(\nu)$ the following multi-switching
lemma to the set of decision
trees~$\restrict{\varphi^\nu_\eta(\Gamma_\eta^\nu)}{\tau}$ of
depth at most~$t(\eta)$. We then extend~$\calT_\eta(\nu)$ in each
leaf~$\tau$ by the the common partial decision tree from the lemma to
obtain $\calT_{\eta+1}(\nu)$ of slightly larger depth.

\begin{restatable}[Multi-switching Lemma]{lemma}{multiSwitch}
  \label{lemma:multiswitch}
  There are absolute constants~$A, c_1, c_2, n_0 > 0$ such that for
  integer~$n \geq n_0$ the following holds.
  Let~$k, M,n',s,t \in \N^+$ satisfy~$n/n' \geq At\log^{c_1} n$,
  $k = n'^2(1\pm 0.01)C\log n'$ be odd, and~$t \leq s \leq n'/32$.
  For~$m_1, \ldots, m_M \in \N^+$ and any decision trees~$T_i^j$ of
  depth at most~$t$, where~$j \in [M]$ and~$i \in [m_j]$, it holds
  that if~$\sigma\sim\calD_k\bigl(\Sigma(n,n')\bigr)$, then the
  probability
  that~$\big(\restrict{\bigvee_{i=1}^{m_j}
    T_i^j}{\sigma}\big)_{j=1}^M$ cannot be represented by
  an~$\ell$-common partial decision tree of depth~$s$ is bounded by
  \begin{align*}
    M^{s/\ell} \left(\frac{At \log^{c_1} n}{n/n'}\right)^{s/c_2}
    \eqperiod
  \end{align*}
\end{restatable}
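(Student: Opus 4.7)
The plan is to adapt the iterative ``greedy'' approach to multi-switching lemmas, as introduced in \cite{jhmultiswitch} and refined in \cite{PRT21}, to the space of Tseitin-style restrictions from \cref{sec:full-restriction}. I would build the $\ell$-common partial decision tree $\calT$ one extension at a time as follows. Process the disjunctions $F_j = \bigvee_{i=1}^{m_j} \restrict{T_i^j}{\sigma}$ in some fixed order $j = 1, 2, \ldots, M$. For each current leaf $\tau$ of the partial tree $\calT$ built so far and for each $j$, run the canonical decision-tree-constructor underlying the proof of \cref{lemma:switch} on the sequence $(\restrict{T_i^j}{\sigma \cup \tau})_{i}$, attempting to produce a representation of depth $\ell$. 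If the construction succeeds for every $j$ at $\tau$, we are done at that leaf. If it fails at some $j$, a root-to-leaf path of length greater than $\ell$ is produced inside the canonical process; truncate this path just past length $\ell$, append it to $\calT$ at $\tau$, and restart the whole procedure at the new leaves. Since each extension step queries more than $\ell$ variables, there can be at most $s/\ell$ extension steps along any branch of $\calT$ before the depth bound $s$ is met.

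To bound the failure probability, I would use an encoding argument, exactly as in [jhmultiswitch] adapted to the Tseitin framework of [jhtseitin]. A restriction $\sigma$ is \emph{bad} when the above process produces a partial tree of depth exceeding $s$. For each bad $\sigma$ and each branch of $\calT$ witnessing the failure I record (i) the sequence $j_1, j_2, \ldots, j_r$ of formula indices triggering the $r \leq s/\ell$ extensions along that branch, and (ii) for each extension, the encoding of the corresponding long canonical path, which is precisely the type of encoding already set up in the proof of \cref{lemma:switch}. The index sequence contributes the factor $M^{s/\ell}$, and the concatenation of the $r$ long-branch encodings behaves as a single long-branch encoding of total length $s$, contributing the factor $\bigl(A t \log^{c_1} n / (n/n')\bigr)^{s/c_2}$ by (essentially) concatenating the probability bound of \cref{lemma:switch}.

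The correctness of the bound rests on showing that the encoding is injective: given the sequence of $(j_k)$'s, the long-branch encodings, and the values at the queried edges, one can replay the canonical constructor and recover $\sigma$ (and the trees $\restrict{T_i^j}{\sigma}$ along the way). This is exactly the role played by the ``advice bits'' in \cite{jhtseitin}: at each long-branch step, a few bits suffice to identify, locally and independently of the global state of $\sigma$, which variable is being queried next and how the induced assignment on the smaller grid looks. Local consistency of all decision trees involved is preserved by \cref{lem:local-consistent-branch-restricted,lem:local-consistent-branch-restricted-sigma} throughout, so the iterative construction stays well-defined provided $s \leq n'/32$, which is assumed.

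The main technical obstacle is to make the per-extension encoding truly \emph{local}, so that the encodings of the $r$ long branches along a single branch of $\calT$ can be concatenated and decoded independently, yielding a clean product bound. The key point to check is that once we have committed to $\calT$'s current partial tree (i.e., fixed the values of $\sigma$ on the path variables queried so far), the remaining randomness of $\sigma$ on the untouched portion of the grid still follows the distribution $\calD_k$ on a smaller effective instance, so that each extension step is governed by a bound structurally identical to that of \cref{lemma:switch}. Once this locality is in place, the union bound over the at most $M^{s/\ell}$ choices of index sequences, together with the product of single-switching-style bounds totalling length $s$, yields the claimed failure probability.
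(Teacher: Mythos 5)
Your high-level plan matches the paper's: process the $F_j$'s in order, extend the common tree $\calT$ whenever some $F_j$ fails to switch to depth $\ell$ at a leaf of $\calT$, record the index $j$ at cost $\log M$ per round, and encode the long canonical branch via the same mechanism as in the single switching lemma. This is indeed what the paper does in \cref{sec:cpdt,sec:multi-proof}.

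However, the obstacle you name at the end — that ``once we have committed to $\calT$'s current partial tree, the remaining randomness of $\sigma$ still follows the distribution $\calD_k$ on a smaller effective instance, so that each extension step is governed by a bound structurally identical to \cref{lemma:switch}'' — is not the right thing to try to prove, and the paper does not attempt it. Conditioning the Tseitin-style restriction $\sigma$ on answers to a few path variables does not recover a clean $\calD_k$ distribution on a sub-grid, and trying to force this would add real difficulty. The paper sidesteps conditioning entirely by running one \emph{global} encoding argument: the accumulated bookkeeping (exposed centers $S$, information set $I$) carries over from round to round, all forcing informations $J$ across all rounds are jointly supported on disjoint centers (the multi-round analogue of \cref{lem:J-disjoint}), and $\rho$ is mapped once to a single $\rho^*$ with proportionally fewer alive centers; the counting is then done against $\lvert\spaceRhoReg(k,n,n')\rvert$ exactly as in \cref{lemma:encode,lem:final-counting}. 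This is why one should ``concatenate encodings'' rather than ``concatenate probability bounds'' — the latter would require the conditional-distribution property you are worried about, the former does not.

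There is also a small but necessary technical step you omit: when the extended canonical decision tree $T^j$ is built inside a round, the query answers used to reach the long branch $\lambda^j$ need not agree with the answers actually recorded on the new leaf $\tau'$ of $\calT$; the paper explicitly ``forgets'' those intermediate answers and, in the inversion, reads the per-variable differences between the queries on $\lambda^j$ and the queries on $\tau'$. Without these difference bits the decoding cannot proceed to the next round. These differences contribute only $O(1)$ bits per queried variable and are absorbed into the constant base of the bound, but they must be accounted for.
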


We defer the proof of \cref{lemma:multiswitch} to
\cref{sec:multi-switch}. In the following we explain how
\cref{thm:main-line-length} follows from \cref{lemma:multiswitch}.

We apply \cref{lemma:multiswitch} with mostly the same parameters. Let
us fix these. We choose~$\ell = t = \log M$, let~$n_0 = n$ and
set~$n_{\eta} = \lfloor n_{\eta-1}/A_1\cdot t \cdot\log^{c_1}
n_{\eta-1} \rfloor$ for~$\eta \in [d]$ and a sufficiently large
constant~$A_1$. The parameter~$s$ depends on~$\eta$ and is fixed
to~$s = s_\eta = 2^{\eta-1} \log N$. With these parameters in place we
can finally also
fix~$t(\eta) = \sum_{i \le \eta} s_i + \log M \le 2^\eta \log N + \log
M$.

\begin{lemma}\label{lem:induction}
  Suppose that for every line~$\nu \in [N]$ we have functionally
  equivalent~$t(\eta-1)$-evaluations~$\varphi^\nu_{\eta-1}$ for
  formulas in~$\Gamma^\nu_{\eta-1}$ along with a~$t$-common partial
  decision tree~$\calT_{\eta-1}(\nu)$
  for~$\varphi^\nu_{\eta-1}(\Gamma_{\eta-1}^\nu)$ of
  depth~$\sum_{i < \eta} s_i$.
  Suppose that~$t(\eta) \le n_{\eta}/16$.
  For~$\sigma_\eta \sim
  \calD_k\bigl(\Sigma(n_{\eta-1},n_{\eta})\bigr)$ with
  probability~$1 - N^{-1}$, for every line~$\nu \in [N]$ there are
  functionally equivalent
  $t(\eta)$-evaluations~$\varphi^\nu_{\eta}$ for formulas
  in~$\Gamma^\nu_\eta$ and a $t$-common partial decision
  tree~$\calT_\eta(\nu)$
  for~$\varphi^\nu_\eta(\Gamma_{\eta}^\nu)$ of
  depth~$\sum_{i \le \eta} s_i$.
\end{lemma}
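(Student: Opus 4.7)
The plan is to sample~$\sigma_\eta \sim \calD_k\bigl(\Sigma(n_{\eta-1},n_\eta)\bigr)$ and, for each line~$\nu \in [N]$, build~$\varphi^\nu_\eta$ and~$\calT_\eta(\nu)$ in two phases: first I would propagate the inductively-given evaluation through the fresh restriction~$\sigma_\eta$, and then I would extend it to depth-$\eta$ sub-formulas by invoking the multi-switching lemma on each leaf of the old common partial decision tree.

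For the first phase I would appeal to \cref{lem:phiextend} with the evaluation~$\varphi^\nu_{\eta-1}$ and the restriction~$\sigma_\eta$; since~$t(\eta-1) \leq t(\eta) \leq n_\eta/16 \leq n_\eta/8$ this is legal, and it yields a functionally equivalent family of~$t(\eta-1)$-evaluations~$\restrict{\varphi^\nu_{\eta-1}}{\sigma_\eta}$ defined on~$\restrict{\Gamma^\nu_{\eta-1}}{\sigma_\eta}$. For formulas of original depth~$\leq \eta-1$ I would simply set~$\varphi^\nu_\eta = \restrict{\varphi^\nu_{\eta-1}}{\sigma_\eta}$. In parallel, \cref{lem:local-consistent-branch-restricted-sigma} ensures that~$\restrict{\calT_{\eta-1}(\nu)}{\sigma_\eta}$ is a locally consistent decision tree of depth at most~$\sum_{i < \eta}s_i$, and it is straightforward to see that it remains~$t$-common for the restricted evaluation on these formulas.

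For the second phase I would fix a line~$\nu$ and a leaf~$\tau$ of~$\calT_{\eta-1}(\nu)$. For every depth-$\eta$ sub-formula~$F = \bigvee_i F_i$ appearing on line~$\nu$, the~$t$-common property of~$\calT_{\eta-1}(\nu)$ guarantees that the trees~$T^{F,\tau}_i := \restrict{\varphi^\nu_{\eta-1}\bigl(\restrict{F_i}{\sigma^*_{\eta-1}}\bigr)}{\tau}$ have depth at most~$t = \ell = \log M$. Treating each such~$F$ as one OR in \cref{lemma:multiswitch}, applied with~$s = s_\eta$ to the at most~$M$ ORs for line~$\nu$, produces with high probability over~$\sigma_\eta$ an~$\ell$-common partial decision tree of depth~$s_\eta$ that simultaneously represents~$\restrict{\bigvee_i T^{F,\tau}_i}{\sigma_\eta}$ for every such~$F$. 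Gluing this extension onto every leaf of~$\restrict{\calT_{\eta-1}(\nu)}{\sigma_\eta}$ yields~$\calT_\eta(\nu)$ of depth~$\sum_{i \le \eta}s_i$, and attaching the depth-$\ell$ representing trees at its leaves defines~$\varphi^\nu_\eta(\restrict{F}{\sigma^*_\eta})$ for each new depth-$\eta$ sub-formula~$F$, with overall depth bounded by~$t(\eta)$.

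The hard part will be the union bound. Each invocation of \cref{lemma:multiswitch} fails with probability at most~$M^{s_\eta/\ell}\bigl(At\log^{c_1}n_{\eta-1}/(n_{\eta-1}/n_\eta)\bigr)^{s_\eta/c_2}$, and it must succeed for each of the at most~$N \cdot 2^{\sum_{i < \eta}s_i}$ pairs~$(\nu,\tau)$. The choices~$s_\eta = 2^{\eta-1}\log N$ and~$n_\eta \le n_{\eta-1}/(A_1 t \log^{c_1} n_{\eta-1})$ with~$A_1$ sufficiently large make the geometric growth of~$s_\eta$ dominate the~$2^{\sum_{i < \eta}s_i}$ branching factor, pushing the total failure probability below~$N^{-1}$. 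What remains is routine verification: \cref{prop:depth-0,prop:1-tree,prop:not} of \cref{def:evaluation} are inherited from~$\varphi^\nu_{\eta-1}$, \cref{prop:or} follows from the representation guarantee of the multi-switching lemma, and functional equivalence across lines is preserved because isomorphic depth-$\eta$ sub-formulas are mapped to decision trees representing ORs of pairwise functionally equivalent depth-$(\eta-1)$ trees, which upgrades to functional equivalence by an argument in the spirit of \cref{lem:loc-cons-b-tree}.
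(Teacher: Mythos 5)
Your proposal is correct and follows essentially the same route as the paper's proof: propagate the old evaluation via \cref{lem:phiextend}, apply \cref{lemma:multiswitch} at each branch of the old common partial decision tree, union bound over the roughly $N\cdot 2^{\sum_{i<\eta}s_i}$ applications using the geometric choice of $s_\eta$, and then verify the evaluation properties and functional equivalence. The only place you gloss over more than the paper does is the final "routine verification," where the paper carefully treats the three cases (depth $<\eta$, top $\lnot$ at depth $\eta$, top $\lor$ at depth $\eta$) and checks each of \cref{prop:depth-0,prop:1-tree,prop:not,prop:or} plus cross-line functional equivalence via \cref{lem:loc-cons-b-tree}; you correctly name the needed ideas but would have to unpack them to make the argument complete.
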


\begin{proof}
  Let us first extend the common partial decision trees and then
  explain how to obtain the evaluation~$\varphi^\nu_\eta$ for
  different lines~$\nu \in [N]$.

  The interesting formulas of original depth~$\eta$ to consider are
  the ones with a top~$\vee$ gate. Let us fix a line~$\nu \in [N]$ and
  consider all
  sub-formulas~$\set{F^j = \bigvee_{i=1}^{m_j}F^j_i}_{j=1}^{M_\nu}$ of
  line~$\nu$ of original depth~$\eta$ with a top~$\vee$ gate under the
  restriction~$\sigma_{\eta-1}^*$. As the original depth of every
  formula~$F_i^j$ is at most~$\depth(F_i^j) \leq \eta-1$, all these
  formulas are in the domain of the evaluation~$\varphi^\nu_{\eta-1}$.
  Let us further fix a branch~$\tau$ in~$\calT_{\eta-1}(\nu)$ and
  recall that all decision
  trees~$\restrict{\varphi^\nu_{\eta-1}(F_i^j)}{\tau}$ are of depth at
  most~$t$.

  For every~$\nu \in [N]$ and branch~$\tau$ of~$\calT_{\eta-1}(\nu)$
  we apply \cref{lemma:multiswitch} to the set of
  formulas~$\restrict{F_i^j}{\tau}$ with associated
  trees~$\restrict{\varphi^\nu_{\eta-1}(F_i^j)}{\tau}$ of depth at
  most~$t$. The probability of failure of a single application is
  bounded by
  \begin{align}
    M^{s_\eta/\ell}
    \left(
    \frac
    {At \log^{c_1} n_{\eta - 1}}
    {n_{\eta - 1}/n_\eta}
    \right)^{s_\eta/c_2}
    &\leq
    M^{s_\eta/\log M}
    \left(
    \frac
    {At \log^{c_1} n_{\eta - 1}}
    {A_1t \log^{c_1} n_{\eta - 1}}
    \right)^{s_\eta/c_2}\\
    &\leq
      2^{-4s_\eta}
      =
    N^{-2^{\eta+1}} \eqcomma
  \end{align}
  assuming that the constant~$A_1$ is large enough. As we invoke
  \cref{lemma:multiswitch} at
  most~$N \cdot 2^{\sum_{i < \eta} s_i} \le N^{2^{\eta}}$ times, by a
  union bound, with probability at least~$1 - N^{-1}$, there is a full
  restriction~$\sigma_\eta$ such that for every line~$\nu \in [N]$ and
  every branch~$\tau \in \calT_{\eta-1}(\nu)$ we get a $t$-common
  partial decision tree of depth at most~$s_\eta$ for the
  formulas~$(\restrict{F^j}{\tau\sigma_\eta})_{j=1}^{M_\nu}$. Let us
  denote this common decision tree by~$\calT(\nu, \tau)$ and attach it
  to~$\calT_{\eta-1}(\nu)$ at the leaf~$\tau$ to
  obtain~$\calT_\eta(\nu)$.
  The trees~$\calT_\eta(\nu)$ are of depth at
  most~$\sum_{i \le \eta} s_i$ as required.

  Let us explain how to define the evaluation~$\varphi^\nu_\eta$ for a
  fixed line~$\nu \in [N]$. Consider any formula~$F$
  in~$\Gamma_\eta^\nu$.
  \begin{itemize}

  \item If~$F$ is of depth less than~$\eta$, then~$F$ is in the domain
    of~$\varphi^\nu_{\eta-1}$ and we can appeal to
    \cref{lem:phiextend}.

  \item If~$F = \lnot F'$ is of depth~$\eta$,
    then~$\varphi^\nu_{\eta}(F)$ is defined
    from~$\varphi^\nu_{\eta}(F')$ by negating the labels at the
    leaves.

  \item For~$F = \bigvee_i F_i$ of depth~$\eta$ we use the previously
    constructed common partial decision trees. We
    define~$\varphi^\nu_\eta(F)$ to be the decision tree whose
    first~$\sum_{i \le \eta} s_i$ levels are equivalent
    to~$\calT_\eta(\nu)$ followed by~$t$ levels unique to~$F$ obtained
    from the multi-switching lemma.
  \end{itemize}
  
  Let us check that the decision trees~$\calT_\eta(\nu)$ are indeed
  $t$-common partial decision trees
  for~$\varphi^\nu_\eta(\Gamma_\eta^\nu)$. By construction this
  clearly holds for formulas of depth~$\eta$ with a top~$\vee$
  gate. Since~$\calT_\eta(\nu)$ is equivalent to~$\calT_{\eta-1}(\nu)$
  on the upper levels, and restrictions only decrease the depth of
  decision trees, by the initial assumptions this also holds for
  formulas of depth less than~$\eta$. As the $t(\eta)$-evaluations of
  formulas of depth~$\eta$ with a top~$\lnot$-gate are defined in
  terms of formulas of depth less than~$\eta$, we also see
  that~$\calT_\eta(\nu)$ is a $t$-common partial decision tree for
  such formulas.
  
  Last we need to check that each~$\varphi^\nu_\eta$ is a
  $t(\eta)$-evaluation plus that these are pairwise functionally
  equivalent.

  By \cref{lem:phiextend} all the properties hold for formulas of
  depth less than~$\eta$. Let us verify the $t(\eta)$-evaluation
  properties for formulas of depth~$\eta$. That is, we need to check
  that constants (variables) are mapped to the corresponding decision
  trees of depth 0 (of depth 1), that axioms are assigned to
  $1$-trees, that the negation of a formula is assigned to the same
  decision tree as the formula is except that the leaf-labels are
  negated, and that if a formula~$F$ is the or of some sub-formulas,
  then the decision tree that~$F$ is mapped to represents the or of
  the decision trees that the sub-formulas are mapped to. Let us check
  these properties.

  \Cref{prop:depth-0} is immediate as~$\eta > 0$. As we only consider
  locally consistent decision trees as defined in
  \cref{def:local-consistency-dec-trees}, \cref{prop:1-tree} also
  follows. Further, \cref{prop:not} is satisfied by
  construction. \Cref{prop:or} can be established by checking the
  property for each branch~$\tau$ in~$\calT_{\eta-1}(\nu)$ separately;
  for a fixed~$\tau$ we see by \cref{lemma:multiswitch} that this
  indeed holds.
  
  Finally we need to establish that two
  $t(\eta)$-evaluations~$\varphi^\nu_\eta$ and~$\varphi^{\nu'}_\eta$
  are functionally equivalent for formulas of depth~$\eta$. By the
  inductive hypothesis isomorphic formulas with a top~$\lnot$-gate are
  functionally equivalent. Hence we are left to check functional
  equivalence for isomorphic formulas of depth~$\eta$ with a
  top~$\vee$ gate.
  
  Let~$F = \bigvee_i F_i$ and~$F' = \bigvee_i F_i'$ be two isomorphic
  formulas from~$\Gamma_{\eta}^\nu$ and~$\Gamma_\eta^{\nu'}$
  respectively. For the sake of contradiction suppose
  that~$\restrict{\varphi^\nu_\eta(F)}{\tau} = 1$
  but~$\restrict{\varphi^{\nu'}_\eta(F')}{\tau} = 0$ for some
  assignment~$\tau$. In the following we use
  that~$t(\eta) \le n_\eta/16$ to argue that there are locally
  consistent branches as claimed.

  By \cref{prop:1-tree} we know that for some~$F_i$ it holds
  that~$\restrict{\varphi^\nu_\eta(F_i)}{\tau} = 1$. Since the
  formulas~$F$ and~$F'$ are isomorphic formulas we know that there is
  an~$F'_j$ such that~$F_i$ and~$F'_j$ are isomorphic formulas. As
  such formulas have functionally equivalent decision trees (by
  induction and \cref{lem:phiextend}) we get
  that~$\restrict{\varphi^{\nu'}_\eta(F'_j)}{\tau} = 1$. But this
  cannot be as by \cref{prop:or} of a $t(\eta)$-evaluation this
  implies that~$\restrict{\varphi^{\nu'}_\eta(F')}{\tau} = 1$. This
  establishes that the different $t(\eta)$-evaluations are
  functionally equivalent, as required.
\end{proof}

With all pieces in place we are ready to prove
\cref{thm:main-line-length}.

\begin{proof}[Proof of \cref{thm:main-line-length}]
  Suppose we are given a proof of
  length~$N = \exp(n/((\log n)^c \log M)^{d})$, for some
  constant~$c > 0$. We may assume
  that~$M \le \exp(n^{1/d - 1/d(d-1)})$, as otherwise we can apply
  \cref{thm:main-size}.
  
  In order to create the functionally equivalent
  $t(\eta)$-evaluations~$\varphi^\nu$ for each line~$\nu \in [N]$ we
  consecutively apply \cref{lem:induction} $d$ times.
  We start with the evaluation~$\varphi^\nu_0$ which maps constants to
  the appropriate depth~$0$ decision tree and variables to the
  corresponding depth~$1$ decision trees. The common partial decision
  trees~$\calT_0(\nu)$ are all empty.

  After applying \cref{lem:induction} $d$ times we are left with a
  $t(d)$-evaluation for the proof. We need to ensure that~$t(d)$ is
  upper bounded by the dimension of the final
  grid:~$t(d) \le 2^{d}\log N + \log M$, while the final side length
  of the grid is~$n \cdot (2A_1 (\log n)^{c_1} \log M)^{-d}$. For our
  choice of~$N$ and the assumption on~$M$ this indeed holds and by
  \cref{lem:noproof} the theorem follows.
\end{proof}

\section{Switching Lemma: Proof Outline \& Further Preliminaries}
\label{sec:singleswitch-overview}

In this section we revisit full restrictions and define some
bookkeeping objects that are used in the proof of the switching
lemma. The actual proof of \cref{lemma:switch} is carried out in
\cref{sec:singleswitch}.

In order to motivate the following definitions we give a very
high-level proof outline of \cref{lemma:switch} in the following
section.

\subsection{High Level Proof Outline}
\label{sec:proof-outline-high-level}

We are given~$m$ decision trees~$T_i$ of depth at
most~$\depth(T_i)\leq t$ that query the edges of the~$n \times n$
grid. We sample a full
restriction~$\sigma \sim \calD_k\bigl(\Sigma(n,n')\bigr)$ and want to
argue that the probability that there is no decision tree of depth~$s$
representing~$\bigvee_{i=1}^{m}\restrict{T_i}{\sigma}$ is
exponentially small in~$s$.

We bound this probability by constructing the so-called \emph{extended
  canonical decision tree}~$\calT$ that
represents~$\bigvee_{i=1}^{m}\restrict{T_i}{\sigma}$ and bounding the
probability that~$\calT$ is of depth~$\depth(\calT) > s$.

For now we can think of~$\calT$ being constructed like the canonical
decision tree: proceed in stages. In each stage a branch~$\tau$
of~$\calT$ is extended by querying the variables of the first
$1$-branch~$\psi$ in the decision
trees~$\restrict{T_1}{\sigma\tau}, \restrict{T_2}{\sigma\tau}, \ldots,
\restrict{T_m}{\sigma\tau}$. Once queried we check in each new leaf of
the tree whether we traversed the path~$\psi$. If so, then we label
the leaf with a~$1$ and otherwise we continue with the next stage. If
there are no $1$-branches left, then we label the leaf with a~$0$.

It is not so hard to see that this process results in a decision
tree~$\calT$ that
represents~$\bigvee_{i = 1}^m \restrict{T_i}{\sigma}$: for each
leaf~$\tau$ of $\calT$ that is labeled~$1$ it holds that there is
an~$i \in [m]$ such that~$\restrict{T_i}{\sigma\tau} = 1$ and if the
branch~$\tau$ is labeled $0$, then for all~$i \in [m]$ we have
that~$\restrict{T_i}{\sigma\tau}$ is a $0$-tree. It remains to argue
that the decision tree~$\calT$ is of depth at most~$s$ except with
probability exponentially small in~$s$.

We analyze this event using the labeling technique of
Razborov~\cite{Razborov1995}. The idea of this technique is to come up
with an (almost) bijection~$F$ mapping restrictions~$\sigma$ that give
rise to an extended canonical decision tree~$\calT$ of
depth~$\depth(\calT) \geq s$ to a space of restrictions~$\Sigma^*$
that is much smaller than the space~$\Sigma$ from which we sampled the
full restriction~$\sigma$. If we manage to exhibit such an~$F$ to a
space~$\Sigma^*$ that is an exponential in~$s$ factor smaller
than~$\Sigma$, then the claimed upper bound on the failure probability
follows.

Let us sketch the construction of such an (almost) bijection. Consider
a full restriction~$\sigma$ and decision trees~$T_1, \ldots, T_m$ that
give rise to an extended canonical decision tree~$\calT$ of
depth~$s$. Let~$\tau \in \calT$ be a $0$-branch of length~$s$ and
denote by~$\psi_1, \ldots, \psi_g$ the $1$-branches used in the stages
that constructed the branch~$\tau$. We know that~$\tau$ does not agree
with any of~$\psi_1, \ldots, \psi_g$ as otherwise~$\tau$ would end in
a $1$-leaf. Let~$\tau_1, \ldots, \tau_g \subseteq \tau$ be the partial
assignments to the variables of the corresponding~$\psi_j$, that is,
the domain~$\dom(\tau_j) = \dom(\psi_j)$ of these assignments for
all~$j \in [g]$ are equal.

Razborov~\cite{Razborov1995} maps~$\sigma$ to~$F(\sigma) = \sigma^*$:
the restriction~$\sigma^*$ is~$\sigma$ composed
with~$\psi_1, \ldots, \psi_g$, that is, we somehow ``add'' the
restrictions~$\psi_j$ to~$\sigma$ such that the branch~$\psi_1$ is
traversed by the resulting restriction~$\sigma^*$. The crucial insight
is that with the help of the shallow decision trees~$T_1, \ldots, T_m$
the mapping~$F$ can be cheaply inverted by ``re-doing'' the
construction of~$\tau$: we proceed in stages~$j = 1, \ldots, g$. At
the beginning of stage~$j$ we assume that we have the
restriction~$\sigma^*_{\geq j}$ which is~$\sigma$ composed
with~$\tau_1, \ldots, \tau_{j-1}, \psi_j, \ldots, \psi_g$ such that
the branch~$\psi_j$ is traversed. Since~$\sigma^*_{\geq j}$ by
construction traverses~$\psi_j$ we can identify~$\psi_j$ for free: it
is the first $1$-branch in~$T_1, \ldots, T_m$ traversed
by~$\sigma^*_{\geq j}$. We intend to recover~$\tau_j$ so that we can
``remove''~$\psi_j$ from~$\sigma^*_{\geq j}$ and ``add''~$\tau_j$ to
obtain~$\sigma^*_{\geq j+1}$. Since~$\psi_j$ is a branch of depth at
most~$t$, using only~$\log t$ bits of external information per
variable we can point out all the variables where~$\psi_j$
and~$\tau_j$ differ.

Thus using in total at most~$s\log t$ bits we seem to be able to
reconstruct~$\sigma$ from~$\sigma^*$. If we could further argue that
the space of restrictions~$\Sigma^*$ that~$F$ maps to is a factor
$(n/n')^{-s}$ smaller than~$\Sigma(n,n')$, then we could bound the
failure probability by
\begin{align}
  \left(\frac{t}{n/n'}\right)^{s} \eqperiod
\end{align}
This completes the high-level proof outline.

It is not immediately clear how to implement the above proof outline
for the Tseitin contradictions~$\tseitin(G_n)$ defined over
the~$n \times n$ grid. One of the ideas of \cite{jhtseitin} is, given
a full restriction~$\sigma$, to try to reduce the number of centers
in~$\sigma_0$ with an even charge, where~$\sigma_0$ is the assignment
used in the construction of~$\sigma$ (see
\cref{sec:full-restriction}). This approach does not work immediately:
there are \emph{more} assignments to the Tseitin formula with an even
charge at all chosen centers except two (chosen freely) than there are
assignments~$\sigma_0$ with an even charge at all chosen centers.

We follow \cite{jhtseitin} and define \emph{partial
  restrictions}~$\rho$ which have a large number of centers with an even
charge. We can think of~$\restrict{\tseitin(G_n)}{\rho}$ as an
intermediate formula between~$\tseitin(G_n)$ and the
formula~$\restrict{\tseitin(G_n)}{\sigma}$ restricted by a full
restriction~$\sigma$. These partial restrictions have the property
that the number of restrictions decreases as the number of centers
with an even charge decreases. This allows us to implement the above
proof outline.

\paragraph{Organization.} In \cref{sec:assoc-center} we define the
concept of an associated center which allows us to recover a center
from a discovered variable on a branch~$\psi_j$. In
\cref{sec:partial-restriction} we then introduce the notion of a
\emph{partial restriction}.  Finally, in \cref{sec:information}, we
introduce the main bookkeeping objects of the proof: \emph{information
  pieces}.

\subsection{Associated Centers}
\label{sec:assoc-center}

Recall that each path connecting two centers in adjacent sub-squares
consists of 5 non-empty \emph{segments}: The first and last segment
are within the central area and of length 1, the middle segment is
contained in-between the central areas on the designated row or
column, and segments two and four pass from the central areas to the
area in between. Illustrations can be found in
\cref{fig:path,fig:associated-center}.
\begin{figure}
  \centering
  \includegraphics{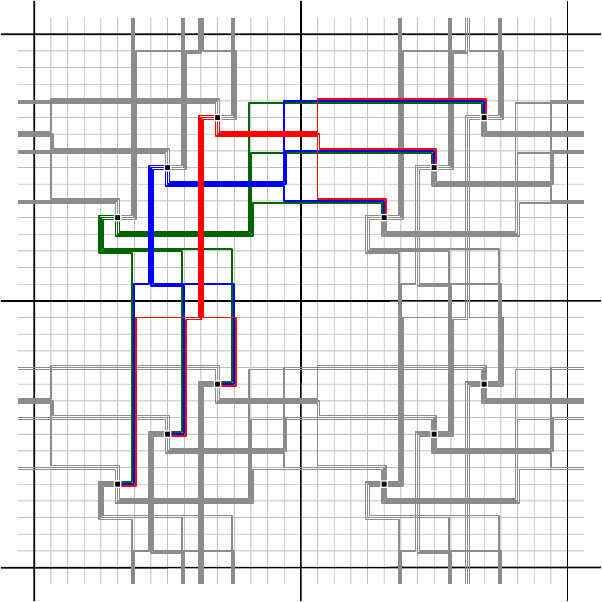}
  \caption{The paths from the top-left sub-square going right and down
    are highlighted}
  \label{fig:associated-center}
\end{figure}
The key property of these paths is stated in the below lemma.

\begin{lemma}
  \label{lemma:path}
  If an edge $e$ lies on multiple paths as described, then all these
  paths have a common endpoint.
\end{lemma}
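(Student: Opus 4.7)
The plan is to run a careful case analysis on the shared edge~$e$, exploiting two structural features of the construction: for each center~$c$, the designated row~$\row_c$ (used for the downward path from~$c$) is unique among centers sharing the same middle area below them, and similarly~$\col_c$ is unique per center in the corresponding right-side middle area; and the center rows within each row-strip and the center columns within each column-strip are evenly spaced with at least two central rows or columns between consecutive ones, so any two center coordinates within a strip differ by at least~$3$.

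I would first attach to each path-segment a distinguished \emph{anchor center}, chosen so that it is always an endpoint of the corresponding path: segments~$1$ and~$5$ of every path are the single edges adjacent to the path's starting and ending centers, respectively, and are anchored there; segment~$3$ of a downward path from~$c_i$ is the horizontal run along~$\row_{c_i}$, anchored to~$c_i$; segments~$2$ and~$4$ of the same downward path are the vertical runs in columns~$q_{c_i}-1$ and~$q_{c_j}+1$, anchored to~$c_i$ and~$c_j$ respectively. Rightward paths are handled analogously with~$\col_c$ and rows~$r_c \pm 1$ in place of~$\row_c$ and columns~$q_c \pm 1$.

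The crux is then to show that the anchor of a given edge~$e$ is determined by~$e$ alone, that is, it does not depend on which path one looks at. A horizontal edge in a horizontal middle area (between vertically adjacent sub-squares) can only appear as a segment-$3$ edge of a downward path, and the uniqueness of~$\row_c$ forces a unique anchor. A vertical edge in a column~$q$ adjacent to some center coordinate is identified via the spacing property: at most one center~$c$ has $q \in \{q_c-1, q_c+1\}$, which pins down the anchor. Vertical edges in a vertical middle area and horizontal edges adjacent to center rows are treated symmetrically.

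The main obstacle I anticipate is handling the interface between regions, where segments~$2$ and~$4$ transition between a central area and a middle area, and both downward and rightward paths pass through the same sub-square. The key observations that dispatch these cases are that segments~$2$ and~$4$ of downward paths are vertical while segments~$2$ and~$4$ of rightward paths are horizontal, so they cannot share an edge; and the~$D/8$ buffer between central rows/columns and sub-square boundaries ensures that no center row coincides with a middle-area row. Once all cases are verified, every path containing~$e$ shares the same anchor, and hence a common endpoint, giving the lemma.
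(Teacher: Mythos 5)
Your proposal takes essentially the same approach as the paper's own proof. The paper argues by fixing a sub-square $s$, considering a path $P$ from $c \in s$ to $c'$ in an adjacent sub-square, and observing that any other path sharing an edge on segments 1–3 of $P$ must share $c$ as an endpoint, while sharing an edge on segments 4–5 forces sharing $c'$. Your "anchor center" formulation makes exactly this correspondence explicit (segments 1–3 anchored to the starting center, 4–5 to the ending center), and the case analysis you sketch — horizontal middle-area edges are segment-3 only and pinned by the uniqueness of $\row_c$; vertical edges near a center column pinned by the $\geq 3$ spacing; downward versus rightward paths separated by orientation of segments 2 and 4; the $\factorN/8$ buffer keeping central and middle areas disjoint — is precisely the content that the paper's three-sentence proof leaves implicit. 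The paper's version is considerably terser and relies on the reader to reconstruct these checks; your plan spells them out, which is sound, but it is the same argument rather than a genuinely different route.
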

\begin{proof}
  Consider the set of paths $\calP$ connecting the centers of a fixed
  sub-square $s$ to centers in adjacent sub-squares. Denote by
  $P \in \calP$ a path that connects a center $c \in s$ to a center
  $c'$ in the sub-square below or to the right of $s$. Recall that
  in between two consecutive centers there are at least two
  rows/columns with no centers.

  Consider the 5 segments of $P$. If some path $P' \in \calP$ shares
  an edge with $P$ on segments $1$, $2$ or $3$, then $P$ and $P'$
  share $c$ as a common endpoint. Similarly, if $P'$ shares an edge on
  segments $4$ or $5$ with $P$, then they have $c'$ as a common
  endpoint. The statement follows by symmetry.
\end{proof}

We let the \emph{associated center} of an edge~$e$ denote the endpoint
common to all paths that contain~$e$. This notion is naturally
extended to variables: the \emph{associated center} of a
variable~$x_e$ is the associated center of the edge~$e$.

Let us introduce some notation. Given a fixed center~$v$ we say that a
path~$P$ connecting~$v$ to some other center~$u$ goes to the~$\delta$,
where $\delta$ is one of the directions \emph{left}, \emph{right},
\emph{up} or \emph{down}, if~$v$ lies in the sub-square to
the~$\delta$ of~$u$. Let us stress that this notion of direction is
relative to the fixed center~$v$: if we fix~$u$ and then considered
the same path~$P$, then~$P$ has the opposite direction
of~$\delta$. Finally note that if we fix the associated center of an
edge~$e$, then all the paths passing through~$e$ have the same
direction.

Recall from the proof outline provided in
\cref{sec:proof-outline-high-level} that during the recovery of the
assignment~$\tau_j$ it is possible to cheaply identify variables~$x_e$
on the short branch~$\psi_j$. The notion of an associated center
allows us to pass from such a variable~$x_e$ to a center. This enables
us to do a counting argument in terms of centers. A more detailed
proof outline is provided in \cref{sec:switch-overview}.

\subsection{Partial Restrictions and Pairings}
\label{sec:partial-restriction}
\label{sec:partial-restrictions}

Recall that we sample a full restriction
$\sigma \sim \calD_k\bigl(\Sigma(n,n')\bigr)$ by first sampling
$k$~\emph{alive} centers such that each sub-square contains
$(1\pm 0.01)k/n'^2$ alive centers and then fixing the alive centers
with the lowest numbered row in each sub-square to be the
\emph{chosen} centers~$\chosen$ of~$\sigma$. We then sample an
assignment~$\sigma_0$ from the space of solutions to the Tseitin
formula with $0$-charges at chosen centers and $1$-charges at all
other nodes. We define~$\sigma$ from~$\sigma_0$ as done in
\cref{eq:def-sigma}.

In the following we define a so-called \emph{partial
  restriction}~$\rho$. A partial restriction is in some sense one half
of a full restriction~$\sigma$: we split a full restriction~$\sigma$
into a partial restriction~$\rho$ and a \emph{pairing}~$\pi$. The
partial restriction~$\rho$ is an (affine) restriction similar
to~$\sigma$ but it maps to more variables: while~$\sigma$ maps to
variables~$y_P$ for~$P$ a chosen path, the partial restriction~$\rho$
maps to variables~$z_P$ where~$P$ is any path connecting two
\emph{alive} centers. The pairing~$\pi$ is then the (affine)
restriction such
that~$\restrict{\restrict{\tseitin(G_n)}{\rho}}{\pi} =
\restrict{\tseitin(G_n)}{\sigma}$. Let us define~$\rho$ and~$\pi$ more
formally by describing an alternative way of sampling a full
restriction~$\sigma\sim \calD_k\bigl(\Sigma(n,n')\bigr)$.

As before we start by sampling uniformly at random~$k$ \emph{alive
  centers} from the set of subsets of centers of size~$k$ satisfying
that each sub-square contains~$(1\pm 0.01)k/n'^2$ alive
centers. Sample~$\rho_0$ from the space of solutions to the Tseitin
formula with $0$-charges at alive centers and $1$-charges at all other
nodes. Since we have an odd number~$k$ of alive centers, by
\cref{lemma:evenok}, this is indeed possible. We have variables~$z_P$
for each path~$P$ connecting two alive centers (an \emph{alive path}),
denote by~$\oplus$ the exclusive or, and define~$\rho$ from~$\rho_0$
by
\begin{align}
  \rho(x_e) =
  \begin{cases}
    \rho_0(x_e)
    &\text{if $e$ is not  on an alive path,}\\
    \bigoplus_{P \ni e} z_P
    &\text{if $e$ is on some alive path(s) and $\rho_0(x_e) = 1$,}\\
    \neg \bigoplus_{P \ni e} z_P
    &\text{if $e$ is on some alive path(s) and $\rho_0(x_e) = 0$.}\\
  \end{cases}
  \label{eq:def-rho}
\end{align}
The assignment given by~$\rho_0$ to variables not on alive paths are
called the \emph{final values}.

It remains to define the pairing~$\pi$ such that~$\rho$ combined
with~$\pi$ gives a full restriction~$\sigma$. The alive center with
the lowest numbered row in each sub-square is called the \emph{chosen
  center} (as before) and the other alive centers are called
\emph{non-chosen centers}. Similarly we denote paths that are alive
but not chosen the \emph{non-chosen paths} and let centers that are
not alive simply be the \emph{dead centers}.

The main task of the pairing~$\pi$ is to ensure that the non-chosen
centers have an odd charge in the final full restriction. We can thus
think of~$\pi$ as an assignment to the non-chosen paths such that each
non-chosen center has an odd number of incident non-chosen paths that
are set to 1. For reasons to become apparent later on it is convenient
to have small $1$-components in~$\pi$. Let a star of size~$4$ be the
graph with a central node of degree~$3$ and three nodes of degree~$1$
connected to the central node by an edge.

\begin{definition}[graphical pairing]
  \label{def:pairing}
  A \emph{graphical pairing} $\pi_0$ is a graph supported on the
  non-chosen centers. Each component of $\pi_0$ is either a single
  edge or a star of size $4$. The centers of a component are in
  distinct but adjacent sub-squares.
\end{definition}

The following lemma follows from the proof of~\cite[Lemma
4.3]{jhtseitin}. For completeness we provide a proof in
\cref{sec:omitted-proofs}.

\begin{restatable}[{\cite[Lemma 4.3]{jhtseitin}}]{lemma}{matchinglemma}
  \label{lem:matching}
  For large enough integer $a \in \N$ the following holds. If each
  sub-square has $(1\pm 0.01)a$ alive centers, then there is a
  graphical pairing $\pi_0$.
\end{restatable}

For each choice of alive centers we fix a graphical
pairing~$\pi_0$. Let us stress that there is \emph{no} randomness in
the choice of~$\pi_0$ once we have sampled a partial
restriction~$\rho$.

We have variables $y_P$ for each chosen path $P$ connecting two chosen
centers. We define the \emph{pairing} $\pi$ from a graphical pairing
$\pi_0$ by
\begin{align}
  \pi(z_P) =
  \begin{cases}
    1
    &\text{if $P$ is non-chosen and in $\pi_0$ as an edge,}\\
    0
    &\text{if $P$ is non-chosen and not in $\pi_0$,}\\
    y_P
    &\text{if $P$ is a chosen path.}\\
  \end{cases}
  \label{eq:def-pi}
\end{align}
See \cref{fig:rho} for an illustration. With $\pi$ and $\rho$ defined
we let $\sigma(x_e) = \pi\bigl(\rho(x_e)\bigr)$.

\begin{figure}
  \centering
  \includegraphics{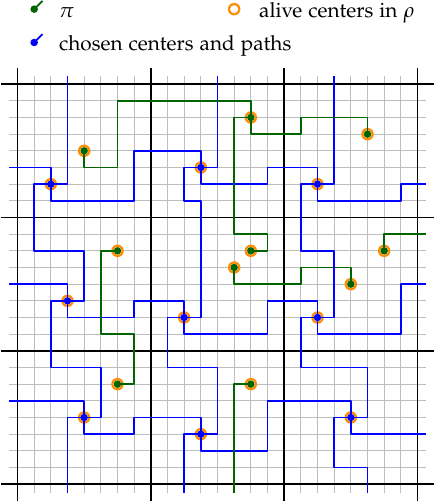}
  \caption{A part of the grid with $\pi$ and the chosen centers and
    paths highlighted}
  \label{fig:rho}
\end{figure}

We claim that sampling $\sigma$ through $\rho$ and $\pi$ as outlined
above is equivalent to sampling it directly as explained in
\cref{sec:distribution}. This is readily verified: we may assume that
we sample the same set of alive centers. Since $\pi$, which only
depends on the set of alive centers, defines a bijection between
assignments $\rho_0$ and $\sigma_0$ as sampled in the respective
processes, we see that the two distributions obtained from these
processes are indeed equivalent.

The sampling of~$\rho$ is the main probabilistic event that we analyze
in the proof of the switching lemma. Since~$\sigma$ can be defined in
terms of~$\rho$ and~$\pi$ we write~$\sigma = \sigma(\rho, \pi)$
whenever we want to stress this fact.

Instead of sampling the~$k$ alive centers uniformly from all subsets
of centers of size~$k$ satisfying that each sub-square
has~$(1\pm 0.01)k/n'^2$ alive centers, we can instead simply
sample~$k$ centers uniformly form the set of centers and then
sample~$\rho$ as above. Denote by~$\spaceRho(k,n,n')$ the space of
partial restrictions obtained by choosing~$k$ alive centers uniformly
from the set of centers and write~$\spaceRhoReg(k,n,n')$ for the space
of partial restrictions with~$k$ alive centers such that each of
the~$n'\times n'$ sub-squares contains~$(1\pm 0.01)k/n'^2$ many alive
centers.

\begin{lemma}
  \label{lem:no-alive}
  For~$n,n'\in \N$ large enough there is a constant $C$ such that for
  odd $k \geq Cn'^2 \log n'$ it holds that
  $\spaceRhoReg(k,n,n') \geq (1-1/n')\spaceRho(k,n,n')$.
\end{lemma}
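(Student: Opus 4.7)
The plan is to show that sampling $k$ centers uniformly at random from the set of all $\Delta n'^2$ centers (where $\Delta = \lfloor \factorN/5 \rfloor$ is the number of centers per sub-square) produces, with probability at least $1 - 1/n'$, a configuration in which every one of the $n'^2$ sub-squares contains $(1 \pm 0.01)k/n'^2$ alive centers.

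First I would argue that the claim reduces to such a statement about the distribution of alive centers alone. Indeed, once a set of alive centers is fixed, the number of valid extensions to a partial restriction $\rho$ is the number of solutions $\rho_0$ to the Tseitin system with charges $0$ at the alive centers and $1$ elsewhere. Since the number of alive centers $k$ is odd, \cref{lemma:evenok} tells us this count is $2^{r_n}$, a quantity depending only on $n$, not on which centers were declared alive. Consequently, both $|\spaceRhoReg(k,n,n')|$ and $|\spaceRho(k,n,n')|$ are the same constant multiple of the number of valid center sets, and the claimed ratio is exactly the probability that a uniformly chosen size-$k$ subset of the $\Delta n'^2$ centers is ``regular'' in the sense above.

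Next, I would fix one sub-square and analyze the number $X$ of alive centers falling inside it. This variable is hypergeometric with population size $\Delta n'^2$, $\Delta$ ``good'' items, and $k$ draws, so its mean is $\mu = k/n'^2$. For $k \ge C n'^2 \log n'$, we have $\mu \geq C \log n'$. By the Chernoff bound for hypergeometric distributions (which satisfies the same concentration inequalities as the binomial, by Hoeffding's reduction to sampling with replacement),
\begin{align*}
  \Pr\bigl[|X - \mu| > 0.01 \mu\bigr] \;\leq\; 2\exp\bigl(-(0.01)^2 \mu /3\bigr) \;\leq\; 2 n'^{-c C}
\end{align*}
for some absolute constant $c > 0$. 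A union bound over the $n'^2$ sub-squares then gives a total failure probability of at most $2 n'^{2 - cC}$, which is at most $1/n'$ provided $C$ is chosen large enough (say, $C \geq 4/c$) and $n'$ is sufficiently large.

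This is a routine concentration argument and I would not expect any serious obstacle. The only subtlety worth flagging is the initial reduction from counting partial restrictions to counting center configurations, which relies crucially on the fact that the number of solutions guaranteed by \cref{lemma:evenok} depends only on $n$ and not on the placement of the zero-charges; without this, the ratio of restrictions could in principle differ from the ratio of center sets.
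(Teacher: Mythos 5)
Your proposal is correct and matches the paper's argument: the paper also interprets the ratio as the probability that a uniformly chosen $\rho$ places $(1\pm 0.01)k/n'^2$ alive centers in every sub-square, establishes concentration via Chernoff bounds for negatively correlated indicators (equivalent to your hypergeometric/Hoeffding route), and concludes with a union bound over the $n'^2$ sub-squares. Your explicit remark that \cref{lemma:evenok} makes the count of extensions $\rho_0$ independent of which centers are alive — so that the ratio of restrictions equals the ratio of center configurations — is a point the paper uses implicitly, and spelling it out is a small improvement in clarity.
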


\begin{proof}
  Sample~$\rho \sim \calU\bigl(\spaceRho(k,n,n')\bigr)$ uniformly and
  consider the random variables~$X_{(i,j)}$ defined for each
  sub-square~$(i,j)$ counting the number of alive centers of~$\rho$ in
  the sub-square~$(i,j)$. Each such~$X_{(i,j)}$ is the sum of the
  indicator random variables $Y_{(i,j)}^\nu$ for $\nu \in [\Delta]$
  that are $1$ if and only if the $\nu$th center of the
  sub-square~$(i,j)$ is alive in~$\rho$. Note that these random
  variables are negatively correlated since we choose exactly~$k$
  alive centers overall. Since the Chernoff bounds continue to hold
  for negatively correlated random variables it holds
  that~$X_{(i,j)} \in (1\pm 0.01)k/n'^2$ except with probability
  $1/n'^3$ for a large enough constant $C> 0$. The claim follows by a
  union bound over all sub-squares.
\end{proof}

\begin{lemma}\label{lem:final-counting}
  Let $n, n' \in \N^+$ be large enough such that
  $n \geq 20 n' C\log n'$ and suppose that $k = Cn'^{2} \log n'$. For
  integer~$s \geq 1$ it holds that
  \begin{align*}
    \frac{
    \lvert\spaceRho(k-2s,n,n')\rvert
    }
    {
    \lvert\spaceRhoReg(k,n,n')\rvert
    }
    \leq
    \left(
    \frac
    {
    13 C \log n'
    }
    {
    n/n'
    }
    \right)^{2s}
     \eqperiod
  \end{align*}
\end{lemma}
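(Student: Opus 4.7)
The plan is to reduce the claim to a direct binomial-coefficient estimate, with all of the action coming from the fact that the total number of centers $\Delta n'^2$ is much larger than $k$.

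First, by \cref{lem:no-alive} it suffices to establish the inequality for $\spaceRho(k,n,n')$ in place of $\spaceRhoReg(k,n,n')$ at the cost of an extra factor $(1-1/n')$, which will be absorbed at the end.

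Second, for any odd integer $j$ with $1 \leq j \leq \Delta n'^2$, the Tseitin system on the $n \times n$ grid (with $n$ odd) with charges $0$ at $j$ designated centers and $1$ elsewhere has total charge $n^2 - j$ even. Hence by \cref{lemma:evenok} it has exactly $2^{r_n}$ solutions, a quantity independent of which centers are designated. Therefore
\begin{align*}
  |\spaceRho(j, n, n')| = \binom{\Delta n'^2}{j} \cdot 2^{r_n}\eqperiod
\end{align*}
Since $k$ is odd, each term $k-2i$ in the sum is odd as well, so all factors $2^{r_n}$ cancel when we form ratios.

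Third, I would bound the binomial ratio using the identity
\begin{align*}
  \frac{\binom{\Delta n'^2}{k-2i}}{\binom{\Delta n'^2}{k}} = \prod_{j=0}^{2i-1}\frac{k-j}{\Delta n'^2 - k + 2i - j} \leq \left(\frac{k}{\Delta n'^2 - k}\right)^{2i} =: q^{2i}\eqperiod
\end{align*}
Unpacking the two nested floors in $\Delta = \lfloor \lfloor n/n' \rfloor / 5 \rfloor$ under the hypothesis $n/n' \geq 20 C \log n'$ gives $\Delta n'^2 \geq n n'/6$ for $n'$ large enough, whence $q \leq 12 C \log n' /(n/n') \leq 1/2$.

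Fourth, I would sum the tail as a geometric series:
\begin{align*}
  \sum_{i=s}^{(k-1)/2} |\spaceRho(k-2i, n, n')| \leq |\spaceRho(k, n, n')| \sum_{i=s}^{\infty} q^{2i} \leq 2\, q^{2s}\, |\spaceRho(k, n, n')|\eqperiod
\end{align*}
Combining this with the reduction of the first step yields
\begin{align*}
  |\spaceRhoReg(k,n,n')| \geq \frac{1 - 1/n'}{2}\cdot q^{-2s}\cdot \sum_{i=s}^{(k-1)/2} |\spaceRho(k-2i,n,n')|\eqperiod
\end{align*}
Since $q^{-1} \geq (n/n')/(12 C \log n')$, the prefactor dominates $\bigl((n/n')/(50 C\log n')\bigr)^{2s}$ for every $s \geq 1$, as the ratio $(50/12)^{2s} \geq (50/12)^{2}$ comfortably absorbs the constant $2/(1-1/n')$.

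The main obstacle is purely bookkeeping: tracking the constant through the two nested floors defining $\Delta$ and ensuring the ratio $q$ stays safely below $1$ so that the geometric sum converges. The generous constant $50$ in the statement is chosen precisely so that these slacks fit without trouble.
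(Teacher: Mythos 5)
Your proof is correct and follows essentially the same route as the paper: reduce to $\spaceRho$ via \cref{lem:no-alive}, express $\lvert\spaceRho(j,n,n')\rvert = 2^{r_n}\binom{\Delta n'^2}{j}$, bound the binomial ratio, and control the tail sum geometrically. The paper phrases the geometric step as ``the sum is dominated by twice its largest term'' and bounds $\binom{m}{k}/\binom{m}{k-2s}$, whereas you bound each ratio $\binom{m}{k-2i}/\binom{m}{k}\le q^{2i}$ and sum; these are interchangeable. One tiny numerical nit: from $q \le 12 C\log n'/(n/n')$ and $n/n'\ge 20C\log n'$ you get $q\le 3/5$, not $\le 1/2$, but since $1/(1-(3/5)^2)=25/16<2$ your factor-$2$ tail bound still holds.
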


The whole proof of the switching lemma hinges on this exponential in
$s$ factor~$\left(\frac{13 C \log n'}{n/n'}\right)^{2s}$: it allows us
to implement the proof plan as outlined in
\cref{sec:proof-outline-high-level}
with~$\Sigma = \spaceRhoReg(k,n,n')$
and~$\Sigma^* = \bigcup_{i=\Omega(s)}^{(k-1)/2}\spaceRho(k-2i,n,n')$.

\begin{proof}
  According to \cref{lem:no-alive} it holds that
  $\lvert\spaceRhoReg(k,n,n')\rvert \geq
  (1-1/n)\lvert\spaceRho(k,n,n')\rvert$. It thus suffices to show that
  $\lvert \spaceRho(k-2s,n,n')\rvert \leq \left(\frac{12 C \log
      n'}{n/n'}\right)^{2s} \cdot \lvert\spaceRho(k,n,n')\rvert$ to
  establish the claim.  Recall from \cref{lemma:evenok} that the
  number of solutions of the Tseitin formula $\tseitin(G_n,\alpha)$
  only depends on the parity of the sum of the charges
  $\sum_{v\in V(G)} \alpha_v$ and $n$: the space of restrictions
  $\spaceRho(k-2s,n,n')$ is of size $2^{r_n}\binom{m}{k-2s}$ where
  $m = n'^2\Delta$ is the number of centers and $r_n$ is as in
  \cref{lemma:evenok}. It thus holds that
  \begin{align}
    \frac
    {\lvert\spaceRho(k-2s,n,n')\rvert}
    {\lvert\spaceRho(k,n,n')\rvert}
    =
    \frac
    {2^{r_n}\binom{m}{k-2s}}
    {2^{r_n}\binom{m}{k}}
    =
    \prod_{i=0}^{2s-1}
    \frac {k-i} {m-k+i} 
    &\leq
      \left(
      \frac
      {k}
      {m-k}
      \right)^{2s}\\
    &= 
      \left(
      \frac
      {C \log n'}
      {\Delta - C \log n'}
      \right)^{2s} \\
    &\leq
      \left(\frac{2 C\log n'}{\Delta}\right)^{2s} \eqcomma
  \end{align}
  using the assumption that $n \geq 20 n' C\log n'$ hence
  $\Delta \geq 2C \log n'$. The claimed bound follows from the
  fact~$\Delta \geq n/6n'$.
\end{proof}

Let us establish some nomenclature. As the original grid is also a
graph with edges we from now on use the term \emph{grid-edges} to
refer to edges in the original grid. We only consider paths in the
original grid and keep the short term \emph{path} for these. An
\emph{edge} refers to an alive path, that is, an \emph{edge} is a
connection between two alive centers and corresponds to an alive path
(possibly a chosen path) in the original grid. Edges between chosen
centers are \emph{new grid edges} and we say that two chosen centers
are neighbors if they lie in adjacent sub-squares.

A partial restriction is usually denoted by~$\rho$ and since we mostly
discuss partial restrictions we simply call them \emph{restrictions}
while we use the term \emph{full restrictions} when that is what we
have in mind.

\subsection{Information Pieces}
\label{sec:information}

As an intermediate between~$\rho$ and a full restriction~$\sigma$ we
have~$\rho$ along with some information in the form of the existence
or absence of edges (alive paths) incident to alive centers. We have
the following definition.

\begin{definition}[information piece]
  \label{def:information-piece}
  An \emph{information piece} is either
  \begin{enumerate}
  \item an edge $\set{v,w}$ where $v,w$ are centers in adjacent
    sub-squares, or
  \item of the form $(v, \delta, \bot)$ for a center $v$ and a
    direction~$\delta$, that is, $\delta$ is either \emph{left},
    \emph{right}, \emph{up} or \emph{down}.
  \end{enumerate}
  The former says that there is an edge from $v$ to $w$ while the
  latter says that there is no edge from $v$ in direction $\delta$.
\end{definition}

We can think of information pieces as always being on alive
centers. Suppose we are given a restriction~$\rho$ together with an
information piece $\set{v,w}$ for alive centers~$v$ and~$w$. The
tuple~$(\rho, \set{v,w})$ corresponds to the restriction obtained
from~$\rho$ by additionally restricting
\begin{enumerate}
\item the variable $z_{P_{vw}}= 1$, where $P_{vw}$ is the alive path
  connecting the centers $v$ and $w$, and
\item setting variables $z_{P'} = 0$ for any alive path
  $P' \neq P_{vw}$ which lies in the same direction as $P_{vw}$ does
  from either $v$ or $w$.
\end{enumerate}
Let us stress that the second point above is by choice: it would be
perfectly fine to allow multiple paths set to $1$ in one direction
incident to a single center. In the following we do not consider such
assignments.

Note that there is some asymmetry in information pieces: if we are
instead given $\rho$ along with~$(v, \text{down}, \bot)$, then this
corresponds to the restriction obtained from $\rho$ where we
additionally restrict all variables $z_P = 0$, where $P$ is an alive
path going down from $v$.

We make the intuition of what restriction one obtains from $\rho$ with
some information pieces more formal in \cref{def:force}. It is
convenient to have the abstraction of an information piece since it
allows us to think of certain alive paths as restricted while only
knowing one endpoint. We also use sets of information pieces.

\begin{definition}[information set]
  \label{def:information-set}
  An \emph{information set} $I$ is a collection of information pieces.
  The \emph{support} of $I$, denoted by $\supp(I)$, is the set of
  centers mentioned in these pieces.
\end{definition}

A partial assignment to new grid-edges naturally corresponds to a set
of information pieces: an assignment of $0$ to a new grid-edge
$P_{uv}$ corresponds to two non-edge information pieces in the
appropriate directions at the chosen centers $u$ and $v$. An
assignment of $1$ corresponds to an information piece in the form of
an edge between the two chosen centers $u,v$ connected by the
path~$P_{uv}$.

\begin{definition}[local consistency for information sets]
  \label{def:local-consistency-info}
  An information set $I$ is \emph{locally consistent} if
  \begin{enumerate}
  \item the set~$I$ does not have two different information pieces in
    one direction from the same center, and
  \item if~$I$ has information in all four directions from a center
    $v$, then it has an odd number of edges incident to $v$.
  \end{enumerate}
  Two information sets $I$ and $J$ are \emph{pairwise locally
    consistent} if the union $I \cup J$ is locally consistent.
\end{definition}

We use the term \emph{locally consistent} both for sets of information
pieces and partial assignments. Local consistency for assignments
requires an odd number of 1s incident to any node if it assigns all
the incident variables. This corresponds exactly to the local
consistency property of information pieces. Hence a locally consistent
assignment gives rise to a locally consistent information set. Note,
though, that the converse is not true since a locally consistent
assignment needs to satisfy further properties (see
\cref{def:consistent}).

Jointly with $\rho$ an information set forces the values of some more
variables as follows.

\begin{definition}[forcing]
  \label{def:force}
  Let $\rho$ be a restriction, denote by $\rho_0$ the assignment used
  in the construction of $\rho$, and let $I$ be an information set. A
  variable $x_e$ is \emph{forced} by $(\rho, I)$ if and only if either
  \begin{enumerate}
  \item the associated center~$v$ of~$x_e$ is dead in~$\rho$, or
  \item the information set~$I$ has information pieces in all
    directions incident to~$v$.
  \end{enumerate}
  In the former case the variable $x_e$ is always forced to
  $\rho_0(x_e)$, while in the later case it is forced to
  $\neg\rho_0(x_e)$ if the information piece $i\in I$ incident to~$v$
  in the direction of~$x_e$ is
  \begin{enumerate}
  \item an edge $i = \set{v,w}$, and
  \item the grid-edge~$e$ lies on the alive path~$P$ connecting $v$ to
    $w$.
  \end{enumerate}
  Otherwise, as in the first case, the variable~$x_e$ is forced
  to~$\rho_0(x_e)$.
\end{definition}

\begin{remark}
  Readers familiar with~\cite{jhtseitin,jhkr} should note that our
  notion of forcing differs from the notion used in previous work: a
  variable with an alive center is considered forced only if the
  information set~$I$ contains information pieces in \emph{all}
  directions incident to the associated
  center. In~\cite{jhtseitin,jhkr} a variable~$x_e$ is considered
  forced if~$I$ contains an information piece incident to the
  associated center in the direction of~$x_e$.
\end{remark}

There are other situations where the value of a variable might be
determined by $\rho$ and $I$ such as the lack, or scarcity, of live
centers in a sub-square. We do not use such information in the
reasoning below. We need a notion of a closed information set.

\begin{definition}[closed information set]
  An information set~$I$ is \emph{closed at a node~$u$} if~$I$ is
  locally consistent and has information pieces in all four directions
  incident to~$u$. The information set~$I$ is \emph{closed} if~$I$ is
  closed at all nodes $u \in \supp(I)$.
\end{definition}

The definition implies that if an information set~$I$ is closed, then
for any $u\in \supp(I)$ and direction~$\delta$ where there is not an
element of~$\supp(I)$ we have a non-edge $(u,\delta,\bot)$. When
considered as a graph such an information set is an odd-degree graph
(with degrees one and three) on the centers of $\supp(I)$. See
\cref{fig:closed-info} for an illustration of a closed information
set. Going forward we will usually think of each connected component
of the pairing~$\pi$ as a closed information set.

\begin{figure}
  \centering
  \includegraphics{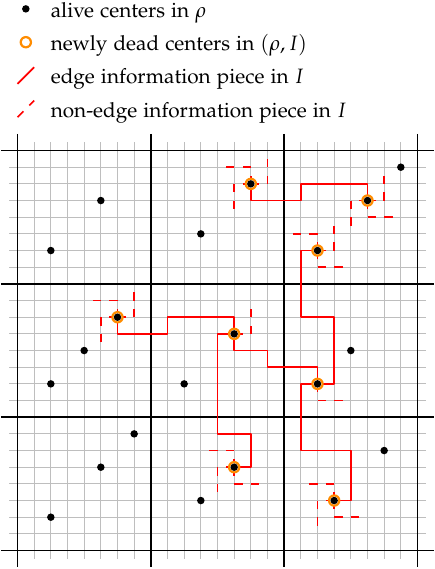}
  \caption{A closed information set $I$ and the newly dead centers in
    $\supp(I)$}
  \label{fig:closed-info}
\end{figure}

Consider an information set~$I$ supported on alive centers that is
closed on~$U$ and suppose that on centers~$v \in \supp(I) \setminus U$
the set~$I$ has an even number of incident edges. Note that the
variables forced by~$(\rho, I)$, can be described by a restriction
with the centers~$U$ killed: negate the values of any variable on any
path in~$I$ and then forget that the centers in~$U$ were alive. More
precisely, if $\rho_0$ denotes the assignment used in the construction
of $\rho$, then the restriction $\rho^*$ defined by
\begin{align}
  \label{eq:rho-star}
  \rho^*(x_e)
  &=
    \begin{cases}
      b
      &\text{if~$(\rho, I)$ forces~$x_e$ to~$b \in \set{0,1}$, and}\\
      \rho(x_e)
      &\text{otherwise,}
    \end{cases}
\end{align}
assigns the same variables to the same constants as $(\rho, I)$
forces. Thus, if we let such an information set~$I$ operate on a
restriction~$\rho$ we obtain a restriction with fewer live centers
where the killed centers are precisely the centers in~$U$.

\section{Proof of the Switching Lemma}
\label{sec:singleswitch}

This section is dedicated to the proof of the switching lemma restated
for convenience.

\singleSwitch*

The proof of \cref{lemma:switch} closely follows the argument
of~\cite{jhtseitin} with some simplifications and changes. In the
following section we give a detailed proof outline. The details of the
proof are fleshed out in
\cref{sec:ext-can-dec-tree,sec:ext-can-dec-tree-prop,sec:encode-rho,sec:encode}.

\subsection{Detailed Proof Outline}
\label{sec:switch-overview}

Let us recall the setup. We have a full
restriction~$\sigma = \sigma(\rho, \pi)$ as defined in
\cref{sec:restrictions} that consists of a restriction~$\rho$ and a
pairing~$\pi$ as introduced in \cref{sec:partial-restriction}. The
restriction~$\rho$ has~${(1 \pm 0.01)C \log(n)}$ alive centers in each
sub-square for some (large) constant~$C>0$.
For~${i \in [m]}$ we have decision trees~$T_i$ of depth at
most~$\depth(T_i)\leq t$ querying grid-edges. We want to bound the
probability that there is no decision tree of depth~$s \geq t$
representing~$\bigvee_{i=1}^{m}\restrict{T_i}{\sigma}$.

In the following we construct a decision tree~$\calT$ that
represents~$\bigvee_{i=1}^{m}\restrict{T_i}{\sigma}$ which is with
high probability over the choice of~$\rho$ of depth at most~$s$. Let
us stress that in contrast to the decision trees~$T_i$ that query
grid-edges, this decision tree~$\calT$ queries edges, that is, path
variables.

The decision tree~$\calT$ is constructed in a similar manner to the
construction of a canonical decision tree: we proceed in stages where
in each stage a branch~$\tau$ of~$\calT$ is extended by querying
variables related to the first $1$-branch~$\psi$ in the
trees~$\restrict{T_1}{\sigma\tau}, \restrict{T_2}{\sigma\tau}, \ldots,
\restrict{T_m}{\sigma\tau}$. For now it is not so important what the
``related variables of~$\psi$'' precisely are and we can simply think
of these as the variables on the branch~$\psi$. Once all these
variables have been queried we check in each new leaf of the tree
whether we traversed the path~$\psi$. If so, then we label the leaf
with a~$1$ and otherwise continue with the next stage. If there are no
$1$-branches left, then we label the leaf with a~$0$.

As argued in \cref{sec:proof-outline-high-level} it is quite immediate
that the above process indeed results in a decision tree~$\calT$ that
represents~$\bigvee_{i = 1}^m \restrict{T_i}{\sigma}$. It remains to
argue that~$\calT$ is with high probability of depth at most~$s$.

We analyze this event using the labeling technique of Razborov
\cite{Razborov1995}. The idea of this technique is to come up with an
(almost) bijection from restrictions~$\rho$ that give rise to a
decision tree~$\calT$ of depth larger than~$s$ to a set of
restrictions that is much smaller than the set of all restrictions. In
a bit more detail, given such a bad~$\rho$, we create a
restriction~$\rho^*$ with fewer live centers such that with a bit of
extra information we can recover the restriction~$\rho$
from~$\rho^*$. As~$\rho^*$ has roughly~$s$ fewer live centers
than~$\rho$ we obtain our statement by \cref{lem:final-counting}.

Let us explain in a bit more detail how to construct~$\rho^*$ from
a~$\rho$ that gives rise to a decision tree~$\calT$ of
depth~$\depth(\calT) \geq s$. To this end we first need to slightly
refine the construction process of~$\calT$. Namely, we need to discuss
what the ``related variables of a branch~$\psi$'' are. Instead of
thinking of this as a set of variables we rather want to think of it
as an information set~$J$, as introduced in
\cref{sec:information}. The information set~$J$ is a minimal set that
forces, along with the already collected information set on the
branch~$\tau$, the branch~$\psi$. Once we identified such a set~$J$,
we then query all necessary variables to see whether we agree with~$J$
(along with some further variables).

Recall that we are trying to explain how to construct a
restriction~$\rho^*$ from a restriction~$\rho$ that gives rise to a
decision tree~$\calT$ of large depth. Fix a long
branch~$\tau \in \calT$ and consider the sets~$J_1, J_2, \ldots, J_g$
identified in the different stages of the construction of the long
branch~$\tau$. For this proof overview, let us assume that each~$J_j$
is closed and that the support of these information sets are pairwise
disjoint. Let us stress that this is a slight simplification. Assuming
this holds, note that the union~$J^* = \cup_{i=1}^g J_j$ is also
closed and recall from \cref{sec:information} that all variables
forced by~$(\rho, J^*)$ can be described by a restriction where the
centers in~$\supp(J^*)$ are killed. This defines the
restriction~$\rho^*$: it is the restriction that forces all variables
forced by~$(\rho, J^*)$. Assuming that the support of~$J^*$ is large
we see that~$\rho^*$ has much fewer alive centers. Using
\cref{lem:final-counting} we obtain an upper bound on the failure
probability, assuming that we can easily recover~$\rho$ from~$\rho^*$.

It remains to argue that~$\rho$ can be recovered from from~$\rho^*$
with little extra information. The idea is to remove the set~$J_j$,
starting with~$j=1$, one-by-one from~$\rho^*$. To do this cheaply we
use the decision trees~$T_1, \ldots, T_m$. Recall that the information
set~$J_1$ determines all variables on the first
$1$-branch~$\psi_1$. This implies in particular that~$\rho^*$
traverses the branch~$\psi_1$. Hence identifying~$\psi_1$ is for free:
it is the first $1$-branch in~$T_1, \ldots, T_m$ traversed by~$\rho^*$
(since the set~$J_1$ is pairwise disjoint from all later
sets~$J_j$). Once we identified the branch~$\psi_1$ we want to recover
the first part of the long branch~$\tau$ so that we can repeat this
argument with~$J_2$. As~$\psi_1$ is of length at most~$t$, using
only~$\log t$ bits per variable, we indicate which variables are
different on~$\tau$ from~$J_1$. This lets us cheaply recover~$\tau$
along with the centers killed by~$J_1$. Repeating this argument~$g$
times allows us to recover the restriction~$\rho$.

This completes the proof overview. We allowed ourselves some
simplifications and left out a number of details.

\paragraph{Organization.} The proof of \cref{lemma:switch} spans the
following four sections.
In \cref{sec:ext-can-dec-tree} we define the extended canonical
decision tree~$\calT$ and in the subsequent
\cref{sec:ext-can-dec-tree-prop} we prove some crucial properties of
these decision trees.
In \cref{sec:encode-rho} we explain how \cref{lemma:switch} follows
from the encoding argument as outlined above. The proof of the
encoding lemma is the final part of the proof of the switching lemma
and is given in \cref{sec:encode}.

\subsection{Extended Canonical Decision Trees}
\label{sec:ext-can-dec-tree}

Sample a full restriction
$\sigma = \sigma(\rho, \pi) \sim \calD_k\bigl(\Sigma(n,n')\bigr)$ as
defined in \cref{sec:restrictions} and denote by $\chosen$ the chosen
centers of $\sigma$. Recall that $\rho$ has $(1 \pm 0.01)C \log n$
many alive centers in each sub-square. Let~$T_1, \ldots, T_m$ be
decision trees of depth at most~$t$ querying grid-edge variables of
the~$n \times n$ grid.

We intend to construct the \emph{extended canonical decision tree}
$\calT$ that represents $\bigvee_{i=1}^m \restrict{T_i}{\sigma}$. Note
that, in contrast to the decision trees $T_i$ that query variables of
the unrestricted formula, that is, they query grid-edges, the decision
tree $\calT$ queries variables of the \emph{restricted} formula, that
is, path variables $y_P$ where $P$ is a chosen path (a path connecting
two chosen centers in adjacent sub-squares).

\paragraph{Intuition.} Before formally defining the extended canonical
decision tree let us give an intuitive (and flawed) outline of the
construction. We would like to first construct a decision tree
$\widetilde \calT$ representing~$\bigvee_{i=1}^m T_i$ restricted
by~$\rho$ (instead of~$\sigma$) in the usual manner: proceed in stages
and in each stage extend a branch~$\tau$ of~$\widetilde \calT$ by
querying variables related to the first $1$-branch~$\psi$ in the
trees~$\restrict{T_1}{\rho\tau}, \restrict{T_2}{\rho\tau}, \ldots,
\restrict{T_m}{\rho\tau}$. For the moment, as in
\cref{sec:switch-overview}, we may think of the related variables as
the variables on the branch~$\psi$. Once we have queried all these
variables we check in each newly created leaf of~$\widetilde\calT$
whether we traversed~$\psi$: if so, then the leaf becomes a
$1$-leaf. Otherwise we proceed with the next stage. If no further
$1$-branches are left, then the leaf is labeled~$0$.

Note that such a decision tree~$\widetilde \calT$ not only queries
variables~$y_P$ for chosen paths~$P$ but queries variables associated
with \emph{any} path connecting two alive centers: while~$\calT$ knows
all of~$\sigma$ and hence only needs to query variables associated
with chosen paths, the tree~$\widetilde \calT$ is unaware\footnote{Let
  us remark that the pairing~$\pi$ is a function of the
  restriction~$\rho$. Hence given~$\rho$ the pairing~$\pi$ is known
  and hence this description makes formally little sense. For
  intuition, however, it is a quite insightful view.} of the
pairing~$\pi$ and hence does not know the identity of the chosen
centers. Once we have~$\widetilde \calT$ we would like to
define~$\calT$ (the decision tree representing
$\bigvee_{i=1}^m \restrict{T_i}{\sigma}$) as the restriction
of~$\widetilde \calT$ by~$\pi$, that is,
$\calT = \restrict{\widetilde \calT}{\pi}$.

Instead of analyzing the event that~$\calT$ has a branch of length~$s$
we would like to analyze the event that~$\widetilde \calT$ has a
branch~$\tilde \tau$ of length~$s$ which is pairwise locally
consistent with~$\pi$, that is, the restricted
branch~$\restrict{\tilde \tau}{\pi}$ is a traversable branch
in~$\calT$. If we manage to upper bound the probability of the latter
event, then this upper bound clearly also applies to the former.

The actual construction differs from the above intuitive
description. We do \emph{not} take the detour via the decision
tree~$\widetilde \calT$. Instead we directly define the decision tree
$\calT$ since we want to treat chosen path variables in a slightly
different manner than a path variable whose corresponding path simply
connects alive centers. The formal construction follows.

\paragraph{Setup.} We need to introduce two sets that guide the
construction of the extended canonical decision tree~$\calT$. We start
with~$\calT$ being the empty decision tree.  We extend the decision
tree~$\calT$ in \emph{stages}. In each stage we extend~$\calT$ at some
branch. By the end of a stage each branch~$\tau$ of~$\calT$ is
associated with
\begin{enumerate}
\item a subset of the alive centers~$S = S(\tau, \sigma)$ called the
  \emph{exposed centers}, and
\item an information set~$I = I(\tau, \sigma)$ supported on alive
  centers.
\end{enumerate}
Initially the sets~$S(\emptyset, \sigma)$ and~$I(\emptyset, \sigma)$
are empty. Throughout the construction of~$\calT$ we maintain the
following invariants.
\begin{invariant}\label[invariants]{inv:ext-can-dec-tree}
  Throughout the creation of~$\calT$ the following properties of~$S$
  and~$I$ are maintained.
  \begin{enumerate}
  \item No element is ever removed from~$S$ or~$I$. In other words,
    the sets~$S$ and~$I$ only become larger throughout the creation of
    a branch~$\tau$ of~$\calT$. 
    \label[inv]{inv:monotone}

  \item The information set~$I$ is locally consistent and closed
    on~$S$. \label[inv]{inv:local-consistency}
    
  \item If a center of a connected component of the pairing~$\pi$ is
    in the set of exposed centers~$S$, then the entire component is
    in~$S$. \label[inv]{inv:non-chosen-entire-component}
    
  \item The part of the information set~$I$ on the non-chosen
    centers~$S \setminus \chosen$ is a subset of the connected
    components of the pairing~$\pi$ in the form of a closed
    information set.\label[inv]{item:non-chosen-S}

  \item For every exposed chosen center~$v \in S\cap\chosen$ all the
    variables~$y_P$ incident to~$v$ are queried by the
    branch~$\tau$. The answers to these queries is precisely the
    information in~$I$: $1$-answers are recorded in the form of an
    edge while the $0$-answers are recorded as a non-edge in the
    appropriate direction.
    \label[inv]{item:chosen-S}
  \end{enumerate}
\end{invariant}

Let us stress that information about the pairing~$\pi$ comes from the
restriction~${\sigma(\rho, \pi)}$ and hence in order to maintain
\cref{item:non-chosen-S} we do not need to query a variable
in~$\calT$. This is in contrast to \cref{item:chosen-S}: querying a
variable~$y_P$ associated with a chosen path~$P$ causes a query in the
decision tree~$\calT$.
There is another subtle difference between \cref{item:non-chosen-S}
and \cref{item:chosen-S}: on the non-chosen centers we only have
information pieces in the information set~$I$ that are incident to the
exposed centers~$S$. On the other hand~$I$ may contain information
pieces incident to chosen centers that are not exposed, that is,
information pieces incident to chosen centers that are not
in~$S \cap \chosen$. Finally note that the information set~$I$ never
contains a path between a chosen center and a non-chosen center.

We need one last definition before we can formally define the extended
canonical decision tree~$\calT$. Recall that the construction of the
decision tree~$\calT$ commences in stages. In each stage we extend a
branch~$\tau$ of~$\calT$ by querying variables associated with a
$1$-branch~$\psi$ of a decision tree~$T_i$ pairwise locally consistent
with the branch~$\tau$ and the full restriction~$\sigma$. Hence there
is a unique minimum partial assignment to the variables of~$\psi$
pairwise locally consistent with~$\tau$ and~$\sigma$ to reach the
corresponding leaf. In the following we define the analogue of this
minimum partial assignment in terms of information pieces.

Intuitively a \emph{possible forcing information}~$J$ is a minimal
information set that jointly with the information
set~$I(\tau, \sigma)$ and the partial restriction~$\rho$
forces\footnote{According to \cref{def:force} a variable is forced if
  and only if the associated center is dead or we have information
  pieces in all directions at its associated center.} all variables on
the branch~$\psi$ to take the values given by this partial
assignment. We require some further properties as summarized in the
following definition.

\begin{definition}[possible forcing information]
  \label{def:forcing-info}
  Let~$\psi$ be a branch, let~$I$ be an information set, let~$S$ be a
  set of centers, and denote by~$\sigma(\rho,\pi)$ a full restriction.
  A \emph{possible forcing information} for~$\psi$ is a minimal
  information set~$J$ that satisfies the following.
  \begin{enumerate}

  \item The information sets~$I$ and~$J$ are pairwise locally
    consistent.\label[prop]{item:disjoint}

  \item If an associated center~$u$ of a variable on the branch~$\psi$
    is not in~$S$, then~$J$ is closed at~$u$.
    \label[prop]{item:closed}

  \item All variables on~$\psi$ are forced by~$(\rho, I \cup J)$ such
    that the leaf of~$\psi$ is reached.\label[prop]{item:determined}

  \item The part of the information set~$J$ on the non-chosen centers
    is a subset of the connected components of the pairing~$\pi$ in
    the form of a closed information set.\label[prop]{item:non-chosen-pi}

  \end{enumerate}
\end{definition}

Let us stress that a possible forcing information~$J$ never contains
an edge between a chosen center and a non-chosen center.
Note that a possible forcing information~$J$ may not be unique for a
given branch~$\psi$. If there are several sets as described above,
choose one in a fixed but otherwise arbitrary manner. While the choice
is not essential, we do need to establish that whenever some decision
tree~$T_i$ can still reach a 1-leaf, then there is a possible forcing
information~$J$. We postpone this to the following section (see
\cref{lemma:forced0}) and for now assume that such an information
set~$J$ exists whenever we have a branch~$\psi$ as described. With
these definitions in place we are ready to formally define the
extended canonical decision tree~$\calT$.

\begin{figure}
  \centering
  \includegraphics[page=4]{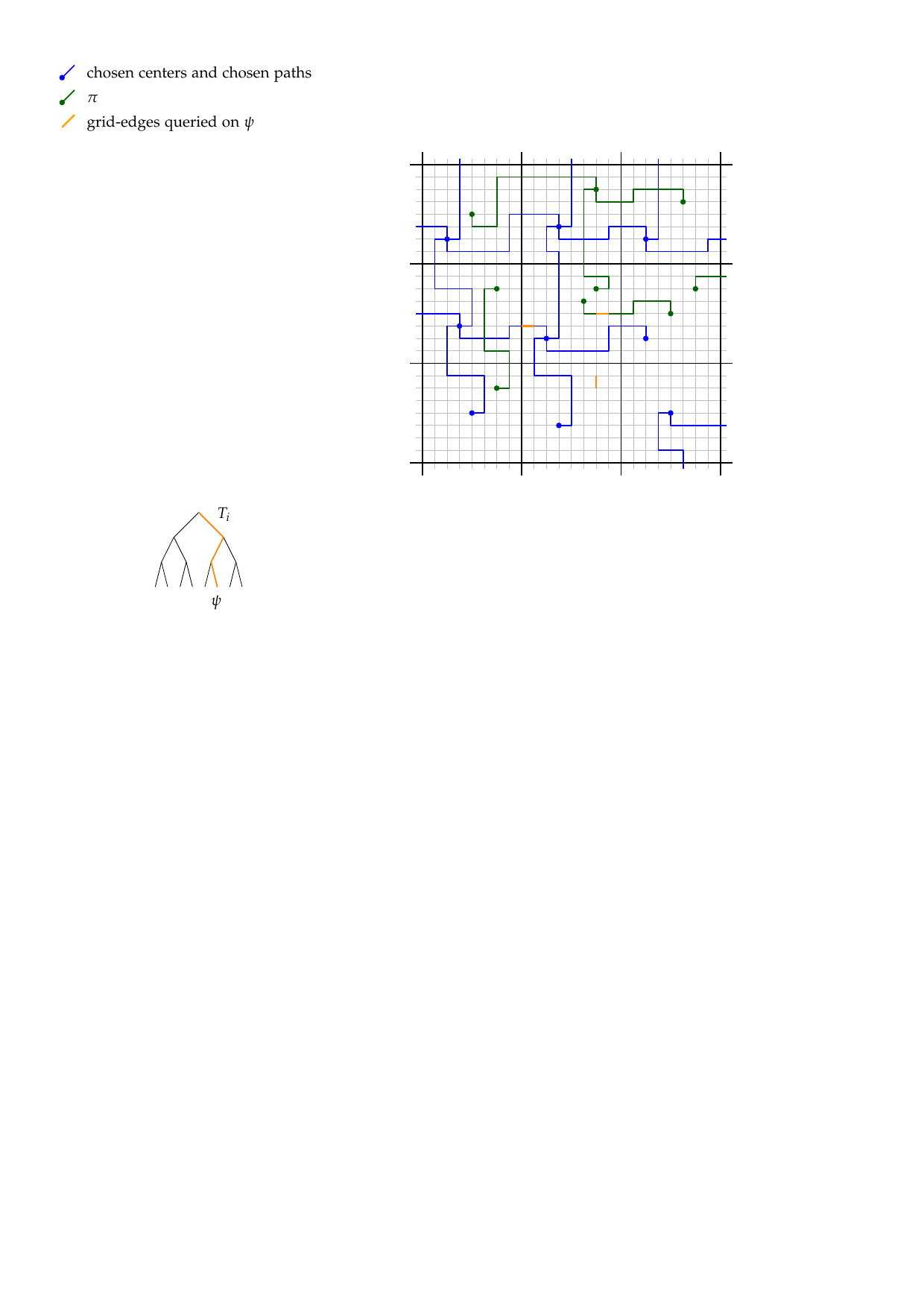}
  \caption{Depiction of the first stage in the construction of the
    extended canonical decision tree~$\calT$. For clarity we omitted
    the non-edges in~$J$ on non-chosen centers.}
  \label{fig:ext-can-dec-tree}
\end{figure}

\paragraph{Construction.} We provide pseudo-code (\cref{alg:cdt}) in
case of ambiguity in the following verbal
description. Initially~$\calT = \emptyset$ is the empty tree. In each
stage we fix a branch~$\tau$ in~$\calT$ and go over the decision
trees~$T_1, T_2, \ldots, T_m$ one by one. Suppose we consider the
decision tree~$T_i$. Let~$\psi$ be the first (in some arbitrary but
fixed order) $1$-branch of~$T_i$ such that
\begin{enumerate}
\item the branch~$\psi$ and the full restriction~$\sigma$ are pairwise
  locally consistent, and
\item the assignment\footnote{As defined in
    \cref{prop:def-consistency-beta} of
    \cref{def:local-consistency-full}.}~$\psi_\sigma$ induced by $\psi$
  on the smaller~$n'\times n'$ grid is locally consistent with $\tau$.
\end{enumerate}
If there is no such branch, then proceed with the decision
tree~$T_{i+1}$. If there is no such branch~$\psi$ for any decision
tree~$T_i$, then label the leaf~$\tau$ of~$\calT$ by $0$ and continue
with a different branch~$\tau'$ of~$\calT$ until all leaves of~$\calT$
are labeled. For the remainder of this section let us assume that
there is such a $1$-branch~$\psi$. Denote by~$J$ a possible forcing
information for the \emph{forceable branch}~$\psi$.

Let~$I_\pi$ be the information set from~$\pi$ incident to non-chosen
centers in~$\supp(J)$ and denote by~$S_J$ all chosen centers at
distance at most 1 from~$\supp(J)\cap\chosen$ with respect to the
smaller $n'\times n'$ grid.

Extend the decision tree~$\calT$ at the leaf~$\tau$ by querying all
variables incident to~$S_J$. Let~$\tau'$ be a newly created extension
of~$\tau$. Denote by~$I_{\tau' \setminus \tau}$ the information set
consisting of the answers to the queries on~$\tau' \setminus \tau$ and
the locally implied answers, that is, $1$-answers are recorded as an
edge and $0$-answers as a non-edge incident to the exposed center.

Update the bookkeeping objects 
\begin{enumerate}
\item $S(\tau', \sigma) =
  S(\tau, \sigma)
  \cup
  S_J
  \cup
  \supp(J)$ and
\item
  $I(\tau', \sigma) =
  I(\tau, \sigma)
  \cup
  I_{\tau'\setminus \tau}
  \cup
  I_\pi$.
\end{enumerate}
Finally check whether the information set~$I(\tau', \sigma)$ traverses
the forceable branch~$\psi$ of~$T_i$. Since all variables on~$\psi$
have their associated center in~$S$ and~$I$ is closed on~$S$ this can
be easily checked. If the forceable branch~$\psi$ is indeed followed,
then label the leaf~$\tau'$ with a $1$. Otherwise, that is, if the
forceable branch is not followed, then proceed with the next stage.

This completes the description of the creation of the extended
canonical decision tree~$\calT$
for~$\bigvee_{i=1}^m\restrict{T_i}{\sigma}$. If in the construction
of~$\calT$ a branch~$\tau$ is of length at least $s$, then we stop the
construction of~$\calT$. In the following we bound the probability of
ever reaching a stage such that the construction is stopped.

It is straightforward to check that the invariants hold after every
completed stage; we only need to recall that we enforce that all the
newly created branches~$\tau'$ are \emph{locally consistent} as
assignments on the smaller grid (see \cref{def:local-consistency-full}
and the discussion thereafter). Hence the information sets created are
locally consistent, that is, all the information
sets~$I(\tau', \sigma)$ are locally consistent. The other invariants
hold by construction.

\subsection{Some Properties of Extended Canonical Decision Trees}
\label{sec:ext-can-dec-tree-prop}

We need to show that the extended canonical decision tree~$\calT$
represents~$\bigvee_{i = 1}^m \restrict{T_i}{\sigma}$. This will be a
direct consequence of the postponed claim that if it is possible to
reach a 1-leaf in some decision tree $T_i$, then there is a possible
forcing information~$J$.

Observe that at any point when forming the extended canonical decision
tree the information~$I$ comes partly from the pairing~$\pi$ and from
queries already done in the decision tree~$\calT$ with
answers~$\tau$. Remember that~$\sigma = \sigma(\rho, \pi)$ includes
all the information from the pairing~$\pi$.

\begin{lemma}
  \label{lemma:forced0}
  If there is a $1$-branch $\psi$ in $\restrict{T_i}{\sigma}$ that is
  pairwise locally consistent with $\tau$, then there is a possible
  forcing information $J$ for $\psi$.
\end{lemma}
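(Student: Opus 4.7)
My plan is to exhibit an information set $J^0$ satisfying the three conditions of \cref{def:forcing-info} and then take $J$ to be any minimal subset of $J^0$ that still satisfies them. Let $W$ denote the set of alive associated centers of the grid-edge variables appearing on $\psi$.

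For each non-chosen center $v \in W$, I would put into $J^0$ the entire connected component of $v$ in the graphical pairing $\pi_0$, together with the non-edges $(u,\delta,\bot)$ needed to close every center $u$ of that component in all four directions. For each chosen center $v \in W$ and every direction $\delta$, I would add either the edge from $v$ to the neighboring chosen center (if the chosen-path variable $y_P$ from $v$ in direction $\delta$ equals $1$) or the non-edge $(v,\delta,\bot)$ (if $y_P$ equals $0$). Whenever $\psi$ queries a grid-edge lying on the chosen path $P$, its answer together with $\rho_0$ pins down $y_P$; for the chosen paths whose value is not forced by $\psi$, I pick any locally consistent completion of $\psi_\sigma \cup \tau$ on the smaller grid, which exists by the hypothesis that $\psi_\sigma$ is locally consistent with $\tau$ together with \cref{lemma:extendone}.

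The three properties of \cref{def:forcing-info} can then be verified as follows. Property~3 is immediate by construction. For property~2, any variable $x_e$ on $\psi$ has its associated center either dead in $\rho$ (and then $x_e$ is automatically forced to $\rho_0(x_e)$, which equals $\sigma(x_e) = \psi(x_e)$ since no alive path passes through $e$) or in $W$ and closed by $J^0$ in all four directions; a direction-by-direction check using \cref{def:force} and the definitions of $\rho$, $\pi$, and $\sigma$ shows that the forced value coincides with $\psi(x_e)$. For property~1, the non-chosen pieces of $I$ and $J^0$ both come from the connected components of $\pi_0$ and hence agree, while at exposed chosen centers $I$ records the answers to $y_P$ queries made along $\tau$, which match the $y_P$ values encoded into $J^0$ since $\psi$ is pairwise locally consistent with $\sigma$ and $\psi_\sigma$ with $\tau$.

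The main obstacle is making the per-direction choices at chosen centers in $W$ fit together globally: at each chosen $v$ the pieces of $J^0$ must form a locally consistent (odd-edge-degree) closed structure, and distinct chosen centers in $W$ must agree on any chosen path they share. Both requirements are guaranteed by committing to a single globally consistent extension of $\psi_\sigma \cup \tau$ on the smaller grid, which is the step where the local consistency hypothesis on $\psi$ is essential.
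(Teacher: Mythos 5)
Your proposal is correct and follows essentially the same route as the paper's proof: extend the branch data to a pairwise locally consistent complete assignment $\psi^+$ on the relevant associated centers via \cref{lemma:extendone}, read off information pieces at chosen centers from that extension and at non-chosen centers from the pairing $\pi$, and pass to a minimal subset. The only difference is presentational — you phrase the extension in terms of $y_P$-values deduced from the grid-edge queries and $\rho_0$ rather than directly extending the induced assignment $\psi$ on the smaller grid, and you spell out the verification of \cref{item:determined} that the paper leaves as ``by construction.''
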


\begin{proof}
  Extend~$\psi$ (an assignment to the $n' \times n'$ grid) to an
  assignment $\psi^+ \supseteq \psi$ such that
  \begin{enumerate}
  \item the assignments~$\psi^+$ and~$\tau$ are pairwise locally
    consistent, and
  \item the assignment~$\psi^+$ is complete on all associated
    centers~$u$ of variables on~$\psi$ except on the associated
    centers~$u$ on which~$\tau$ is already complete.
  \end{enumerate}

  According to \cref{lemma:extendone} such an extension $\psi^+$
  exists, assuming that $\lvert\supp(\psi)\rvert \leq n'/32$ and
  $\lvert \supp(\tau)\rvert \leq n'/4$.
  
  Let $\psi_0$ be the branch in~$T_i$ that gives rise to~$\psi$. Note
  that~$\psi_0$ is an assignment to the $n\times n$ grid. Let us
  construct a possible forcing information~$J$ such that~$\psi_0$ is
  followed.

  Add information pieces next to chosen centers as indicated
  by~$\psi^+$. The information pieces next to non-chosen centers are
  the relevant information pieces from $\pi$. 
  
  We claim that the information set~$J$ that contains the
  just-mentioned information pieces is a valid possible forcing
  information for~$\psi_0$ if we ignore the minimality condition. Let
  us check the other properties of \cref{def:forcing-info}. Let $I$ be
  the information set corresponding to $\tau$.

  Note that since~$\tau$ and~$\psi^+$ are locally consistent, so are
  $I$ and $J$. Hence \cref{item:disjoint} follows and
  \cref{item:closed,item:determined,item:non-chosen-pi} follow by
  construction.
  Hence any minimal subset of~$J$ with the above properties
  constitutes a forcing information for~$\psi_0$. This completes the
  proof of the lemma.
\end{proof}

As an immediate corollary we have that the decision tree~$\calT$ is
indeed a legitimate choice
for~$\bigvee_{i=1}^m \restrict{T_i}{\sigma}$.

\begin{corollary}\label{cor:represent}
  The extended canonical decision tree~$\calT$
  represents~$\bigvee_{i=1}^m \restrict{T_i}{\sigma}$.
\end{corollary}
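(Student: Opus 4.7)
The plan is to verify the two defining conditions of ``represents'' separately: that every $1$-leaf of $\calT$ witnesses some $\restrict{T_i}{\sigma\tau} = 1$, and that every $0$-leaf forces $\restrict{T_i}{\sigma\tau} = 0$ for all $i$. Both parts follow essentially by inspection of the construction together with \cref{lemma:forced0}, but one has to be careful about the distinction between the decision trees $T_i$ (querying grid-edge variables of the $n\times n$ grid) and the decision tree $\calT$ (querying path variables on the $n'\times n'$ grid).

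For the $1$-leaf case, I would trace the construction: a leaf $\tau'$ of $\calT$ is labelled $1$ only when the construction reached a stage with a forceable branch $\psi$ of some $T_i$ and possible forcing information $J$, extended $\tau$ by querying all variables incident to $S_J$, and the resulting information set $I(\tau',\sigma)$ traversed $\psi$. By \cref{def:forcing-info} \cref{item:determined} applied to the information set $I(\tau,\sigma) \cup J \subseteq I(\tau', \sigma)$, together with the fact that $\tau'$ records exactly the truthful answers under $\sigma$ (and non-chosen center information comes from $\pi$ itself), all variables on $\psi$ are forced by $(\rho, I(\tau', \sigma))$ to the values taken on the $1$-branch of $T_i$. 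Consequently $\restrict{T_i}{\sigma\tau'}$ evaluates to $1$, which is what is needed.

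For the $0$-leaf case, a leaf $\tau$ is labelled $0$ only when for every $i \in [m]$ there is no $1$-branch $\psi$ of $T_i$ that is both pairwise locally consistent with $\sigma$ and induces an assignment locally consistent with $\tau$. I will argue the contrapositive: if $\restrict{(\restrict{T_i}{\sigma})}{\tau}$ had a $1$-branch, then that branch lifts to a $1$-branch $\psi$ of $T_i$ which is pairwise locally consistent with $\sigma$ (so it survives the restriction by $\sigma$) and whose induced assignment on the smaller grid is consistent with $\tau$. By \cref{lemma:forced0} a possible forcing information $J$ for $\psi$ then exists, contradicting the stopping condition that produced the $0$-label. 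Hence $\restrict{T_i}{\sigma\tau}$ has no $1$-branch, and since all its leaves are labelled, it must be a $0$-tree.

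The only delicate point is bookkeeping: one must check that the invariants of \cref{inv:ext-can-dec-tree} ensure that $I(\tau,\sigma)$ is a legitimate witness of the forced values (local consistency, closure on exposed centers, and that the non-chosen part is genuinely inherited from $\pi$), so that evaluating $\restrict{T_i}{\sigma\tau}$ with these forced values indeed agrees with the labelling rule used during construction. Given that these invariants were established inductively at the end of \cref{sec:ext-can-dec-tree}, no further combinatorial work is required and the corollary follows directly.
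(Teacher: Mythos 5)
The paper dispatches this corollary in a single sentence (``As an immediate corollary we have\ldots''), so your fleshed-out two-case verification is a reasonable expansion and reaches the right conclusion. However, the intermediate claim you use to justify the $1$-leaf case is not quite right. You assert $I(\tau,\sigma) \cup J \subseteq I(\tau',\sigma)$ and then appeal to \cref{item:determined} of \cref{def:forcing-info}. That inclusion does not hold in general: $J$ is merely a \emph{possible} forcing information chosen before the queries are answered, while $I(\tau',\sigma) = I(\tau,\sigma) \cup I_{\tau'\setminus\tau} \cup I_\pi$ records the \emph{actual} answers along $\tau'$. Only the non-chosen part of $J$ (coming from $\pi$) is guaranteed to land in $I_\pi$; the chosen-center pieces of $J$ can disagree with $I_{\tau'\setminus\tau}$ even when $\psi$ is traversed, because traversing $\psi$ only constrains the information pieces in the directions of the variables on $\psi$, not every piece of $J$. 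The argument you actually need is simpler and is already contained in the preceding sentence of your proof: the $1$-label is assigned precisely when the explicit check succeeds, i.e.\ when $(\rho, I(\tau',\sigma))$ forces all variables on $\psi$ to follow $\psi$, and since $I(\tau',\sigma)$ encodes exactly what $\sigma$ and $\tau'$ together determine, this directly gives $\restrict{T_i}{\sigma\tau'} = 1$. Drop the detour through $J$ and the subset claim, and the case is clean. The $0$-leaf case is fine, though the invocation of \cref{lemma:forced0} there is superfluous for the contradiction itself — the existence of a branch $\psi$ of $T_i$ satisfying conditions (1) and (2) already contradicts the stopping condition; the real role of \cref{lemma:forced0} is to guarantee the construction never gets stuck for lack of a forcing information, which is what makes the $0$-label meaningful in the first place.
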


We have three auxiliary lemmas regarding forcing information and the
size of the set of exposed centers~$S$.

\begin{lemma}\label{clm:size-J}
  In each stage at most~$4\lvert\supp(J)\rvert$ centers are added to
  the set of exposed centers~$S$.
\end{lemma}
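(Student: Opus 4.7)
The plan is a direct counting argument based on the update rule in the construction of the extended canonical decision tree, which sets $S(\tau', \sigma) = S(\tau, \sigma) \cup S_J \cup \supp(J)$. Hence the number of centers newly added to $S$ during a stage is at most $|S_J \cup \supp(J)|$, and it suffices to bound this cardinality linearly in $|\supp(J)|$.

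First I would note that $\supp(J) \cap \chosen \subseteq S_J$, since every chosen center lies at distance $0$ from itself in the $n' \times n'$ chosen-center grid. Consequently
\begin{align*}
  S_J \cup \supp(J) \;=\; S_J \;\cup\; \bigl(\supp(J) \setminus \chosen\bigr)\eqcomma
\end{align*}
which avoids double counting the chosen centers of $\supp(J)$ between the two sets, and reduces the task to bounding $|S_J|$ and $|\supp(J) \setminus \chosen|$ separately.

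Next I would bound $|S_J|$ using that the grid of chosen centers is $4$-regular: each center in $\supp(J) \cap \chosen$ contributes itself plus at most $4$ grid-neighbors to $S_J$. Combining this with the trivial estimates $|\supp(J) \cap \chosen| \leq |\supp(J)|$ and $|\supp(J) \setminus \chosen| \leq |\supp(J)|$ then yields the claimed linear bound $|S_J \cup \supp(J)| \leq 4|\supp(J)|$ after routine simplification.

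I do not expect any real obstacle: the argument is pure counting from the definition of $S_J$ and the degree bound of the chosen-center grid. The only care required is to avoid double counting the chosen centers that sit in both $\supp(J)$ and $S_J$, which is exactly what the decomposition in the second paragraph ensures.
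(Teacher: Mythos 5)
There is a genuine gap: the arithmetic does not close. With your decomposition, $S_J \cup \supp(J) = S_J \,\disjointunion\, (\supp(J) \setminus \chosen)$, and the naive $4$-regularity bound gives $\lvert S_J\rvert \le 5\lvert\supp(J)\cap\chosen\rvert$ (each $u$ contributes itself plus at most four grid neighbors). Writing $a = \lvert\supp(J)\cap\chosen\rvert$ and $b = \lvert\supp(J)\setminus\chosen\rvert$, you get $\lvert S_J \cup \supp(J)\rvert \le 5a + b$, and the target is $4(a+b)$; that requires $a \le 3b$, which fails outright (e.g.\ when $b = 0$). So the ``routine simplification'' you invoke is not routine — the degree bound alone gives $5\lvert\supp(J)\rvert$, not $4\lvert\supp(J)\rvert$.

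The missing ingredient is the minimality of $J$. The paper argues that for each chosen center $u \in \supp(J)\cap\chosen$, at least one of its four grid-neighbors is \emph{already} in $S \cup \supp(J)$, so $u$ contributes at most $3$ new centers to $S_J \setminus \supp(J)$. The case analysis is: by minimality, $u$ is either an associated center of a variable on $\psi$ (and then either $J$ is closed and locally consistent at $u$, forcing an odd number of incident edges in $J$ and hence a neighbor already in $\supp(J)$, or $u$ is adjacent to an already-exposed chosen center in $S$), or $u$ is merely adjacent to an associated center, which itself lies in $S \cup \supp(J)$. This gives $\lvert S_J \setminus \supp(J)\rvert \le 3\lvert\supp(J)\rvert$, hence the bound $4\lvert\supp(J)\rvert$. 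Without an argument of this kind that ``saves'' one neighbor per chosen center, the factor $4$ is not reachable from the degree bound alone.
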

\begin{proof}
  We add~$\supp(J) \cup S_J$ to the set of exposed centers~$S$. Recall
  that~$S_J$ is the set of chosen centers at distance at most $1$ from
  $\supp(J) \cap \chosen$. It suffices to argue that every chosen
  center in~$\supp(J)$ is adjacent to at least one other chosen center
  in~$\supp(J)$: if this holds, then for every chosen center
  in~$\supp(J)$ we add at most~$3$ chosen centers
  to~$S_J \setminus \supp(J)$.
  
  By minimality of~$J$ each non-exposed chosen
  center~$u \in \supp(J) \cap \chosen$ is either an associated center
  of a variable on~$\psi$ or is adjacent to such an associated
  center. In the latter case we are done. If~$u$ is an associated
  center, then the forcing information~$J$ contains information pieces
  in all directions incident to~$u$ and hence, since~$J$ is locally
  consistent, there is at least one edge incident to~$u$
  in~$J$. The claim follows.
\end{proof}

\begin{corollary}\label{clm:size-s}
  In each stage at most $16t$ centers are added to the set of exposed
  centers $S$.
\end{corollary}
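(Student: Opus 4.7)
The plan is to combine \cref{clm:size-J} with a bound $|\supp(J)| \leq 4t$; given the former says at most $4|\supp(J)|$ centers are added per stage, the corollary then follows immediately. Since the branch $\psi$ has at most $t$ variables and each variable has a single associated center, the set $A$ of distinct associated centers of variables on $\psi$ satisfies $|A| \leq t$, so it suffices to show $|\supp(J)| \leq 4|A|$.

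I would argue this by bounding the chosen and non-chosen parts of $\supp(J)$ separately. For the chosen part, the proof of \cref{clm:size-J} already observes that every chosen center in $\supp(J)$ is either itself in $A$ or adjacent via a $J$-edge to a center in $A$, and the local-consistency argument used in the proof of \cref{lem:J-disappear} shows that each chosen associated center $v \in A \cap \supp(J)$ contributes at most $3$ other chosen centers to $\supp(J)$ (the saving over the naive $4$ neighbors comes from local consistency of $J$ forcing an odd number of $J$-edges at $v$). Counting $v$ itself plus at most $3$ edge-neighbors yields $|\supp(J) \cap \chosen| \leq 4|A \cap \chosen|$.

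For the non-chosen part, \cref{item:non-chosen-pi} of \cref{def:forcing-info} says that the non-chosen portion of $J$ is a union of connected components of the pairing $\pi$, each of size at most $4$. By minimality of $J$, each such component included in $J$ must contain at least one associated center of a variable on $\psi$; otherwise the entire component could be deleted from $J$ without violating \cref{item:disjoint,item:determined,item:non-chosen-pi} of \cref{def:forcing-info}, contradicting minimality. This yields $|\supp(J) \setminus \chosen| \leq 4|A \setminus \chosen|$.

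Summing, $|\supp(J)| \leq 4|A| \leq 4t$, and combining with \cref{clm:size-J} gives the claimed growth of at most $4 \cdot 4t = 16t$ per stage. The only real subtlety, and the place I expect a careful reader to want a line of justification, is the factor of $3$ (rather than $4$) in the edge-neighbor count: this is where local consistency of $J$ at a chosen associated center outside the previously exposed set $S$ is invoked, and the rest of the argument is direct counting against the structure of the pairing components and the associated-center set $A$.
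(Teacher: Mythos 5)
Your proof is correct and follows essentially the same route as the paper's: bound $\lvert\supp(J)\rvert \leq 4t$ by charging each center in $\supp(J)$ to the at most $t$ associated centers of variables on $\psi$, then invoke \cref{clm:size-J}. The paper states the $\lvert\supp(J)\rvert \leq 4t$ bound more tersely ("for each variable $x_e$ on $\psi$ there are at most $4$ chosen centers in $\supp(J)$ if the associated center is chosen and at most $4$ non-chosen centers otherwise"), whereas you spell out the $1 + 3$ accounting from local consistency and the minimality argument for pairing components; these are the same justifications the paper relies on implicitly and in the neighboring proofs of \cref{lem:J-disappear,clm:size-J}.
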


\begin{proof}
  A forceable branch~$\psi$ is of length at most~$t$ as the decision
  trees~$T_i$ are of depth at most~$t$. For each variable~$x_e$
  on~$\psi$ there are at most~$4$ chosen centers in $\supp(J)$ if the
  associated center of~$x_e$ is chosen and at most~$4$ non-chosen
  centers if the associated center is
  non-chosen. Hence~$\lvert\supp(J)\rvert \leq 4t$. The statement
  follows by \cref{clm:size-J}.
\end{proof}

\begin{lemma}\label{lem:J-disjoint}
  Consider a branch~$\tau \in \calT$ along with the possible forcing
  information~$J_1, J_2, \ldots$ used in the different stages of the
  construction of $\tau$. For~$j \neq j'$ it holds that~$\supp(J_j)$
  and~$\supp(J_{j'})$ are disjoint.
\end{lemma}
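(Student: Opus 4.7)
The plan is to argue by the minimality built into the definition of a possible forcing information. Suppose for contradiction that some center $v$ lies in $\supp(J_j) \cap \supp(J_{j'})$ for indices $j < j'$, and write $S$ and $I$ for the exposed-center set and information set attached to $\tau$ after the first $j'-1$ stages of the construction. I would then exhibit a strictly smaller information set $J \subsetneq J_{j'}$ that, together with $I$, still satisfies all three conditions of \cref{def:forcing-info} for $\psi_{j'}$; this contradicts the minimality clause in the definition of a possible forcing information.

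First I would argue that $v$ is already closed in $I$ when stage $j'$ begins. The stage-$j$ update rule inserts $\supp(J_j)$ into $S$, and \cref{inv:monotone} prevents $S$ from shrinking, so $v\in S$ at the start of stage $j'$; \cref{inv:local-consistency} then yields that $I$ has an information piece at $v$ in each of the four directions.

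Next I would split on whether $v$ is chosen. In the chosen case I would let $J$ be $J_{j'}$ with every piece incident to $v$ deleted. Pairwise local consistency of $I$ and $J_{j'}$ (\cref{item:disjoint}), combined with closure of $I$ at $v$, forces each deleted piece to appear verbatim in $I$: a non-edge of $J_{j'}$ at $v$ in direction $\delta$ must coincide with the unique piece of $I$ at $v$ in direction $\delta$, and an edge $\{v,w\} \in J_{j'}$ must coincide with the unique edge of $I$ at $v$ in the direction of $w$ (any mismatch would violate local consistency at $v$). Hence $I \cup J$ carries the same information at $v$ as $I \cup J_{j'}$, preserving \cref{item:determined}; \cref{item:disjoint,item:non-chosen-pi} are immediate because only pieces incident to a chosen center are removed. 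In the non-chosen case, \cref{inv:non-chosen-entire-component,item:non-chosen-S} imply that the entire $\pi$-component $C$ through $v$ lies in $S$ and is already recorded in $I$, and since \cref{item:non-chosen-pi} of \cref{def:forcing-info} demands that the non-chosen part of $J_{j'}$ be a union of $\pi$-components, we have $C \subseteq J_{j'}$; I would then delete exactly the pieces of $C$ from $J_{j'}$ to obtain $J$, and once more $I$ supplies the removed information.

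The subtlest point, and the one I expect to require the most care, is that in the chosen case removing an edge $\{v,w\}$ from $J_{j'}$ simultaneously strips a piece at $w$, and $w$ may be an associated center of some variable on $\psi_{j'}$ that is not in $S$ and therefore depends on $J_{j'}$ for its closure. The resolution is precisely the symmetric nature of edge information: the identical edge $\{v,w\}$ already lies in $I$ (as argued above from pairwise local consistency at $v$), so $I$ also provides the piece at $w$ in the direction of $v$, and the closure of $w$ in $I \cup J$ is inherited directly from $I$. With this, $J \subsetneq J_{j'}$ meets all three conditions of \cref{def:forcing-info} for $\psi_{j'}$, contradicting the minimality of $J_{j'}$ and establishing the lemma.
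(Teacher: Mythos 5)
Your proof is correct, and it takes a route that is genuinely different from the paper's.

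The paper (taking $j' < j$) uses a \emph{weaker} minimality observation about $J_j$: namely, that any center of $\supp(J_j)$ that is already exposed at the start of stage~$j$ must be a chosen center adjacent to some \emph{non-exposed} chosen center. It then relies on the construction's explicit inclusion of $S_J$ (all chosen centers at distance one from $\supp(J)\cap\chosen$) into the exposed set, so that every chosen neighbor of $\supp(J_{j'})\cap\chosen$ is exposed before stage $j$ begins; the weak minimality claim and this neighborhood-inclusion fact then jointly rule out any overlap. You instead prove a strictly \emph{stronger} minimality fact directly: if $v$ is exposed and in $\supp(J_{j'})$, then every information piece of $J_{j'}$ incident to $v$ is forced (by pairwise local consistency plus closure of $I$ at $v$) to appear verbatim in $I$, so deleting those pieces yields $I\cup J = I\cup J_{j'}$ and hence a strictly smaller valid forcing information. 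Your route does not use the $S_J$ inclusion at all and shows that $S^*_{j'-1}\cap\supp(J_{j'})$ is in fact empty, not merely restricted. You also correctly handled both of the tricky points: the non-chosen case via whole $\pi$-components, and the ``loss of a piece at a non-exposed $w$'' objection by observing the removed edge is already supplied by $I$. One presentational remark: it is worth making explicit that $v \in \supp(J_j)$ implies $v \in S^*_j \subseteq S^*_{j'-1}$ because the construction updates $S$ by adjoining $\supp(J)\cup S_J$ at the end of each stage and \cref{inv:monotone} ensures $S$ never shrinks; you gesture at this but it is the load-bearing step that activates \cref{inv:local-consistency}.
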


\begin{proof}
  Denote by~$1, 2, \ldots$ the stages used in the construction of the
  branch~$\tau$ and let~$S_{j-1}^*$ be the set of exposed centers~$S$
  at the beginning of stage~$j$. Suppose $j' < j$.

  By minimality of~$J_j$ it holds that~$S_{j-1}^* \cap \supp(J_j)$
  only contains chosen centers that are adjacent to a non-exposed
  chosen center.
  Since in stage~$j'$ all centers at distance at most $1$
  from~$\supp(J_{j'}) \cap \chosen$ are added to the set of exposed
  centers it holds that (1) $\supp(J_{j'}) \subseteq S^*_{j-1}$ and
  that (2) all chosen centers adjacent to~$\supp(J_{j'}) \cap \chosen$
  are in~$S^*_{j-1}$.  Hence~$\supp(J_j)$ cannot contain a center
  from~$\supp(J_{j'})$. The claim follows.
\end{proof}

\subsection[Encoding \texorpdfstring{$\rho$}{ρ}]{Encoding $\bm{\rho}$}
\label{sec:encode-rho}

We want to bound the number of restrictions~$\rho$ that give rise to
an extended canonical decision tree~$\calT$ of
depth~$\depth(\calT) \geq s$. Fix such a restriction~$\rho$ along with
the extended canonical decision tree~$\calT$ and a branch~$\tau$ of
length at least~$s$. Denote by $1, 2, \ldots$ the stages in which the
branch~$\tau$ was constructed.

Since the branch~$\tau$ is of length at least~$s$ there is a first
stage~$g$ such that by the end of stage~$g$ at least~$s/4$ centers are
exposed: only variables incident to exposed chosen centers are queried
and each exposed center causes at most $4$ queries on the
branch~$\tau$. In other words, if we let~$\tau_g \subseteq \tau$ be
the branch constructed by the end of stage~$g$, then the set of
exposed centers~$S^*_g = S(\tau_g, \sigma)$ is for the first time of
size~$\lvert S^*_g \rvert \geq s/4$. We analyze the event of ever
reaching such a stage $g$.

Note that~$\lvert S^*_g\rvert < s/4 + 16t$ by \cref{clm:size-s}
and~$g \leq s/4$ as in each stage at least one center is added to the
set of exposed centers~$S$. For~$j \in [g]$ we let the forceable
branch of stage~$j$ in the decision tree~$T_{i_j}$ be denoted
by~$\psi_j$, let~$J_j$ be the corresponding possible forcing
information and let~$\tau_j \subseteq \tau_g$ be the branch in~$\calT$
created by the end of stage~$j$. Let~$S^*_j = S(\tau_j, \sigma)$,
denote the information set added to~$I$ in stage~$j$ by~$I_j$, and
let~$I_j^* = I(\tau_j, \sigma)$, or
equivalently~$I_j^* = \cup_{i=1}^{j} I_i$, be the information set
gathered during the first~$j$ stages.

Let $\cl_j \supseteq J_j$ be the information set obtained from $J_j$
by adding non-edge information pieces in direction~$\delta$ incident
to chosen centers~$v \in \supp(J_j) \cap \chosen$ if
\begin{enumerate}
\item the information set~$J_j$ has an odd number of edge information
  pieces incident to~$v$, and
\item there is no information piece in~$J_j$ incident to~$v$ in
  direction~$\delta$.
\end{enumerate}
Note that since~$\supp(J_j) = \supp(\cl_j)$ it holds by
\cref{lem:J-disjoint} that~$\supp(\cl_j)$ and~$\supp(\cl_{j'})$ are
disjoint for~$j \neq j'$. This implies in particular that the
information pieces in~$\cl_j$ and~$\cl_{j'}$ are not in direct
contradiction.

Let~$\cl^* = \bigcup_{j=1}^{g} \cl_j $ and define~$\rho^*$ to be the
restriction as defined in \cref{eq:rho-star} that forces the same
variables as~$(\rho, \cl^*)$ does. Observe that alive centers~$v$
of~$\rho$ with an odd number of edge information pieces incident in
some information set~$J_j$ are now dead in~$\rho^*$. The centers alive
in~$\rho$ but dead in~$\rho^*$ are the \emph{disappearing centers}.

Let~$a_j$ be the number of associated centers of variables on~$\psi_j$
that are also in~$\supp(J_j) \setminus S^*_{j-1}$. Note that by
\cref{item:closed} of \cref{def:forcing-info} all these associated
centers are disappearing centers. Denote by~$b_j$ the number of
additionally disappearing centers in~$\cl_j$, and
let~$a = \sum_{j=1}^g a_j$, $b = \sum_{j=1}^g b_j$.
We have the following relation between these parameters.

\begin{lemma}\label{lem:abc}
  It holds that~$b \leq 3a$, and~$s/64 \leq a$.
\end{lemma}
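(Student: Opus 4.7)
The plan is to work stage-by-stage, bounding the contribution of each information set~$\cl_j$ in terms of~$a_j$ via two structural properties of the possible forcing information~$J_j$.

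First I would verify that $\supp(J_j) \cap S^*_{j-1} = \emptyset$, so that $a_j$ counts precisely the associated centers of variables on~$\psi_j$ lying in $\supp(J_j)$. Every chosen center of $S^*_{j-1}$ is fully queried by \cref{item:chosen-S}, and every non-chosen center of $S^*_{j-1}$ has its whole pairing component already recorded in~$I$ by \cref{inv:non-chosen-entire-component,item:non-chosen-S}; in either case $I(\tau_{j-1}, \sigma)$ is closed at the center, and by the minimality of~$J_j$ such a center cannot carry any $J_j$-info.

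The crucial structural bound is $\lvert\supp(J_j)\rvert \leq 4 a_j$. For the chosen part of $\supp(J_j)$, minimality of~$J_j$ combined with \cref{def:local-consistency-info} forces each non-associated chosen center to be joined by a $J_j$-edge to some associated chosen center of $\supp(J_j)$; since any associated chosen center of $\supp(J_j)$ is not closed in~$I$, local consistency of $I\cup J_j$ caps its $J_j$-edges at three, yielding at most three non-associated chosen centers per associated one. For the non-chosen part, by \cref{item:non-chosen-pi} the non-chosen part of $J_j$ is a union of pairing components (each of size at most four), and minimality forces each such component to contain at least one associated non-chosen center of a variable on~$\psi_j$; so each component contributes at most three non-associated members per associated one. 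Summing the two bounds gives $\lvert\supp(J_j)\rvert \leq a_j + 3 a_j$.

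The two inequalities now follow quickly. By \cref{lem:J-disjoint} the supports $\supp(J_{j'})$ are pairwise disjoint, hence any center of $\supp(J_j)$ disappears only through~$J_j$ itself; therefore $b_j \leq \lvert\supp(J_j)\rvert - a_j \leq 3 a_j$, and summing gives $b \leq 3a$. Moreover, the proof of \cref{clm:size-J} yields $\lvert S_{J_j} \cup \supp(J_j)\rvert \leq 4\lvert\supp(J_j)\rvert \leq 16 a_j$, so $\lvert S^*_g\rvert \leq 16 a$; since $c = \lvert S^*_g\rvert - a - b$ and $b \geq 0$, we conclude $c \leq 15 a$. The main obstacle is the structural bound $\lvert\supp(J_j)\rvert \leq 4 a_j$, which requires exploiting the minimality of~$J_j$ in two distinct ways—via edge-adjacency on the chosen side and via whole pairing components on the non-chosen side—with a matching ``per associated center'' budget of three non-associated members in each case.
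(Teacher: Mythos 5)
Your proof is correct and takes essentially the same route as the paper's: both establish the key structural bound $\lvert\supp(J_j)\rvert \leq 4a_j$ by the same two-part argument (at most three non-associated chosen centers per associated chosen center via local consistency, and at most three non-associated non-chosen centers per associated non-chosen center via the size-four pairing components), then deduce $b \leq 3a$ from $a_j + b_j \leq \lvert\supp(J_j)\rvert$ and $c \leq 15a$ from \cref{clm:size-J}. The only small deviation is that you additionally argue $\supp(J_j) \cap S^*_{j-1} = \emptyset$; the paper does not claim this (indeed, the proof of \cref{lem:J-disjoint} explicitly allows this intersection to be nonempty while restricting what it may contain), but since the paper's definition of $a_j$ already excludes centers in $S^*_{j-1}$, this extra verification is a clean simplification of the bookkeeping rather than a necessary step, and the downstream counting is identical.
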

\begin{proof}
  For each stage $j\in [g]$ it holds
  that~$\supp(J_j)$ contains at
  most~$4a_j$ centers: each chosen center
  $u\in\supp(J_j)$ is either an associated center of a variable
  on~$\psi_j$ and in~$u \in\supp(J_j) \setminus S^*_{j-1}$ or
  $u$ is adjacent to such an associated center. For each of the former
  there are at most 3 of the latter
  since~$J_j$ is locally consistent. Similarly, since
  $\pi$ has components of size 4, each non-chosen associated center
  causes at most 3 other non-chosen centers to be included in
  $\supp(J_j)$.
  Since $a_j + b_j \leq \lvert \supp(\cl_j) \rvert = \lvert \supp(J_j)
  \rvert \leq 4a_j$ we obtain the desired inequality $b \leq 3a$.

  It remains to establish the inequality $s/64 \leq a$.  By
  \cref{clm:size-J} and the just-established inequality
  $\lvert \supp(J_j) \rvert \leq 4a_j$ we get that
  $\lvert S^*_g\rvert \leq 4\sum_{i=1}^g \lvert \supp(J_j) \rvert \leq
  16a$. Since~$\lvert S^*_g \rvert \geq s/4$ we obtain the desired
  inequality.
\end{proof}

In the following section we prove the next lemma stating that a
restriction $\rho$ that causes the extended canonical decision tree to
have a path of length at least $s$ can be encoded using few bits,
given $\rho^*$ and $T_1, \ldots, T_m$. Put differently, the mapping
from $\rho$ to $\rho^*$ can be inverted with a bit of extra
information. Recall that~$\Delta = \Theta(n/n')$ is the number of
centers in each sub-square.

\begin{lemma}\label{lemma:encode}
  There is a constant $A > 0$ such that the following holds. Suppose
  we are given decision trees $T_1, \ldots, T_m$ of depth at most $t$
  and $\rho^*$.
  Then
  \begin{align*}
    a \log t + b \log \Delta + s \cdot A
  \end{align*}
  many bits suffice to encode $\rho$.
\end{lemma}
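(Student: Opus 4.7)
The plan is to invert the map $\rho \mapsto \rho^*$ by processing the stages $j = 1, 2, \ldots, g$ in order. We maintain an auxiliary restriction $\tilde\rho_j$, defined as $\rho$ modified so that only the closed information sets $\cl_j, \cl_{j+1}, \ldots, \cl_g$ still act on it (via~\eqref{eq:rho-star}); hence $\tilde\rho_1 = \rho^*$ is what we are given, while $\tilde\rho_{g+1} = \rho$ is the target of reconstruction. At each stage~$j$ we first identify $\psi_j$, then $\supp(J_j)$, then $J_j$ itself together with the observed answers $I_j$ collected on $\tau_j$, and finally form $\tilde\rho_{j+1}$ from $\tilde\rho_j$ by removing the effect of $\cl_j$ and applying $I_j$. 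After processing stage $g$ we have recovered $\tilde\rho_{g+1} = \rho$.

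The identification of~$\psi_j$ is free. By \cref{item:determined} of \cref{def:forcing-info}, the pair $(\rho, I^*_{j-1} \cup \cl_j)$ forces every variable on~$\psi_j$ to the value reached by $\psi_j$, and these forcings remain in effect in~$\tilde\rho_j$. Hence, applying the same tie-breaking rule as in the construction of~$\calT$ in \cref{sec:ext-can-dec-tree}, $\psi_j$ is the first 1-branch among $T_1, \ldots, T_m$ that is traversed by $\tilde\rho_j$ and locally consistent with the already reconstructed~$\tau_{j-1}$. That this is unambiguous relies on \cref{lem:J-disjoint}: the still-acting $\cl_{j+1}, \ldots, \cl_g$ have supports disjoint from $\supp(J_j)$ and from later queries on $\tau_j$, so they cannot create an earlier spurious 1-branch that $\tilde\rho_j$ would traverse.

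Given~$\psi_j$, the $a_j$ newly exposed associated centers in $\supp(J_j) \setminus S^*_{j-1}$ are each specified by pointing to a variable on $\psi_j$ at cost $\log t$ bits, giving $a \log t$ bits in total across all stages. The $b_j$ additional disappearing centers in $\cl_j$ are either chosen centers adjacent to a just-specified chosen associated center, or non-chosen centers in the same $\pi$-component as a just-specified non-chosen associated center; in either case they lie in a sub-square adjacent to or identical to a known one, and can be located by pointing to one of the $\Delta$ centers there, contributing $b \log \Delta$ bits in total. Once $\supp(J_j)$ is known, the edge/non-edge labels of $J_j$, the chosen centers $S_J$ at distance at most~$1$ from $\supp(J_j) \cap \chosen$, and the observed answers $I_j$ at the newly exposed chosen centers cost only $O(1)$ bits per exposed center, since each center has only four incident directions and local consistency together with closure heavily constrain the admissible configurations. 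By \cref{clm:size-s} we have $\lvert S^*_g \rvert \leq s/4 + 16t = O(s)$, so this last contribution is bounded by $A \cdot s$ for an absolute constant~$A$.

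The main obstacle will be the non-chosen side of the bookkeeping: ensuring that the pairing~$\pi$ is reconstructed consistently along the way, so that each non-chosen associated center unambiguously determines which further non-chosen centers it drags into $\supp(J_j)$ via its $\pi$-component. This is handled by observing that non-chosen centers appear in $\supp(J_j)$ only as entire $\pi$-components, that each component is a single edge or a star of size~$4$ in neighbouring sub-squares, and that fixing the $\pi$-partner of a non-chosen center therefore costs only $O(\log\Delta)$ bits per center, which is already absorbed into the $b\log\Delta$ term.
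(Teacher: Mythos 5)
Your reconstruction breaks at the step you declare to be "free": identifying the forceable branch~$\psi_j$ from the partially reconstructed restriction. You claim that~$\psi_j$ is simply the first $1$-branch traversed by~$\tilde\rho_j$ that is locally consistent with the recovered~$\tau_{j-1}$, and that \cref{lem:J-disjoint} makes this choice unambiguous. That is not so. In the construction of~$\calT$, a $1$-branch~$\psi$ occurring \emph{before}~$\psi_j$ in the fixed ordering was skipped because~$\psi$, or the induced assignment~$\psi_\sigma$, fails the pairwise-local-consistency test against~$\tau_{j-1}$ and~$\sigma$. That test is a property of the original~$\rho$ and pairing~$\pi$ (in particular of the identity and parities of alive centers) and is not recoverable from~$\tilde\rho_j$, which has a different set of alive centers. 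Worse, once the disappearing centers have been killed by~$\cl_j,\ldots,\cl_g$, the resulting forced constants in~$\tilde\rho_j$ may accidentally take the walk down such an earlier $1$-branch~$\psi$; and extra information pieces in~$\cl_{j'}\setminus J_{j'}$ for~$j'>j$ or in~$\cl_j\setminus J_j$ can force variables that no~$J_{j'}$ was meant to determine. Disjointness of the supports~$\supp(J_j)$ says nothing about any of this --- it controls where the forcings live, not which branch they happen to trace out. The paper explicitly flags exactly this problem ("the first branch traversed by $\rho^*_j$ is not necessarily the forceable branch $\psi_j$") and resolves it with the signature mechanism of \cref{def:signature,def:signature-conflict} and the conflict test verified in \cref{lem:found-force}, paying extra discover bits that are folded into the~$s\cdot A$ term. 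Without an analogue of that mechanism, your decoder is not well defined.

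A second, smaller inconsistency: the auxiliary restriction~$\tilde\rho_j$, defined as $\rho$ modified by $\cl_j\cup\cdots\cup\cl_g$ alone, omits the information sets~$I_1,\ldots,I_{j-1}$ gathered in earlier stages. The paper's~$\rho^*_j$ deliberately includes~$I^{*-}_{j-1}$: a variable on~$\psi_j$ may have an associated center that is an alive exposed center in~$S^*_{j-1}$, forced only through~$I^*_{j-1}$, so~$\tilde\rho_j$ as you define it need not even traverse~$\psi_j$ (cf.\ \cref{lem:traverse-psi}, which relies on the~$I^{*-}_{j-1}$ term). Your remark about "applying~$I_j$" when building~$\tilde\rho_{j+1}$ suggests you intended to accumulate this information, but then the stated identity~$\tilde\rho_{g+1}=\rho$ no longer holds and the bookkeeping needs to be reconciled.
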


Before diving into the proof of \cref{lemma:encode} let us verify that
\cref{lemma:switch} indeed follows.

\begin{proof}[Proof of \cref{lemma:switch}]
  We analyze the probability that a $\rho$ chosen uniformly
  from~$\spaceRhoReg(k,n,n')$ gives rise to an extended canonical
  decision tree of length at least~$s$. Let $m=n'^2\Delta $ be the
  total number of centers and recall that the odd integer~$k$ is the
  total number of alive centers.
  
  According to \cref{lemma:encode}, for some absolute constant~$A$,
  the number of restrictions $\rho$ that give rise to an extended
  canonical decision tree of depth at least~$s$ can be upper bounded
  by the number of ways to choose a restriction~$\rho^*$ with~$k-a-b$
  alive centers times~$t^a \Delta^{b} A^{s}$. Using
  \cref{lem:final-counting} we can bound the probability of sampling
  such a~$\rho$ by
  \begin{align}\label{eq:final1}
    \sum_{a,b}
    \frac{
    \lvert
    \spaceRho(k-a-b,n,n')
    \rvert
    \cdot
    t^a \Delta^{b} A^{s}
    }
    {
    \lvert
    \spaceRhoReg(k,n,n')
    \rvert
    }
    &\leq
      \sum_{a,b}
      \left(
      \frac{A_0\log n'}{\Delta}
      \right)^{a+b}
      \cdot
      t^a \Delta^{b} A^{s}\\
    &\leq
      \sum_{a}
      A_1^s
      \left(
      \frac{t\log n'}{\Delta}
      \right)^a
      \cdot
      \sum_{b} \log^b n'\\
    &\leq
      \sum_{a}
      A_2^s\left(\frac{t \log^4 n'}{\Delta}\right)^{a}
      \eqcomma
      \label{eq:final2}
  \end{align}
  for appropriate constants~$A_0, A_1$ and~$A_2$. The final inequality
  relies on the bound~$b \leq 2a$ from \cref{lem:abc}.  As
  \cref{lem:abc} further guarantees that~$a \geq s/64$ and since the
  sum in \cref{eq:final2} is a geometric series the claimed bound on
  the probability of the extended canonical decision tree reaching
  depth at least~$s$ follows.
\end{proof}

\subsection{Proof of \cref{lemma:encode}}
\label{sec:encode}

\paragraph{Outline.} On a very high level, we want to remove the
information set~$\cl^*$ from the partial restriction~$\rho^*$. We
commence in stages. In each stage we remove a single information
set~$\cl_j$ from~$\rho^*$ by utilizing the shallow decision
trees~$T_1, \ldots, T_m$ and some extra information. We need some
further notation and a simple observation to give a detailed proof
outline.

For convenience let $I_0^* = \emptyset$ and for $i > j$ denote
by~$I^{i}_j \subseteq I_j$ the information set obtained from~$I_j$ by
removing any information piece that occurs (identically) in some
possible forcing information~$J_{j'}$ for $j' \geq i$. In other words
we let~$I_j^i = I_j \setminus \bigl(\bigcup_{j' = i}^{g}J_{j'}\bigr)$
and hence it holds
that~$I_j^{j+1} \subseteq I_j^{j+2} \subseteq \cdots \subseteq
I_j^{g+1} = I_j$. Let $I^{*-}_{j-1} = \bigcup_{i=1}^{j-1} I_i^{j}$. By
the previous observation we have that ~$I^{*-}_{g} = I^*_g$.

For $i < j$ denote by~$\cl_j^i \subseteq \cl_j$ the information set
obtained from~$\cl_j$ by removing any information piece in direct
contradiction with~$I^*_{i}$: remove information
pieces~$(v, \delta, \bot)$ if and only if~$I^*_{i}$ contains an edge
incident to~$v$ in direction~$\delta$. Since~$J_j \subseteq \cl_j$ is
pairwise locally consistent with $I^*_{j-1}$ it holds
that~$J_j \subseteq \cl_j^{j-1}\subseteq \cl_j^{j-2} \subseteq \cdots
\subseteq \cl_j^{0} = \cl_j$.
Let~$\cl_{\geq j}^{*-} = \bigcup_{i=j}^g \cl_i^{j-1}$ and note
that~$K_{\geq 1}^{*-} = \bigcup_{i=1}^g \cl_i^{0} = \bigcup_{i=1}^g
\cl_i = K^*$.

Denote by~$\rho_{j}^*$ the restriction obtained from composing $\rho$
with the information~$I^{*-}_{j-1} \cup \cl_{\ge j}^{*-}$ as done in
\cref{eq:rho-star}. Note that $\rho^*_{j}$ forces the same variables
as~$(\rho, I^{*-}_{j-1} \cup \cl_{\ge j}^{*-})$ forces, observe
that~$\rho_1^* = \rho^*$, and note that~$\rho_{g+1}^*$ forces the same
variables as~$(\rho, I^*_g)$ forces.

\begin{lemma}
  \label{lem:traverse-psi}
  The restriction $\rho^*_j$ traverses the forceable branch~$\psi_j$
  of stage $j$.
\end{lemma}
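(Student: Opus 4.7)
The plan is to reduce the claim to an inclusion of information sets and then to verify that enlarging the set does not change the forced value of any variable on~$\psi_j$. By the paragraph preceding the lemma, $\rho_j^*$ is the restriction from~\cref{eq:rho-star} that forces the same variables as $(\rho, I^{*-}_{j-1} \cup \cl^{*-}_{\geq j})$, while by~\cref{item:determined} of~\cref{def:forcing-info} the pair $(\rho, I^*_{j-1} \cup J_j)$ already forces every variable on~$\psi_j$ to the value taken on~$\psi_j$. It therefore suffices to show that these two pairs force the variables of~$\psi_j$ to the same values.

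As a first step I would establish the inclusion
\[
I^*_{j-1} \cup J_j \;\subseteq\; I^{*-}_{j-1} \cup \cl^{*-}_{\geq j}\eqperiod
\]
On the $J_j$ side, the only pieces that could be stripped when passing from $\cl_j \supseteq J_j$ to $\cl_j^{j-1}$ are non-edges contradicted by an edge in $I^*_{j-1}$, and by the pairwise local consistency of $J_j$ with $I^*_{j-1}$ (\cref{item:disjoint} of~\cref{def:forcing-info}) no piece of $J_j$ is such; hence $J_j \subseteq \cl_j^{j-1} \subseteq \cl^{*-}_{\geq j}$. On the $I^*_{j-1}$ side, any piece $p \in I^*_{j-1} \setminus I^{*-}_{j-1}$ lies, by definition of $I^{*-}_{j-1}$, in some $J_{j'}$ with $j' \geq j$; then $p \in \cl_{j'}$, and since $p$ itself occurs in $I^*_{j-1}$ it is not a non-edge contradicted by an edge in $I^*_{j-1}$, so $p$ survives into $\cl_{j'}^{j-1} \subseteq \cl^{*-}_{\geq j}$.

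As a second step I would check that for every variable $x_e$ on $\psi_j$, with associated center $v$, the value forced by $(\rho, I^{*-}_{j-1} \cup \cl^{*-}_{\geq j})$ agrees with the value forced by $(\rho, I^*_{j-1} \cup J_j)$. If $v$ is dead in~$\rho$, both pairs force $x_e$ to $\rho_0(x_e)$. Otherwise $v$ is alive and, by~\cref{item:determined}, the smaller set has information pieces in all four directions incident to $v$, so~\cref{def:force} assigns a specific value to $x_e$ determined by the piece in the direction of~$x_e$; by the inclusion just shown this same piece is still present in the larger set, and the value is preserved provided no competing piece at $v$ in that direction has been introduced.

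The main obstacle is precisely this local consistency check at the associated centers of the variables on $\psi_j$, and I would dispatch it by tracing any hypothetical pair of conflicting pieces back to its stages of origin. Conflicts between a non-edge in some $\cl_{j'}^{j-1}\subseteq \cl^{*-}_{\geq j}$ and an edge in $I^{*-}_{j-1}\subseteq I^*_{j-1}$ are ruled out by the very filter that defines $\cl_{j'}^{j-1}$. Edges in $\cl^{*-}_{\geq j}$ are inherited from the corresponding $J_{j'}$, which is pairwise locally consistent with $I^*_{j'-1}\supseteq I^*_{j-1}$ by~\cref{item:disjoint}, so they do not clash with non-edges in $I^{*-}_{j-1}$. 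Finally, by~\cref{lem:J-disjoint} the supports of $J_{j'_1}$ and $J_{j'_2}$ are disjoint for $j'_1 \neq j'_2$, and the extra non-edges in $\cl_{j'_i}\setminus J_{j'_i}$ sit at chosen centers of $\supp(J_{j'_i})$, which precludes conflicts between closures at distinct stages; conflicts inside a single stage are excluded by the construction of $\cl_{j'}$ itself, since non-edges are added only in directions where $J_{j'}$ carries no information.
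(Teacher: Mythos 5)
Your proposal is correct and follows essentially the same strategy as the paper: reduce to \cref{item:determined} of \cref{def:forcing-info}, then argue that enlarging $I^*_{j-1}\cup J_j$ to the information that defines $\rho^*_j$ does not disturb the forced values on $\psi_j$. You spell out more carefully than the paper both the inclusion $I^*_{j-1}\cup J_j \subseteq I^{*-}_{j-1}\cup \cl^{*-}_{\geq j}$ (which the paper's final ``equivalently $\rho^*_j$'' leaves implicit, since it phrases the argument in terms of $I^*_{j-1}$ rather than $I^{*-}_{j-1}$) and the local-consistency checks at the associated centers via \cref{lem:J-disjoint}, which the paper compresses into the phrase ``not in direct contradiction with $I^*_{j-1}$''.
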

\begin{proof}
  Recall from \cref{def:forcing-info}, \cref{item:determined}, that
  $(\rho, I_{j-1}^* \cup J_j)$ forces all the variables on the
  forceable branch $\psi_j$ of stage $j$ such that $\psi_j$ is
  traversed. As the information set~$\cl_j^{j-1}$ extends the forcing
  information~$J_j$ and is not in direct contradiction
  with~$I^*_{j-1}$ we observe that
  also~$(\rho, I_{j-1}^* \cup \cl^{j-1}_j)$ traverses $\psi_j$.
  Furthermore, since ~$\cl_{\geq j}^{*-}$ extends $\cl_j^{j-1}$ and is
  also not in direct contradiction with~$I_{j-1}^*$, we conclude
  that~$(\rho, I_{j-1}^* \cup \cl_{\geq j}^{*-})$,
  equivalently~$\rho^*_j$, traverses the forceable branch~$\psi_j$.
\end{proof}

\Cref{lem:traverse-psi} allows us to pursue the following high-level
plan. We proceed in stages~$j = 1, 2, \ldots, g$. At the beginning of
each stage~$j$ we assume that we know the restriction~$\rho^*_{j}$ and
the information set~$I_{j-1}^{*-}$. Since~$I_0^{*-} = \emptyset$
and~$\rho_{1}^*= \rho^*$ we have the necessary information to start
with stage~$j=1$. Furthermore, if we can complete these $g$ stages we
obtain the restriction~$\rho^*_{g+1}$ from which, along
with~$I^{*-}_g = I^*_g$, we can recover the sought-after
restriction~$\rho$: since~$\rho^*_{g+1}$ forces the same variables
as~$(\rho, I^*_g)$ forces we can ``remove''~$I^*_g$
from~$\rho^*_{g+1}$ to obtain~$\rho$.

Let us consider a stage $j \in [g]$. By \cref{lem:traverse-psi} the
restriction~$\rho^*_{j}$ traverses the forceable branch~$\psi_j$. Let
us assume, for now, that~$\psi_j$ is the first $1$-branch
traversed. This assumption allows us to identify~$\psi_j$ for
free. Once identified we can use the branch~$\psi_j$ to cheaply
recover a good fraction of the support of the forcing
information~$J_j$: since the branch~$\psi_j$ is of length at most~$t$
we can point out the variables on the branch~$\psi_j$ forced by~$J_j$
at cost~$\log t$ each. From these variables we can recover their
unique associated centers for \emph{free}. Each of these associated
centers~$u$ is a disappearing center: by \cref{item:closed} of
\cref{def:forcing-info} the set~$J_j$ is closed at~$u$ and since~$J_j$
is locally consistent (\cref{item:disjoint} of
\cref{def:forcing-info}) it has an odd number of edges incident
to~$u$.

To find the other centers in~$\supp(J_j)$ and to recover the structure
of~$J_j$, i.e., whether there are edges or non-edges in-between
centers of~$\supp(J_j)$, we rely on external information: we read
$\log \Delta$ bits per disappearing center and a constant number of
bits per potential edge. Once we have recovered~$J_j$ it is
straightforward to obtain~$\cl^{j-1}_j$. ``Remove''~$\cl^{j-1}_j$
from~$\rho_{j}^*$ and add all information pieces from~$J_j$
to~$I^{*-}_{j-1}$ that are common to~$J_j$ and~$I^*_{j-1}$. By adding
all these information pieces to~$I^{*-}_{j-1}$ we obtain the
information set~$\bigcup_{i=1}^{j-1} I_i^{j+1}$. Before we can proceed
to stage~$j+1$ we need to recover~$I^{j+1}_j$ so that we can create
the information set~$I^{*-}_{j} = \bigcup_{i=1}^{j} I_i^{j+1}$.

The support of the information set~$I_j$ consists of the
centers~$S_j = S^*_j \setminus S^*_{j-1}$ added to the set of exposed
centers and some further centers at distance $1$ from these newly
exposed centers~$S_j$. Most of the centers in~$S_j$ are readily
identified as follows. Recall that the set~$S_j$ consists
of~$\supp(J_j)$ along with the chosen centers at distance at most 1
from~$\supp(J_j) \cap \chosen$. For each chosen center in~$S_j$ at
distance 1 from~$\supp(J_j) \cap \chosen$ we read one bit of extra
information to determine whether it has disappeared (since these may
be in the support of some forcing information~$\cl_{j'}$
for~$j' > j$). If it has not disappeared, then it is readily verified
as the alive center with the lowest numbered row. Otherwise we can
afford to read~$\log \Delta$ extra information per such center to
identify it. This identifies the exposed centers in the support
of~$I_j$.

Recall that there may be some edge information pieces in~$I_j$ between
an exposed chosen center~$u \in S_j$ and a chosen
center~$v \not\in S^*_j$ that is not exposed. Either
\begin{enumerate}
\item the edge $\set{u,v}$ is shared with a forcing
  information~$J_{j'}$ for $j' > j$ and is thus not in $I_j^{j+1}$,
  or\label[case]{case:shared}\item the information piece~$\set{u,v}$ contradicts some~$\cl_{j'}$
  for $j' > j$, or\label[case]{case:contradict}\item the chosen center~$v$ is in no support~$\supp(\cl_{j'})$ for
  $j' > j$ and is thus alive.\label[case]{case:alive}
\end{enumerate}
In \cref{case:shared} we do not have to identify the chosen center~$v$
since we will recover it in stage~$j'$.
In \cref{case:alive} the chosen center~$v$ is identified for free
since it is the alive center with the lowest numbered row.
For \cref{case:contradict} note that since~$J_{j'}$ and~$I_j$ are
locally consistent (\cref{item:disjoint} of \cref{def:forcing-info}),
the contradiction is due to an information piece
in~$K_{j'} \setminus J_{j'}$. This implies that the center~$v$ is
\emph{not} an associated center of stage~$j'$: by \cref{item:closed}
of \cref{def:forcing-info} the forcing information~$J_{j'}$ has
information pieces in all directions incident to such centers. We may
thus identify~$v$ by reading~$\log \Delta$ bits of extra information
to then ``fix'' the contradiction by removing the non-edge and adding
the edge~$\set{u,v}$.

All that remains is to recover the structure of~$I^{j+1}_j$. We read a
constant amount of extra information per potential edge. This
identifies~$I^{j+1}_j$. Proceed with stage $j+1$.

Let us tally the amount of extra information read. We have~$a \log t$
bits per disappearing center that is also an associated center of a
variable on a forceable branch, for other disappearing centers we
pay~$b \log \Delta$, and finally for the structure of the different
information sets we need another~$A \lvert S^*_g\rvert$ bits for some
constant $A$. Thus in total, as claimed, we need at
most~$a \log t + b \log \Delta + A \lvert S^*_g \rvert$ bits to
recover~$\rho$. This completes the proof overview.

\paragraph{Setup.} Unfortunately there are some complications. Recall
that the forceable branch~$\psi_j$ is defined to be the next
\emph{pairwise locally consistent} branch with~$\tau_{j-1}$ and
$\sigma$ in stage~$j$ of the construction of the decision
tree~$\calT$. In particular, the branch~$\psi_j$ along with
$\tau_{j-1}$ needs to be pairwise locally consistent as assignments on
the smaller $n' \times n'$ grid. At this point it is not clear how to
determine whether a given branch satisfies this property. Hence the
first branch traversed by~$\rho^*_j$ is not necessarily the forceable
branch~$\psi_j$ of stage $j$. We need a way to cheaply tell that a
given branch is not the forceable branch. To this end\footnote{We also
  use signatures to ensure that only branches forced by~$J_j$ are
  considered and not branches forced by information pieces
  in~$K^{j-1}_j\setminus J_j$.} we introduce signatures.

\begin{definition}[signature]\label{def:signature}
  Let~$v$ be a center in the support of~$J_j$. The \emph{signature} of
  such a center~$v$ consists of~$9$ bits.
  \begin{enumerate}
  \item The first bit is~$1$ if and only if~$v$ is a chosen center.
  \item The following four bits indicate in what directions~$J_j$ has
    information pieces incident to~$v$.
  \item The final four bits indicate for each direction whether there
    is an edge in~$J_j$ incident to~$v$.
  \end{enumerate}
\end{definition}

\begin{remark}
  Note that the stage~$j$, although mentioned in the definition, is
  \emph{not} part of the signature (as it was in \cite{jhtseitin}).
  This change is mandated by our desire to get a tighter bound which
  requires that the signature of a single center is of constant size.
  Furthermore note that the signature of~$v$ does not include the
  identity of~$v$. In the following we only read signatures in
  combination with a fixed center~$v$; we assume that the signatures
  are ordered on the auxiliary information as we process these
  centers.
\end{remark}

Since the information sets~$J_j$ and~$J_{j'}$ are disjoint (see
\cref{lem:J-disjoint}) and since~$\supp(J_j) = \supp(K_j)$ it holds
that each disappearing center in the support of~$\cl^*$ has a unique
signature. Also, since~$(\rho, I_{j-1}^* \cup J_j)$ forces all
variables on the forceable branch~$\psi_j$, every variable on~$\psi_j$
that is not forced by~$(\rho, I_{j-1}^*)$ has an associated center
in~$J_j$ with a signature. Finally note that only nodes in the support
of~$\cl^*$ require signatures. As such we can afford to read all the
signatures: at most~$9\lvert \supp(\cl^*) \rvert \leq 200s$ bits need
to be read.

Note that a chosen center~$v$ along with its signature~$\sign$ defines
a partial assignment to the incident path variables: the first set of
four bits of~$\sign$ indicates the direction and the final four bits
indicate the assignment to the incident path-variables.

The idea is that signatures indicate how the next forceable branch
should look like: they indicate where the branch is supposed to have
variables and what the corresponding assignment on the smaller grid
looks like. If a $1$-branch~$\psi$ contradicts this information, then
we say that~$\psi$ is in \emph{conflict} with these signatures. The
formal definition follows.

\begin{definition}[conflict]\label{def:signature-conflict}
  Let~$I$ be an information set, denote by~$\psi$ a branch, and
  let~$E$ be a set of tuples~$(v, \sign)$ each consisting of a
  center~$v$ along with its signature~$\sign$. Denote
  by~$E_\psi \subseteq E$ the subset of tuples~$(v,\sign)$ such
  that~$v$ is a chosen center (i.e., the first bit of~$\sign$ is~$1$)
  and there is a variable~$x_e$ on~$\psi$ such that~$v$ is the
  associated center of~$x_e$. The set~$E$ is in \emph{conflict}
  with~$\psi$ and~$I$ if and only if either
  \begin{enumerate}
  \item there is an associated center~$v$ of a variable on~$\psi$ such
    that the partial assignment induced by~$I$ along with the
    signature of~$v$ is not defined in all directions incident to~$v$,
    or
    
  \item the partial assignment on chosen path variables obtained
    from~$I$ jointly with the assignments defined by the tuples
    in~$E_\psi$ is not pairwise locally consistent as assignments on
    the smaller grid.
  \end{enumerate}
\end{definition}

\paragraph{Reconstruction.} Let us explain how signatures are used to
recover the forceable branch~$\psi_j$ of stage~$j$. See
\cref{alg:reconstruct} for pseudo-code of the procedure described in
the following.
Throughout the procedure we maintain the following objects. A
counter~$j= 1, 2, \ldots, g$ of the current stage to be reconstructed,
the restriction~$\rho_{j}^{*}$, the information set~$I_{j-1}^{*-}$,
the exposed centers~$S^*_{j-1}$, and a set~$E$ of (prematurely
identified) disappearing centers along with their
signatures. Initially we set $j=1$, $\rho_{1}^* = \rho^*$, and
$S^*_0 = I^{*-}_0 = E = \emptyset$. We proceed as follows.
\begin{enumerate}
\item Find the next $1$-branch~$\psi$ traversed by the
  restriction~$\rho_{j}^{*}$.\label[step]{step-1}

\item If~$\psi$ and~$I_{j-1}^{*-}$ is in conflict with~$E$, then go to
  \cref{step-1}.

\item Read a bit~$b$ of extra information to determine whether there
  is a disappearing center that is the associated center of a variable
  on~$\psi$.\label[step]{step-3}

\item If $b=1$, then read an integer~$i$ of magnitude at
  most~$t$. This identifies the associated center~$v$ of the~$i$th
  variable on~$\psi$ as a disappearing center. Read the
  signature~$\sign$ of~$v$ and add~$(v, \sign)$ to~$E$. If~$E$ is in
  conflict with~$\psi$ and~$I_{j-1}^{*-}$, then go to
  \cref{step-1}. Otherwise go to \cref{step-3}.

\item If $b=0$, then~$\psi$ is the forceable branch of
  stage~$j$. Recover~$J_j$, the information set~$\cl_j^{j-1}$ as well
  as~$I^{j+1}_j$ essentially as discussed in the proof overview
  (details provided below). Update $\rho_{j}^{*}$ to $\rho_{j+1}^{*}$,
  $I_{j-1}^{*-}$ to $I_j^{*-}$ and $S^*_{j-1}$ to $S^*_j$, remove all
  associated centers of $\psi$ from $E$, and set $j=j+1$. Ensure
  that~$E$ contains all the signatures~$(v,\sign)$ of chosen
  centers~$v \in S^*_{j-1} \cap \supp(\cl_{\ge j}^{*-})$. If
  $\lvert S^*_{j-1}\rvert \ge s/4$, then terminate. Otherwise go to
  \cref{step-1}.\label[step]{step-5}
\end{enumerate}
This completes the description of the procedure used to recover the
forceable branch of stage~$j$. 

The following lemma guarantees that the above procedure is correct,
that is, that it identifies the forceable branch of stage~$j$.

\begin{lemma}\label{lem:found-force}
  Let~$E$ be the set of all tuples~$(v, \sign)$ that consist of a
  disappearing center~$v \in \supp(\cl_{\ge j}^{*-})$ or an exposed
  chosen center in~$v \in S^*_{j-1} \cap \supp(\cl_{\ge j}^{*-})$
  along with their signature~$\sign$. If~$\psi$ is the first
  $1$-branch traversed by~$\rho_{j}^*$ such that~$E$ is \emph{not} in
  conflict with~$I^{*-}_{j-1}$ and~$\psi$, then~$\psi$ is the
  forceable branch~$\psi_j$ of stage~$j$.
\end{lemma}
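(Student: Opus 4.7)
The plan is to establish the lemma in two directions: (1) the forceable branch $\psi_j$ itself is traversed by $\rho_j^*$ and is \emph{not} in conflict with $(I^{*-}_{j-1}, E)$, so the procedure will reach it; and (2) every $1$-branch $\psi$ preceding $\psi_j$ in the fixed ordering that happens to be traversed by $\rho_j^*$ is in conflict with $(I^{*-}_{j-1}, E)$, so it is correctly skipped. The second direction is the more delicate one and uses the fact that the information bundled in $\rho_j^*$ together with the signatures in $E$ morally reconstructs enough of $\sigma$ to tell $\psi_j$ apart from any earlier candidate.

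For direction (1), the traversal statement is immediate from \cref{lem:traverse-psi}. To see that $\psi_j$ is not in conflict with $(I^{*-}_{j-1}, E)$, I would check each of the two conditions in \cref{def:signature-conflict}. For the first condition, take an associated center $v$ of a variable on $\psi_j$ that is alive in $\rho$. Because $(\rho, I^*_{j-1} \cup J_j)$ forces this variable, the set $I^*_{j-1}\cup J_j$ is closed at $v$. Each of the four incident information pieces is either still present in $I^{*-}_{j-1}$, or it was removed because it also appears in some $J_{j'}$ with $j' \ge j$; in the latter case $v \in \supp(J_{j'}) \subseteq \supp(\cl_{\ge j}^{*-})$, so $(v,\sign)\in E$, and \cref{lem:J-disjoint} guarantees the signature is unambiguous and recovers exactly the missing direction. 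For the second condition, the chosen-path partial assignment extracted from $I^{*-}_{j-1}$ and $E_{\psi_j}$ is a sub-assignment of the one induced by $I^*_{j-1} \cup J_j$; by construction of $\calT$ this latter assignment is pairwise locally consistent with $\tau_{j-1}$ on the smaller grid, so the former inherits the consistency.

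For direction (2), fix a $1$-branch $\psi$ preceding $\psi_j$ in the canonical ordering that is traversed by $\rho_j^*$. By the selection rule used when building $\calT$, the branch $\psi$ must fail either (a) pairwise local consistency with $\sigma$ on the original grid, or (b) local consistency of $\psi_\sigma$ with $\tau_{j-1}$ on the smaller grid. Case (b) immediately triggers a violation of condition 2 of \cref{def:signature-conflict}, since the chosen-path data collected from $I^{*-}_{j-1}$ and $E_\psi$ contains enough of $\tau_{j-1}$ and $\psi_\sigma$ to reproduce the inconsistency. Case (a) is the subtler one: the disagreement between $\psi$ and $\sigma$ cannot occur on a variable forced by $\rho_j^*$ (otherwise $\psi$ could not be traversed), so it must live at an associated center with incomplete information in $\rho_j^*$; invoking \cref{inv:ext-can-dec-tree} (in particular that non-chosen $\pi$-components are either entirely exposed or entirely untouched) shows that such a center lies in $\supp(\cl_{\ge j}^{*-})$ and is therefore covered by a signature in $E$. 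The mismatch then surfaces as a violation of either condition 1 (some direction remains undefined in the combined data) or condition 2 (the assembled small-grid assignment clashes).

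The principal technical hurdle lies in this case (a) of direction (2): tracking precisely where a $\psi$-vs-$\sigma$ disagreement can live, and showing that it is always reflected in the collected data $(I^{*-}_{j-1}, E)$. The argument hinges on the invariants from \cref{inv:ext-can-dec-tree} — especially that whole $\pi$-components are fully exposed or fully untouched — together with the support-disjointness from \cref{lem:J-disjoint}, to rule out the possibility of a disagreement at a center whose information is neither in $I^{*-}_{j-1}$ nor in the signatures of $E$. Once this exhaustive matching between the failure modes of \cref{def:local-consistency-full} and the conflict conditions of \cref{def:signature-conflict} is verified, the lemma follows.
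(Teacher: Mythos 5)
Your direction (1) fills a gap the paper leaves implicit (the paper only explicitly argues non-conflict for earlier branches, not for $\psi_j$ itself), and that half is sensible. Your direction (2), however, takes a genuinely different route from the paper, and that route has a gap.

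The paper's proof of direction (2) is a constructive contradiction argument: assume an earlier $\psi$ is traversed by $\rho_j^*$ and \emph{not} in conflict, then build a possible forcing information $J'_j$ for $\psi$ directly from the signatures in $E$ (on chosen centers) and from $\pi$ (on non-chosen centers), check it satisfies \cref{def:forcing-info}, and conclude $\psi$ would therefore have been the forceable branch of stage $j$ rather than $\psi_j$. The crucial observation enabling this is that non-conflict forces every associated center of a variable on $\psi$ that is alive in $\rho$ to be closed in $I^{*-}_{j-1}$ together with the signature data, so the pieces you extract are exactly those that already appear in $I^*_{j-1}\cup\bigcup_{j'\ge j}J_{j'}$. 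This sidesteps any case analysis of \emph{why} $\psi$ was skipped. Your approach instead tries to classify the skip reason as failure of (a) pairwise local consistency with $\sigma$ or (b) consistency of $\psi_\sigma$ with $\tau_{j-1}$, and then to match each failure mode to one of the two conflict conditions.

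The concrete gap is in your case (a). You write that the disagreement between $\psi$ and $\sigma$ "must live at an associated center with incomplete information in $\rho_j^*$," but this contradicts the standing hypothesis: $\rho_j^*$ traverses $\psi$, which by definition means $\rho_j^*$ forces \emph{every} variable on $\psi$, i.e.\ every associated center of a $\psi$-variable is dead in $\rho_j^*$, so there is no such center with incomplete information. (In fact one can check, as you partially note, that on any variable $x_e$ forced by $\rho_j^*$ whose value under $\sigma$ is a constant, the forced value agrees with $\sigma(x_e)$ — because the non-chosen parts of the info set always come from $\pi$ and hence track $\sigma_0$ correctly.) Thus a failure of (a) cannot be a single-variable disagreement at all; it must be a failure of well-definedness or of local consistency of $\psi_\sigma$ as an assignment on the smaller grid, and you give no argument that those more global failures are visible to the conflict mechanism, which sees only $I^{*-}_{j-1}$ (a trimmed version of $I^*_{j-1}$) and the signatures in $E$ (which encode the $J_{j'}$'s, not $\tau_{j-1}$ or $\sigma$ directly). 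You flag this as "the principal technical hurdle" but do not resolve it. The paper's construction of $J'_j$ is precisely the device that avoids having to make this case-by-case matching: instead of arguing that the conflict mechanism catches the inconsistency, it shows that \emph{absence} of conflict hands you a witness contradicting the choice of $\psi_j$.
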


\begin{proof}We need to establish that $E$ is in conflict with $I^{*-}_{j-1}$ and
  all $1$-branches $\psi$ occurring before $\psi_j$. Towards
  contradiction suppose otherwise: let $\psi$ be a $1$-branch before
  $\psi_j$ such that
  \begin{enumerate}
  \item the restriction~$\rho_{j}^*$ forces all variables on~$\psi$
    such that the respective leave is reached, and
  \item the set~$E$ is not in conflict with~$I^{*-}_{j-1}$ and~$\psi$.
  \end{enumerate}
  Note that also the restriction defined by composing~$\rho$
  with~$I^*_{j-1} \cup \bigcup_{j'=j}^{g} J_{j'}$ would
  traverse~$\psi$: a center~$u$ that does not have information pieces
  in all directions incident
  in~$I^*_{j-1} \cup \bigcup_{j'=j}^{g} J_{j'}$ does not force any
  incident variables. In this case the assignment induced by
  $I^{*-}_{j-1}$ along with the signature~$\sign$ of~$u$ is not
  defined in all directions incident to~$u$. Hence since~$E$ is not in
  conflict with~$I^{*-}_{j-1}$ and~$\psi$ it holds that all variables
  forced by $\cl_{\ge j}^{*-}$ on~$\psi$ are incident to
  centers~$u\in\supp(J_{j'})$ for~$j' \geq j$ with information pieces
  in all directions incident to~$u$ present in~$I_{j-1}^*\cup J_{j'}$.
  
  Let us construct a possible forcing information $J'_j$ that could
  have been used in stage $j$ of the construction of the extended
  canonical decision tree to force the branch $\psi$. On the
  non-chosen centers the set~$J'_j$ contains the pieces of~$\pi$
  needed to force all variables on~$\psi$. On the chosen centers the
  set~$J'_j$ consists of information pieces as given by the partial
  assignments defined by signatures~$(v,\sign) \in E$ such that there
  is a variable on~$\psi$ whose associated center is~$v$. These
  information pieces are pairwise locally consistent
  with~$I_{j-1}^{*-}$ as~$E$ is not in conflict with~$I^{*-}_{j-1}$
  and~$\psi$. Furthermore, these force the input to traverse~$\psi$ as
  these information pieces are the same as used
  in~$I^*_{j-1} \cup \bigcup_{j'=j}^{g} J_{j'}$.
\end{proof}

\Cref{lem:found-force} shows that the above procedure, once it reaches
\cref{step-5}, indeed identifies the forceable branch~$\psi_j$ of
stage~$j$. It remains to argue that the information
sets~$J_j, K_{j}^{j-1}$ and~$I^{j+1}_j$ can be recovered. We closely
follow the argument presented in the proof outline and there is
pseudo-code in \cref{sec:algo-switch} in case of ambiguity in the
following verbal discussion.

We are given the restriction~$\rho^*_j$, the information
set~$I_{j-1}^{*-}$ and all the signatures of the disappearing centers
that are also associated centers of variables on the forceable
branch~$\psi_j$ of stage~$j$. We need to construct the
restriction~$\rho^*_{j+1}$ and the information set~$I^{*-}_j$.

We start with the reconstruction of the forcing information~$J_j$,
then explain how to obtain~$\cl^{j-1}_j$ from~$J_j$ and finally argue
that we can recover~$I^{j+1}_j$ from~$J_j$.

The unique associated centers of the branch~$\psi_j$ identify a good
part of the support of~$J_j$. For each associated
center~$u \in \supp(J_j)$ that is also \emph{chosen} we read the up to
three incident centers used to make~$J_j$ closed at~$u$. Each is read
at cost~$\log \Delta$ unless it has already been identified. If it has
already been identified, then it is the center in the appropriate
sub-square whose first bit in the signature is~$1$; we can identify it
at cost at most~$\log t \leq \log \Delta$. Reading one bit per
information piece on chosen centers in~$J_j$ we obtain the structure
of $J_j$ on the chosen centers.

This identifies all of~$J_j$ on the chosen centers. On the non-chosen
centers we may need to complete some connected components from the
pairing~$\pi$. We encode the structure of the connected component
using a constant number of bits. Each node in such a connected
component can again be recovered at cost at most~$\log \Delta$. This
identifies all of~$J_j$.

We obtain~$\cl^{j-1}_j$ from~$J_j$ by adding non-edge information
pieces incident to a center~$u \in \supp(J_j)$ in direction~$\delta$
if and only if
\begin{enumerate}
\item the center~$u$ has an odd number of edges incident in~$J_j$, and
\item there are no information pieces in~$I^{*-}_{j-1} \cup J_j$ in
  direction~$\delta$ from~$u$.
\end{enumerate}

``Undo'' the information set~$\cl^{j-1}_j$ from~$\rho_{j}^*$ by
flipping the assignment along all the edges in~$\cl^{j-1}_j$ and add
all information pieces from~$J_j$ to~$I^{*-}_{j-1}$ that are common
between~$J_j$ and~$I^*_{j-1}$. This results in the information
set~$\bigcup_{i=1}^{j-1} I_i^{j+1}$. It remains to
recover~$I^{j+1}_j$.

Recall that the support of the information set~$I_j$ consists of the
centers~$S_j = S^*_j \setminus S^*_{j-1}$ added to the set of exposed
centers and some further chosen centers at distance $1$
from~$S_j$. The non-chosen centers in~$S_j$ are readily identified:
these are~$\supp(J_j) \setminus \chosen$. It remains to identify the
chosen centers in~$S_j$ at distance precisely~$1$
from~$\supp(J_j) \cap \chosen$. Check~$E$ whether any of the remaining
chosen centers has already been identified (we know the relevant
sub-squares; check whether there is a tuple~$(v,\sign)\in E$ such
that~$v$ is in one of these sub-squares and the first bit of~$\sign$
is~$1$). These are identified at cost~$\log t \leq \log \Delta$. For
each of the remaining sub-squares we read~$1$ bit of extra information
to determine if the chosen center is alive. If so, then it is the
alive center with the lowest numbered row. Otherwise we
read~$\log \Delta$ extra information to identify it. Note that since
such a center was exposed in stage~$j$ it cannot be a disappearing
associated center of some stage~$j' > j$ and we can thus afford to
read these~$\log \Delta$ bits. This identifies the exposed centers in
the support~$\supp(I_j)$.

Recall that there may be some edge information pieces in~$I_j$ between
an exposed chosen center~$u \in S_j$ and a chosen
center~$v \not\in S^*_j$ that is not exposed. Either
\begin{enumerate}
\item the edge $\set{u,v}$ is shared with a forcing
  information~$J_{j'}$ for $j' > j$ and is thus not in $I_j^{j+1}$,
  or\label[case]{case:shared-1}\item the information piece~$\set{u,v}$ contradicts some~$\cl_{j'}$
  for $j' > j$, or\label[case]{case:contradict-1}\item the chosen center~$v$ is in no support~$\supp(\cl_{j'})$ for
  $j' > j$ and is thus alive.\label[case]{case:alive-1}
\end{enumerate}
In \cref{case:shared-1} we do not have to identify the chosen
center~$v$ since we will recover it in stage $j'$. In
\cref{case:alive-1} the chosen center~$v$ is identified for free since
it is the alive center with the lowest numbered row. In
\cref{case:contradict-1} we identify~$v$ at cost~$\log \Delta$ (unless
it has already been found and is in~$E$). Since in
\cref{case:contradict-1} the chosen center~$v$ does not have
information pieces incident in all directions in~$J_j$ it is not an
associated center (by \cref{item:closed} of \cref{def:forcing-info})
and we can hence afford to pay~$\log \Delta$ bits for its
identification.

It remains to recover the structure of~$I^{j+1}_j$. We read a constant
amount of extra information per potential edge. This
identifies~$I^{j+1}_j$.
This concludes the discussion how one recovers~$J_j$
and~$I^{*-}_j$. Let us tally the external information needed.

Recall that $a_j$ is the number of disappearing centers
outside~$S^*_{j-1}$ that are an associated center of a variable of
$\psi_j$, $b_j$ is the number of other disappearing centers
of~$\cl_j$, and we let~$a = \sum_{j=1}^g a_j$
and~$b= \sum_{j=1}^g b_j$. The following summarizes the amount of
external information needed.

\begin{itemize}
\item The disappearing centers discovered as an associated center of a
  forceable branch contribute~$a \log t$ bits.

\item The other disappearing centers contribute at
  most~$b \log \Delta$ bits.

\item For each center in~$\supp(\cl^*)$ we may have to read the
  signature. These are at most~$9 \lvert S^*_g \rvert$ bits.
  
\item All other centers are discovered at constant cost; we only
  read~$A_1 \lvert S^*_g \rvert$ bits for some constant $A_1$.

\item For the graph structure of~$J_j$ and~$I_j^{*-}$ we need another
  $A_2\lvert S^*_g\rvert$ bits for some constant $A_2$.

\item There is a constant $A_3 > 0$ such that in the above procedure
  at most $s + 16t + s/4 = s \cdot A_3$ bits $b$ are read:
  \begin{itemize}
  \item there are at most $s+16t$ bits that are $1$ as each time a
    disappearing variable is discovered, and this is bounded by
    \cref{clm:size-s}, and
  \item at most $s$ bits that are $0$ as a stage is ended each time and
    $g \le s/4$.
  \end{itemize}
\end{itemize}
Since~$\lvert S^*_g\rvert \leq s/4 + 16t \leq s \cdot A_4$ for some
other constant~$A_4$ this completes the proof of
\cref{lemma:encode}. As \cref{lemma:encode} is the last missing piece
of the proof of the switching lemma, we thereby also establish
\cref{lemma:switch}.

\section{Proof of the Multi-Switching Lemma}
\label{sec:multi-switch}

The purpose of this section is to prove the multi-switching lemma,
restated here for convenience.

\multiSwitch*

The proof of \cref{lemma:multiswitch} very much follows the proof of
\cref{lemma:switch}. The first section gives a short proof overview,
followed by \cref{sec:cpdt} that formally explains how to construct a
common partial decision tree. In \cref{sec:multi-proof} we finally
prove \cref{lemma:multiswitch}.

\subsection{Proof Overview}

The high level proof outline is as follows.
Consider~$\bigvee_{i=1}^{m_1} T^1_{i}, \bigvee_{i=1}^{m_2} T^2_{i},
\ldots, \bigvee_{i=1}^{m_M} T^M_{i}$ in order. If a disjunction of
these decision trees~$\bigvee_{i=1}^{m_j} T^j_{i}$ is not turned in to
a decision tree of depth~$\ell$, then find a branch~$\lambda^j$ in the
extended canonical decision tree of~$\bigvee_{i=1}^{m_j} T^j_{i}$ of
length at least~$\ell$. Put the variables on~$\lambda^j$ in the common
partial decision tree, query these variables as well as some extra
variables and recurse.

As in the proof of the switching lemma we consider any full
restriction~$\sigma = \sigma(\rho, \pi)$ for which
\cref{lemma:multiswitch} fails. We then turn~$\rho$ into a restriction~$\rho^*$ such that the mapping
can be inverted with little extra information to argue that there are
few full restrictions for which \cref{lemma:multiswitch} fails.

The construction of~$\rho^*$ is analogous to the construction used in
the proof of the switching lemma: fix a long branch~$\tau$ in the
$\ell$-common partial decision tree and consider the long
branches~$\lambda^1, \lambda^2, \ldots$ from the respective extended
canonical decision trees used to construct~$\tau$. Each
such~$\lambda^j$ was constructed with the help of possible forcing
information~$J^j_1, J^j_2, \ldots$. Close up each such~$J^j_i$ as
before to obtain~$K^j_i$ and
let~$\rho^* = (\rho, \bigcup_{i,j} K^j_i)$.

We recover~$\rho$ from~$\rho^*$ by iteratively recovering the possible
forcing information~$J^j_1, J^j_2, \ldots$ as in the standard
switching lemma, to obtain the information sets~$I^j_1, I^j_2, \ldots$
describing the long branch~$\lambda^j$. We then read a bit of extra
information to recover the sub-assignment~$\tau^j$ of~$\tau$ to the
variables assigned by~$\lambda^j$.

There is one minor complication to handle: the recovered information
sets~$\bigcup_{i} I^j_i$ of a branch~$\lambda^j$ may be
\emph{in}consistent with future forcing information: a set~$I^j_i$ may
be inconsistent on chosen centers with some~$J^{j'}_{i'}$
for~$j' > j$. This is so because the set~$J^{j'}_{i'}$ is consistent
with~$\tau^j$ but not necessarily with the branch~$\lambda^j$.

We handle this complication by not just querying the variables
of~$\lambda^j$ in the common partial decision tree but to query all
variables with an associated center at distance at most~$1$ from the
exposed chosen centers of~$\lambda^j$. Analogous to how we proved in
\cref{lem:J-disjoint} that different possible forcing information have
disjoint support, it can be shown that future~$J^{j'}_{i'}$ have a
support disjoint of the support of some~$I^j_i$ with~$j < j'$. This
implies in particular that~$J^{j'}_{i'}$ cannot contain information
pieces in direct contradiction with~$I^j_i$.

\subsection{Common Partial Decision Trees}
\label{sec:cpdt}

Let us explain how to construct the $\ell$-common partial decision
tree~$\calT$
of~$\restrict{\bigvee_{i=1}^{m_1} T^1_{i}}{\sigma}, \ldots,
\restrict{\bigvee_{i=1}^{m_M} T^M_{i}}{\sigma}$. Start with~$\calT$
empty. We proceed in \emph{rounds}. In each round we consider a leaf
$\tau$ of $\calT$ such that there is a~$\bigvee_{i=1}^{m_j} T^j_{i}$
that cannot be represented by a depth~$\ell$ decision tree
under~$\sigma$ and~$\tau$. Extend~$\calT$ at~$\tau$ as follows.

Let~$j$ be minimum such
that~$\restrict{\bigvee_{i=1}^{m_j} T^j_{i}}{\sigma \tau}$ cannot be
represented by a depth~$\ell$ decision tree. Create the extended
canonical decision tree~$T^j$
of~$\restrict{\bigvee_{i=1}^{m_j} T^j_{i}}{\sigma \tau}$ essentially
as in \cref{sec:ext-can-dec-tree} -- more details follow. Denote
by~$\lambda$ a branch of length at least~$\ell$ in~$T^j$ and
extend~$\calT$ at~$\tau$ by querying all variables on~$\lambda$.
Modulo the precise definition of the extended canonical decision tree
used this describes the entire creation process of an~$\ell$-common
partial decision tree.

Let us discuss how to construct the extended canonical decision trees
in the above procedure. The only difference to the definition in
\cref{sec:ext-can-dec-tree} is that we initialize the set of exposed
centers~$S$ and the information set~$I$ used in the creation of the
extended canonical decision tree with information from previous
rounds. Let us explain this in more detail.

Throughout the creation of the~$\ell$-common partial decision
tree~$\calT$ we maintain the following sets for each leaf~$\tau$ of
$\calT$:
\begin{enumerate}
\item a set of exposed centers $S = S(\tau, \sigma)$, and
\item a set of information pieces $I = I(\tau, \sigma)$.
\end{enumerate}

Initially the set of exposed centers~$S$ and the information set~$I$
are empty~$S(\emptyset, \sigma) = T(\emptyset,\sigma)
=\emptyset$. Throughout the creation of~$\calT$ and the various
extended canonical decision trees we maintain the same invariants as
in \cref{inv:ext-can-dec-tree}.

In each round of the construction of~$\calT$, when building the
extended canonical decision tree~$T^j$ of~$\bigvee_{i=1}^{m_j}T^j_i$,
we initialize the sets~$S$ and~$I$ used in the creation of~$T^j$ with
the corresponding objects maintained for the creation of the common
partial decision tree~$\calT$. Other than that the creation of~$T^j$
follows \cref{sec:ext-can-dec-tree}: in each stage we find a new
forceable branch~$\psi_i^j$, the corresponding forcing
information~$J_i^j$, and add all nodes in~$\supp(J_i^j)$ to $S$ along
with the chosen centers adjacent to~$\supp(J_i^j) \cap \chosen$.

To find out whether the forceable branch~$\psi_i^j$ is followed we get
information sets~$I_i^j$ consisting of pieces from~$\pi$ and answers
from~$T^j$.

We continue with the next stage until at least~$\ell/4$ centers have
been added in this round to the set of exposed
centers~$S(\lambda^j, \sigma)$ for some leaf~$\lambda^j$ of~$T^j$. We
know that this happens
as~$\restrict{\bigvee_{i=1}^{m_j}T^j_i}{\sigma\tau}$ could not be
decided by a decision tree of depth~$\ell$ (recall that~$\tau$ is the
leaf of~$\calT$ we are considering).

For such a long branch~$\lambda^j \in T^j$ denote by~$S_{\lambda^j}$
all chosen centers at distance at most~$1$ from the newly exposed
chosen centers, that is, $S_{\lambda^j}$ consists of all chosen
centers at distance at most~$1$
from~$\bigl( S(\lambda^j, \sigma) \setminus S(\tau, \sigma)\bigr) \cap
\chosen$. Extend the common partial decision tree~$\calT$ at~$\tau$ by
querying all variables incident to~$S_{\lambda^j}$. For each newly
created leaf~$\tau' \in \calT$ we need to update the set of exposed
centers~$S$ and the information set~$I$:
let~$S = S(\lambda^j, \sigma) \cup S_{\lambda^j}$ and set~$I$
according to the answers on~$\tau' \setminus \tau$ while also
including the same information about~$\pi$ as is present
in~$I(\lambda^j, \sigma)$. This completes the description of the
construction of the common partial decision tree~$\calT$.

Note that at the end of each round the information pieces in~$I$ are
determined by the branch~$\tau$ of~$\calT$ and the matching~$\pi$. The
information pieces~$I_1^j, I_2^j, \ldots$ on the chosen centers used
to determine~$\lambda^j$ in~$T^j$ are ``forgotten''. These answers
were only used to find the long branch~$\lambda^j$.

Clearly the above process creates an~$\ell$-common partial decision
tree~$\calT$. We need to analyze the probability that we obtain a tree
of depth at least~$\depth(\calT) \geq s$.

\subsection{Encoding Cost}
\label{sec:multi-proof}

Once we have set up the machinery the proof parallels the proof of the
standard switching lemma. We need to verify that it works and no new
complications arise.

As in \cref{sec:encode-rho} we extend each possible forcing
information~$J_i^j$ to information sets~$\cl_i^j$ by adding non-edges
incident to nodes in~$\supp(J_i^j)$ that have an odd number of edges
incident in~$J_i^j$. The restriction~$\rho^*$ is obtained by applying
these~$\cl_i^j$ to~$\rho$. We need to specify the information needed
to invert this mapping, that is, the information needed to
recover~$\rho$ from~$\rho^*$.

The inversion commences in rounds. Each round essentially corresponds
to the inversion process of the standard switching lemma (see
\cref{sec:encode}). The following extra information is read per round.

\begin{itemize}

\item We require~$\log M$ bits to obtain an index~$j \in[M]$ that
  identifies~$\bigvee_{i=1}^{m_j}T^j_i$ being processed.

\item The following information read is identical to the inversion
  process of~$\bigvee_{i=1}^{m_j}T^j_i$ as in the standard switching
  lemma.

\item Once we have the long branch~$\lambda^j$ of~$T^j$ we need to
  recover the chosen centers~$v$ at distance at most~$1$ from the
  newly exposed chosen centers. We read at most~$\log \Delta$ bits for
  each such~$v$. Since there are at most linear in~$s$ many such nodes
  we can afford this -- the cost is absorbed by the
  constants~$c_1$ and $c_2$.
  
\item Finally we read the difference in values of variables queried in
  the decision tree~$T^j$ and the same variables in the common
  decision tree~$\calT$. Note that analogous to
  \cref{case:shared-1,case:contradict-1,case:alive-1} there will be
  some information pieces which will be recovered in future stages.
\end{itemize}

The inversion process of each round runs parallel to the inversion for
the standard switching lemma. We recover the information pieces used
in the single formula process and use the knowledge of the difference
to turn these into information pieces for the common decision tree.

Of the additional extra information needed (that is, the
index~$j \in [m]$ for each round, the identity of the additional
chosen centers, and the differences in values) only the index~$j$
cannot be absorbed by the constants~$A, c_1$ and~$c_2$. This extra
information causes the factor~$M^{s/\ell}$ in the bound of
\cref{lemma:multiswitch}. This completes the discussion of the proof
of \cref{lemma:multiswitch}.

\section{Conclusion}\label{sec:conclusion}

Of course our bounds are not exactly tight so there is always room for
improvement.  We could hope to get truly exponential lower bounds for
a bounded depth Frege proof, i.e., essentially bounds $2^n$ where $n$
is the number of variables.  Since any formula given by a small CNF
has a resolution proof this is the best we could hope for. As our
formulas have $O(n^2)$ variables we are off by a square. If one is to
stay with the Tseitin contradiction one would need to change the graph
and the first alternative that comes to mind is an expander graph.  We
have not really studied this question but as our current proof relies
heavily on properties of the grid, significant modifications are
probably needed.

This brings up the question for which probability distributions of
restrictions it is possible to prove a (multi) switching lemma.
Experience shows that this is possible surprisingly often.  It seems,
however, that it needs to be done on a case by case basis.  It is
probably too much to ask for a general characterization, but maybe it
could be possible to prove switching lemmas that cover several of the
known cases.

\section*{Acknowledgments}

We are indebted to Susanna F. de Rezende for suggesting the change in
the restriction that allows us to obtain better
parameters. Furthermore we are grateful to Mrinal Ghosh, Bj{\"o}rn
Martinsson and Aleksa Stankovi\'c for helpful discussions on the topic
of this paper.

Last but not least we would like to thank the anonymous referees for
the time they took to read our manuscript. The reviews we received
were incredibly detailed, contained a lot of useful suggestions, and
helped us improve the presentation of this paper considerably.

\bibliographystyle{alpha}
\bibliography{references}

\appendix
\crefalias{section}{appendix}
\section{Omitted Proofs}
\label{sec:omitted-proofs}

\subsection{\Cref{sec:prelim}}

\evenok*

\begin{proof}
  Let us first argue that the formula~$\tseitin(G_n, \alpha)$ is
  satisfiable if~$\sum_v \alpha_v$ is even. Towards contradiction
  suppose this is not the case and consider an assignment~$\beta$ that
  satisfies the maximum number of linear constraints. Note that the
  number of such violated linear constraints is even since the sum of
  the constraints
  \begin{align}
    \sum_v \sum_{e \ni v} x_e = 2\sum_e x_e
  \end{align}
  is always even. Hence if for a node~$v$ it holds
  that~$\sum_{e \ni v} \beta_e \neq \alpha_v$, then there is another
  node~$u \neq v$ such that~$\sum_{e \ni u} \beta_e \neq
  \alpha_u$. Suppose there are two such nodes~$u$ and~$v$.

  Since the graph~$G_n$ is connected we may consider a path~$P$
  connecting~$u$ to~$v$. By negating the assignment~$\beta$ along the
  path~$P$ we obtain the assignment~$\beta'$ that satisfies the
  constraints at~$u$ and~$v$. Furthermore, the parity of the edges
  incident to other nodes remains the same since we negated an even
  number of variables incident to every other node. This contradicts
  the assumption that~$\beta$ is an assignment that satisfies the
  maximum number of linear constraints.

  For the other part of the lemma we only need to recall that the
  number of satisfying assignments to a satisfiable system of linear
  equations only depends on the dimension of the system and not on the
  right hand side. This establishes the claim.
\end{proof}

\extendone*

\begin{proof}
  Denote the support of~$\alpha$ by~$U = \supp(\alpha)$ and
  let~$V \supseteq U$ be the set of nodes that consists of~$U$ and the
  endpoints of~$e$. Since~$\alpha$ is locally consistent we may
  consider an extension~$\beta \supseteq \alpha$ that satisfies all
  the constrains on the nodes in~$\closure(U)$.

  Denote by~$\gamma \supseteq \beta$ an extension of~$\beta$ to all
  variables incident to the nodes in~$\closure(V)$ that satisfies the
  maximum number of constraints on the nodes
  in~$\closure(V)$. Suppose~$\gamma$ violates the linear constraint of
  some node~$v \in \closure(V)$.

  Note that~$v \in \closure(U)^c$ since~$\beta \subseteq \gamma$
  satisfies all the constraints on~$\closure(U)$. Since the
  component~$\closure(U)^c$ is connected we may consider a path~$P$
  that starts at~$v$, ends in the giant
  component~$\closure(V)^c \subseteq \closure(U)^c$, and does not pass
  through any node in~$\closure(U)$. Negate the assignment~$\gamma$
  along the edges of~$P$ to obtain~$\gamma'$. Note that the
  assignment~$\gamma'$ satisfies the constraint on~$v$,
  extends~$\beta$, and causes no new violations
  on~$\closure(V) \setminus \closure(U)$ since it negates an even
  number of variables incident to all nodes
  in~$\closure(V) \setminus \set{v}$. This is in contradiction to the
  initial assumption that~$\gamma$ satisfies the maximum number of
  constraints on the nodes in~$\closure(V)$.

  We conclude that there is an extension~$\gamma \supseteq \alpha$
  that is locally consistent and contains the variable~$x_e$ it its
  domain. The statement follows.
\end{proof}

\noproof*

\begin{proof}
  We proceed by induction over the number of derivation
  steps. Consider the formula $F$ derived on line $\nu$ of the
  proof. Towards contradiction suppose that the $t$-evaluation
  $\varphi^\nu$ of line $\nu$ does not assign $F$ to a $1$-tree. That
  is, the decision tree~$\varphi^\nu(F)$ contains a branch $\tau$ that
  ends in a $0$-leaf. Since axioms are mapped to $1$-trees
  (\cref{prop:1-tree} of \cref{def:evaluation}) the formula $F$ has to
  be derived by a rule in the Frege system as listed in
  \cref{sec:frege}. The idea is to consider each rule separately and
  show that the $0$-branch $\tau$ causes a $0$-branch in one of the
  $t$-evaluations of a line used to derive $F$. This contradicts the
  inductive hypothesis.

  The assumption that all decision trees are of depth less than $n/16$
  ensures by virtue of \cref{lem:local-consistent-branch-restricted}
  that it is always possible to find a locally consistent branch in
  any decision tree restricted by $\tau$. We do a case distinction on
  the rule used to derive $F$.
  
  \paragraph{Exculuded Middle.} We have $F = p \vee \neg p$. By
  \cref{def:evaluation}, \cref{prop:not} the decision tree
  $T_p = \phi^\nu(p)$ is equal to $T_{\neg p} = \varphi^\nu(\neg p)$
  except that the labels of the leaves are negated. Since
  $\varphi^\nu(F)$ represents $T_p \vee T_{\neg p}$ by \cref{prop:or}
  of \cref{def:evaluation} the two restricted decision trees
  $\restrict{T_p}{\tau}$ and $\restrict{T_{\neg p}}{\tau}$ are both
  $0$-trees. This cannot be since the two restricted trees are the
  same except that the labels at the leaves are negated.
  
  \paragraph{Expansion Rule.} We have $F = q \vee p$. Let
  $T_p = \varphi^\nu(p)$. By \cref{prop:or} of \cref{def:evaluation}
  the decision tree $\varphi^\nu(F)$ represents
  $\varphi^\nu(q) \vee T_p$. This implies in particular that the
  decision tree $\restrict{T_p}{\tau}$ is a $0$-tree.

  Denote by $\nu' < \nu$ the line used to derive $F$. Note that $p$ is
  the formula on line $\nu'$ and that by the inductive hypothesis the
  decision tree~$T'_p = \varphi^{\nu'}(p)$ is a $1$-tree. Hence
  $\restrict{T'_p}{\tau}$ is also a
  $1$-tree. By \cref{lem:loc-cons-b-tree} this contradicts the assumed
  functional equivalence of the $t$-evaluations
  $\varphi^{\nu}$ and $\varphi^{\nu'}$.

  \paragraph{Contraction Rule.} We have $F = p$. Consider the formula
  $p \vee p$ on line $\nu' < \nu$ used to derive $F$. Let
  $T'_p = \varphi^{\nu'}(p)$. Since $\varphi^\nu$ and $\varphi^{\nu'}$
  are functionally equivalent, it holds that $\restrict{T'_p}{\tau}$ is a
  $0$-tree. Hence $\restrict{\varphi^{\nu'}(p \vee p)}{\tau}$ is a
  $0$-tree by \cref{prop:or} of \cref{def:evaluation}. This
  contradicts the inductive hypothesis which asserts that each line
  before $\nu$ is mapped to a $1$-tree.

  \paragraph{Association Rule.} We have $F = (p \vee q) \vee
  r$. Consider the formula $F' = p \vee (q \vee r)$ on line
  $\nu' < \nu$ used to derive $F$. Since $F$ and $F'$ are isomorphic
  by \cref{def:consistency-evaluation}
  the two decision trees $\varphi^\nu(F)$ and
  $T_{F'} = \varphi^{\nu'}(F')$ are functionally equivalent. This implies
  that $\restrict{T_{F'}}{\tau}$ is a $0$-tree. This is in direct
  contradiction to the inductive hypothesis.

  \paragraph{Cut Rule.} We have $F=(q \lor r)$. Let
  $T_q = \restrict{\varphi^\nu (q)}{\tau}$ and
  $T_r = \restrict{\varphi^\nu(r)}{\tau}$. Since by \cref{prop:or} of
  \cref{def:evaluation} the decision tree $\varphi^\nu(F)$ represents
  $T_q \vee T_r$ it holds that $\restrict{T_q}{\tau}$ and
  $\restrict{T_r}{\tau}$ are both $0$-trees.

  Suppose $p \lor q$ was derived on line $\nu' < \nu$ and
  $\neg p \lor r$ was derived on line $\nu'' < \nu$. Since the
  $t$-evaluation of line $\nu$ is functionally equivalent with both
  the $t$-evaluations of lines $\nu'$ and $\nu''$ the decision trees
  $\restrict{\varphi^{\nu '}(q)}{\tau}$ and
  $\restrict{\varphi^{\nu''}(r)}{\tau}$ are $0$-trees by
  \cref{lem:loc-cons-b-tree}.

  If any branch $\tau'$ in $\restrict{\varphi^{\nu '}(p)}{\tau}$ ends
  in a leaf labeled $0$, then the decision
  tree~$\restrict{\varphi^{\nu '}(p \lor q)}{\tau\cup\tau'}$ must be a
  $0$-tree by \cref{prop:or} of \cref{def:evaluation} and using
  \cref{lem:local-consistent-branch-restricted}.

  But $\restrict{\varphi^{\nu '}(p \lor q)}{\tau\cup\tau'}$ cannot be
  a $0$-tree by the inductive assumption. Hence
  $\restrict{\varphi^{\nu '}(p)}{\tau}$ is a $1$-tree. By repeating
  the above argument on line $\nu''$ with formulas $\neg p$ and $r$ we
  obtain that also $\restrict{\varphi^{\nu ''}(\neg p)}{\tau}$ is a
  $1$-tree. This is in direct contradiction to the assumed functional
  equivalence of the $t$-evaluations on lines $\nu'$ and $\nu''$.

  This completes the case distinction. The statement follows.
\end{proof}

\subsection{\cref{sec:singleswitch-overview}}

\matchinglemma*

\begin{proof}
  Consider the graph~$H$ defined on the set of non-chosen centers with
  an edge between any two such centers if they are in adjacent
  sub-squares. We want to show that there exists a graphical
  pairing~$\pi_0$ in~$H$. In other words, we want to partition the set
  of nodes of~$H$ such that each sub-graph induced by such a partition
  is either a single edge or a star of size 4.

  Let $m = \lceil 0.26 a\rceil$. The graphical pairing~$\pi_0$ will
  have either~$m$ or~$m+1$ edges between any two adjacent
  sub-squares. Since every node in~$H$ will have odd degree in~$\pi_0$
  the parity of the number of edges leaving a fixed sub-square is
  determined. When determining whether $\pi_0$ has~$m$ or~$m+1$ edges
  in-between two adjacent sub-squares we need to take the parity of
  the number of edges leaving each sub-square into account.

  Consider the following (satisfiable) Tseitin formula. For every pair
  of adjacent sub-squares~$s_1, s_2$ we introduce a
  variable~$y_{\set{s_1,s_2}}$ and introduce the constraint that the
  four variables incident to a single sub-square have the same parity
  as the number of edges leaving it. Note that since the number of
  non-chosen centers is even ($k$ as well as the number of chosen
  centers is odd) this is indeed a satisfiable Tseitin formula by
  \cref{lemma:evenok}. Take an assignment~$\beta$ satisfying said
  formula and determine that there are~$m+\beta_{\set{s_1,s_2}}$ many
  edges in~$\pi_0$ between the adjacent sub-squares~$s_1$ and~$s_2$.

  Consider any sub-square~$s$ and let~$b$ denote the number of
  non-chosen centers in it. We determined that there
  are~$4m + \sum_{e\ni s} y_e$ edges leaving the sub-square~$s$. This
  determines the number of degree 3 centers in~$s$ to
  \begin{align}
    c = \frac{4m + \sum_{e\ni s} y_e - b}{2} \eqperiod
  \end{align}
  Since the parity of~$\sum_{e\ni s} y_e$ and~$b$ are equal~$c$ is
  integer and because~$b \in (1\pm 0.01) a - 1$ it is also positive
  and bounded by~$c \leq 0.025a + 5$.

  Choose~$c$ non-chosen centers in the sub-square~$s$ to have degree 3
  in the graphical pairing~$\pi_0$ and connect them to non-chosen
  centers in adjacent sub-squares of designated degree 1. The
  remaining non-chosen centers can be paired up in such a manner that
  the number of edges between any two sub-squares is respected. This
  establishes the lemma.
\end{proof}

\pagebreak

\section{Switching Lemma Algorithms}
\label{sec:algo-switch}

\begin{algorithm}
  \caption{A Stage of the Construction of the Extended Canonical
    Decision Tree $\calT$ at $\tau$}
  \label{alg:cdt}
  \begin{algorithmic}[1]
    \Require{the sets~$S$ and~$I$}
    \Procedure{ExtendCanonicalDecisionTree}{$\calT, \tau, \sigma, T_1,
      \ldots, T_m$}

    \If{no $1$-branch in $T_1, \ldots, T_m$ is locally consistent with $\tau$
      and $\sigma$}

    \State $\tau \gets$ label $0$
    \State \textbf{return}

    \EndIf
    
    \State $\psi \gets$ first $1$-branch in $T_1, \ldots, T_m$ locally
    consistent with $\tau$ and $\sigma$

    \State $J \gets$ possible forcing information for $\psi$
    \Comment{Exists by \cref{lemma:forced0}}

    \State $I_\pi \gets$ all information of the connected components
    in $\pi$ of centers $v \in \supp(J) \setminus \chosen$
    \Statex\Comment{By \cref{def:forcing-info},
      \cref{item:non-chosen-pi}, it holds that $I_\pi \subseteq J$}

    \State $S_J \gets$ chosen centers at distance at most $1$
    from~$\supp(J) \cap \chosen$
    
    \Statex

    \State extend $\calT$ at $\tau$ by querying all variables incident
    to $S_J$

    \ForAll{$\tau' \gets$ locally consistent extension of $\tau$ in
      $\calT$}

    \State $I_{\tau' \setminus \tau} \gets$ information pieces from
    queries along $\tau' \setminus \tau$
    
    \State
    $S(\tau', \sigma) \gets S(\tau,\sigma) \cup S_J \cup \supp(J)$
    
    \State
    $I(\tau', \sigma) \gets
    I(\tau, \sigma) \cup
    I_\pi \cup
    I_{\tau' \setminus \tau}$
    
    \If{$I(\tau', \sigma)$ traverses $\psi$}
    \State $\tau' \gets$ label 1
    \EndIf
    \EndFor
    \EndProcedure
  \end{algorithmic}
\end{algorithm}

\begin{algorithm}
  \caption{recovers the partial restriction $\rho$ from $\rho^*$ given
    some extra information $X$}
  \label{alg:reconstruct}
  \begin{algorithmic}[1]
    \Procedure{Reconstruct}{$\rho^*, T_1, \ldots, T_m, s, t, X$}
    \State $j \gets 1$
    \State $\rho^*_{1} \gets \rho^*$
    \State $S^*_0,I^{*-}_0,E \gets \emptyset$
    \Statex
    \While{$|S^*_{j-1}| \leq s/4$}
    \State $\psi \gets$ next $1$-branch in $T_1, \ldots, T_m$
    traversed by $\rho_{j}^{*}$
    \While{$\psi$ and $I^{*-}_{j-1}$ \emph{not} in conflict with $E$}
    \State $\mathrm{discover} \gets$ next bit from $X$
    \Comment{associated center to discover on $\psi$?}
    
    \If{$\mathrm{discover}$}
    \State $i \gets$ next $\log(t)$ bits from $X$
    \State $v \gets$ associated center of $i$th variable on $\psi$
    \State $\sign \gets$ next $9$ bits from $X$
    \State $E \gets E \cup \set{(v,\sign)}$

    \Else
    \Comment{We found the forceable branch of stage $j$}
    \State $\psi_j \gets \psi$
    \State $\rho^*_j, I^{*-}_{j}, S^*_j \gets$
    \Call{RecoverForcingInformation}{$E, I^{*-}_{j-1}, \psi_j,
      \rho^*_{j-1}, S^*_{j-1}, X$}
    \State $E \gets E$ with used signatures removed and new ones added
    \State $j \gets j+1$
    \State \textbf{break}
    \EndIf

    \EndWhile
    \EndWhile
    \Statex
    \State $\rho \gets \rho^*_{j}$ with the assignment flipped along
    edges in~$I^{*-}_{j-1}$
    \State \textbf{return} $\rho$
    \EndProcedure
  \end{algorithmic}
\end{algorithm}

\begin{algorithm}
  \caption{Recover the objects from a single stage given the
    forceable branch~$\psi$}
  \label{alg:recoverForcingInfo}
  \begin{algorithmic}[1]
    \Procedure{RecoverForcingInformation}{$E, I^{*-}, \psi, \rho^*, S^*, X$}
    \State $E_{\psi} \gets $ set of $(v,\sign)\in E$ where $v$
    associated center of a variable on $\psi$
    \State $J \gets \emptyset$
    \ForAll{$(v, \sign_v) \in E_{\psi}$}
    \State $c, d_1, \ldots, d_4, e_1, \ldots, e_4 \gets \sign_v$
    \Comment{split the signature into single bits}
    \For{$i = 1, \ldots, 4$}
    \If{$d_i$}
    \Comment{there is a variable on $\psi$ in direction $i$ from $v$}
    \State $R_i \gets$ sub-square adjacent to $v$ in direction $i$
    \If{$e_i$}
    \State $u_i \gets$ \Call{GetPossiblyDeadCenter}{$R_i, E, c, X$}
    \State $J \gets J \cup \bigl\{\set{v,u_i}\bigr\}$
    \Else
    \State $J \gets J \cup \set{(v, i, \bot)}$
    \EndIf
    \EndIf
    \EndFor
    \EndFor
    \ForAll{$(v, \sign_v) \in E_{\psi}$}
    \Comment{it remains to recover connected components of~$\pi$}
    \State $c, d_1, \ldots, d_4, e_1, \ldots, e_4 \gets \sign_v$
    \If{$\lnot c$}
    \Comment{$v$ is not a chosen center}
    \State $\mathrm{cc} \gets$ next $\log(20)$ bits from $X$
    \Comment{encodes what kind of component $v$ is in $\pi$}
    \Statex
    \State $C \gets$ centers from $\mathrm{cc}$ that are connected by
    an edge in $J$ from $v$
    \Statex
    \ForAll{sub-squares $R$ in which $\mathrm{cc}$ has a center}
    \Comment{recover remaining centers}
    \If{no center in $C$ from $R$}
    \State $C \gets C \cup \phantom{}$\Call{GetPossiblyDeadCenter}{$R, E, c, X$}
    \EndIf
    \EndFor \Statex
    \ForAll{edges $(c_1, c_2)$ in $\mathrm{cc}$}
    \Comment{ensure that all edges are present}
    \State $w_1,w_2 \gets $ centers in $C$ corresponding to $c_1$ and $c_2$
    \State $J \gets J \cup \bigl\{\set{w_1,w_2}\bigr\}$
    \EndFor \Statex
    \ForAll{non-edges $(c,\delta,\bot)$ in $\mathrm{cc}$}
    \Comment{ensure that all non-edges are present}
    \State $w \gets $ center in~$C$ corresponding to $c$
    \State $J \gets J \cup \set{(w,\delta,\bot)}$
    \EndFor \EndIf
    \EndFor \Statex
    \State $K \gets$ \Call{RecoverK}{$I^{*-}, J$}
    \State $\rho^* \gets \rho^*$ with assignment flipped along edges in $K$
    \State $S \gets$ \Call{RecoverExposed}{$I^{*-}, J, E, X$}
    \State $S^+ \gets$ \Call{RecoverNonExposed}{$E, S, X$}

    \State $I \gets$ read from $X$ structure of $I$ on the centers $S \cup S^+$
    \State $I^{*-} \gets I^{*-} \cup I \cup \phantom{}$information from $J$ incident to nodes in~$\supp(I^{*-})$
    \State \textbf{return} $(\rho^*, I^{*-}, S^* \cup S)$
    \EndProcedure
  \end{algorithmic}
\end{algorithm}

\begin{algorithm}
  \caption{Recover $\cl$ from $J$}
  \label{alg:K}
  \begin{algorithmic}[1]
    \Procedure{RecoverK}{$I^{*-}, J$}
    \State $K \gets J$
    \For{$u \in \supp(J)$}
    \If{$u$ has odd number of edges incident in $J$}
    \State $\delta \gets$ direction in which $u$ has no information in
    $J \cup I^{*-}$
    \State $K \gets K \cup (u, \delta, \bot)$
    \EndIf
    \EndFor
    \State \textbf{return} $K$
    \EndProcedure
  \end{algorithmic}
\end{algorithm}

\begin{algorithm}
  \caption{Recover the exposed centers at distance $\leq 1$ from the
    chosen center in~$\supp(J)$}
  \label{alg:S}
  \begin{algorithmic}[1]
    \Procedure{RecoverExposed}{$I^{*-}, J, E, X$}
    \State $S \gets \supp(J)$
    \For{$u \in \supp(J) \cap \chosen$}
    \For{$\delta$ direction}
    \State $R \gets $ sub-square in direction $\delta$ of $u$
    \If{$R$ has no chosen center in $\supp(I^{*-}\cup J)$}
    \State $S \gets S
    \cup\phantom{}$\Call{GetPossiblyDeadCenter}{$R, E, 1, X$}
    \EndIf
    \EndFor
    \EndFor
    \State \textbf{return} $S$
    \EndProcedure
  \end{algorithmic}
\end{algorithm}

\begin{algorithm}
  \caption{Recover the non-exposed centers incident to the exposed
    chosen centers in~$S$}
  \label{alg:non-exposed}
  \begin{algorithmic}[1]
    \Procedure{RecoverNonExposed}{$E, S, X$}
    \State $S^+ \gets \emptyset$
    \For{$u \in S \cap \chosen$}
    \For{$\delta$ direction}
    \State $R \gets $ sub-square in direction $\delta$ of $u$
    \If{$R$ has no chosen center in $S$}
    \State $\mathrm{recover} \gets$ bit from $X$
    \If{$\mathrm{recover}$}
    \State $S^+ \gets S^+ \cup\phantom{}$\Call{GetPossiblyDeadCenter}{$R, E, 1, X$}
    \EndIf
    \EndIf
    \EndFor
    \EndFor
    \State \textbf{return} $S^+$
    \EndProcedure
  \end{algorithmic}
\end{algorithm}

\begin{algorithm}
  \caption{Get endpoint in sub-square $R$ potentially with the help of
    signatures from $E$}
  \label{alg:get-dead-center}
  \begin{algorithmic}[1]
    \Procedure{GetPossiblyDeadCenter}{$R, E, \mathrm{chosen}, X$}
    \State $\mathrm{known} \gets$ next bit from $X$
    \Comment{is the center in $R$ already in $E$ or still alive?}
    \If{$\mathrm{known}$}
    \If{$\mathrm{chosen}$}
    \State $u \gets$ center with lowest numbered row in~$R$ that is in~$E$ or alive
    \Else
    \State $s \gets $ number of signatures in~$E$ plus alive centers in~$R$
    \State $i \gets$ next $\log(s)$ bits from $X$
    \State $u \gets$ $i$th center in~$R$ that is in~$E$ or alive
    \EndIf
    \Else
    \State $i \gets$ next $\log(\Delta)$ bits from $X$
    \State $u \gets$ $i$th center in~$R$
    \EndIf
    \State \textbf{return} $u$
    \EndProcedure
  \end{algorithmic}
\end{algorithm}

\end{document}